\newtheorem{theorem}{Theorem}
\newtheorem{corollary}[theorem]{Corollary}
\newtheorem{conjecture}{Conjecture}
\newtheorem{lemma}[theorem]{Lemma}
\newtheorem{definition}[theorem]{Definition}
\newtheorem{remark}{Remark}
\DeclareMathOperator{\Exp}{Exp}
\newcommand{\Pol}[2][]{    \ifthenelse{\equal{#1}{}}{  \ensuremath{\operatorname{Pol}(#2)}    }{    \ensuremath{\operatorname{Pol}^{(#1)}(#2)}    }          }
\newcommand{\fPolplus}[2][]{    \ifthenelse{\equal{#1}{}}{  \ensuremath{\operatorname{supp}(#2)}    }{    \ensuremath{\operatorname{supp}^{(#1)}(#2)}    }          }
\newcommand{\fPol}[2][]{    \ifthenelse{\equal{#1}{}}{  \ensuremath{\operatorname{fPol}(#2)}    }{    \ensuremath{\operatorname{fPol}^{(#1)}(#2)}    }          }
\newcommand{\Ocore}[1]{\ensuremath{\calO^{\tt core}_{#1}}}
\newcommand{\Bcore}[1]{\ensuremath{{\cal B}^{\tt core}_{#1}}}
\newcommand{\BcorePi}{\ensuremath{{\Pi}^{\perp}}}
\newcommand{\OcorePi}{\ensuremath{{\cal O}[\Pi^{\perp}]}}
\DeclareMathOperator*{\argmin}{arg\,min}
\newcommand{\eqdef}{{\stackrel{\mbox{\tiny \tt ~def~}}{=}}}
\newcommand{\vcsp}[1]{\ensuremath{\operatorname{VCSP}(#1)}}
\newcommand{\VCSP}[1]{\ensuremath{\operatorname{VCSP}(#1)}}
\newcommand{\Feas}[1]{\ensuremath{\operatorname{Feas}(#1)}}
\newcommand{\Feasplus}[1]{\ensuremath{\operatorname{Feas}(#1)}}
\newcommand{\TFeas}[1]   {\ensuremath{T_{#1}}}
\newcommand{\TLPPROBE}[1]{\ensuremath{T_{#1}^\ast}}
\DeclareMathOperator{\BLP}{BLP}
\DeclareMathOperator{\dom}{\textup{\texttt{dom}}}
\DeclareMathOperator{\supp}{supp}
\DeclareMathOperator{\size}{\textup{\texttt{size}}}
\DeclareMathOperator{\coresize}{\textup{\texttt{core-size}}}
\newcommand{\bx}{\mbox{\boldmath $x$}}
\newcommand{\bxS}{\mbox{\scriptsize\boldmath $x$}}
\newcommand{\myparagraph}[1]{\noindent{\textbf{#1}}\quad}
\newcommand{\HP}[1]{\ensuremath{\langle #1\rangle}}
\newcommand{\Qc}{\mbox{$\overline{\mathbb Q}$}}
\def\calB{{\cal B}}
\def\calC{{\cal C}}
\def\calF{{\cal F}}
\def\calG{{\cal G}}
\def\calI{{\cal I}}
\def\calJ{{\cal J}}
\def\calL{{\cal L}}
\def\calO{{\cal O}}
\def\calOplus{{\widehat\calO}}
\begin{document}

\def\myparagraph#1{\vspace{2pt}\noindent{\bf #1~~}}




\title{\Large\bf  \vspace{0pt} Testing the complexity of a valued CSP language}
\author{Vladimir Kolmogorov \\ \normalsize Institute of Science and Technology Austria \\ {\normalsize\tt vnk@ist.ac.at}}
\date{}
\maketitle

\begin{abstract}
A {\em Valued Constraint Satisfaction Problem} (VCSP)
provides a common framework that can express a wide range of discrete optimization
problems. A  VCSP instance is given by a finite set of variables, a finite domain of labels, and an
objective function to be minimized.
This function is represented as a sum of terms where each term depends
on a subset of the variables. To obtain different classes of optimization
problems, one can restrict all terms to come from a fixed set $\Gamma$ of cost functions,
called a {\em language}. 

Recent breakthrough results have established a complete complexity
classification of such classes with respect to language $\Gamma$:
if all cost functions in $\Gamma$ satisfy a certain algebraic condition
then all $\Gamma$-instances can be solved in polynomial time, otherwise the problem is NP-hard.
Unfortunately, testing this condition for a given language $\Gamma$ is known to be NP-hard.
We thus study exponential algorithms for this {\em meta-problem}.
We show that the tractability condition of a finite-valued language $\Gamma$
can be tested in $O(\sqrt[3]{3}^{\,|D|}\cdot poly(\size(\Gamma)))$ time,
where $D$ is the domain of $\Gamma$ and $poly(\cdot)$ is some fixed polynomial.
We also obtain a matching lower bound under the {\em Strong Exponential Time Hypothesis} (SETH).
More precisely, we prove that for any constant $\delta<1$ there is no $O(\sqrt[3]{3}^{\,\delta|D|})$ 
algorithm, assuming that SETH holds.
\end{abstract}

\section{Introduction}
Minimizing functions of discrete variables represented as a sum of low-order terms 
is a  ubiquitous problem occurring in many real-world applications.
Understanding complexity of different classes of such optimization problems is
thus an important task. In a prominent {\em VCSP framework}
(which stands for {\em Valued Constraint Satisfaction Problem}) a class
is parameterized by a set $\Gamma$ of cost functions of the form $f:D^n\rightarrow\mathbb Q\cup\{\infty\}$ that are allowed to appear
as terms in the objective. Set $\Gamma$ is usually called a {\em language}.

Different types of languages give rise to many interesting classes.
A widely studied type
is {\em crisp} languages $\Gamma$,
in which all functions $f$ are $\{0,\infty\}$-valued.
They correspond to {\em Constraint Satisfaction Problems (CSPs)},
whose goal is to decide whether a given instance has a feasible solution.
Feder and Vardi conjectured in~\cite{feder98:monotone} that there exists a dichotomy
for CSPs, i.e.\ every crisp language $\Gamma$
is either tractable or NP-hard. This conjecture was refined by Bulatov, Krokhin and Jeavons~\cite{bulatov05:classifying},
who proposed a specific algebraic condition that should separate tractable languages from NP-hard ones.
The conjecture was verified for many special cases~\cite{Schaefer78:complexity,bulatov06:3-elementjacm,barto09:siam,barto11:lics,Bulatov11:conservative},
and was finally proved in full generality by Bulatov~\cite{Bulatov:FOCS17} and  Zhuk~\cite{Zhuk:FOCS17}.

At the opposite end of the VCSP spectrum are the finite-valued CSPs,
in which functions do not take infinite values. In such VCSPs, the feasibility aspect
is trivial, and one has to deal only with the optimization issue. One polynomial-time
algorithm that solves tractable finite-valued CSPs is based on the so-called basic linear
programming (BLP) relaxation, and its applicability (also for the general-valued case)
was fully characterized by Kolmogorov, Thapper and \v{Z}ivn\'y~\cite{kolmogorov15:power}.
The complexity of finite-valued CSPs was completely classified by Thapper and \v{Z}ivn\'y~\cite{tz16:jacm}, where it is shown that all finite-valued CSPs
not solvable by BLP are NP-hard.

A dichotomy is also known to hold for general-valued CSPs, i.e.\ when cost functions in $\Gamma$ are allowed
to take arbitrary values in $\mathbb Q\cup\{\infty\}$.
First, Kozik and Ochremiak showed~\cite{Kozik15:algebraic} that languages that do not satisfy
a certain algebraic condition are NP-hard.
Kolmogorov, Krokhin and Rol\'{i}nek then proved~\cite{KKZ:SICOMP17}
that all other languages are tractable, assuming the (now established) dichotomy for crisp languages conjectured in~\cite{bulatov05:classifying}.

In this paper languages $\Gamma$ that satisfy the condition in~\cite{Kozik15:algebraic} are called {\em solvable}.
Since optimization problems encountered in practice often come without any guarantees,
it is natural to ask what is  the complexity of checking solvability of a given language $\Gamma$.
We envisage that an efficient algorithm for this problem could help in theoretical investigations,
and  could also facilitate designing
optimization approaches for tackling specific tasks.

Checking solvability of a given language is known as a {\em meta-problem} or a {\em meta-question} in the literature.
Note that it can be solved in polynomial time for languages on a fixed domain $D$
(since the solvability condition can be expressed by a linear program
with $O(|D|^{|D|^{m}})$ variables and polynomial number of constraints,
where $m=2$ if the language is finite-valued and $m=4$ otherwise).
This naive solution, however, becomes very inefficient if $D$ is a part of the input
(which is what we assume in this paper).

The meta-problem above  was studied by Thapper and \v{Z}ivn\'y for finite-valued languages~\cite{tz16:jacm},
and by Chen and Larose for crisp languages~\cite{ChenLarose:17}.
In both cases it was shown to be NP-complete.
We therefore focus on exponential-time algorithms. We obtain the following results for
the problem of checking solvability of a given finite-valued language $\Gamma$:
\begin{itemize}
\item An algorithm with complexity $O(\sqrt[3]{3}^{\,|D|}\cdot poly(\size(\Gamma)))$,
where $D$ is the domain of $\Gamma$ and $poly(\cdot)$ is some fixed polynomial.
\item Assuming the {\em Strong Exponential Time Hypothesis} (SETH),
we prove that for any constant $\delta<1$ the problem cannot be solved
in $O(\sqrt[3]{3}^{\,\delta|D|}\cdot poly(\size(\Gamma)))$ time.
\end{itemize}
We also present a few weaker results for general-valued languages (see Section~\ref{sec:results}).

\paragraph{Other related work} There is a vast literature devoted to exponential-time algorithms
for various problems, both on the algorithmic side and on the hardness side.
Hardness results usually assume one of the following two hypotheses~\cite{SETH0,SETH1,SETH2}.
\begin{conjecture}[Exponential Time Hypothesis (ETH)]\label{conj:ETH}
Deciding satisfiability of a $3$-CNF-SAT formula
on $n$ variables cannot be solved in $O(2^{o(n)})$ time.
\end{conjecture}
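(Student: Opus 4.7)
The plan must begin with an honest caveat: the statement above is a well-known \emph{conjecture} rather than a theorem, and no proof (conditional or unconditional) is currently known. A proof of ETH would be a sensational result, since it would immediately imply $\mathrm{P}\neq\mathrm{NP}$: a polynomial-time algorithm for $3$-SAT would run in $2^{O(\log n)}=2^{o(n)}$ time, contradicting the conjecture. So anything I write here has to be understood as a meta-level description of what the right strategy would plausibly look like, rather than something I expect to carry out.

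The natural line of attack would follow the style of existing fine-grained lower-bound frameworks. First I would reduce to a ``canonical'' hard problem, e.g.\ the evaluation of a linear-size nondeterministic circuit or the acceptance problem of a nondeterministic Turing machine running in time $2^{\Omega(n)}$. By the Sparsification Lemma of Impagliazzo, Paturi and Zane, proving a $2^{\Omega(n)}$ lower bound for $3$-SAT is essentially equivalent to proving one for $3$-SAT instances with only $O(n)$ clauses, so I would restrict attention to that sparse regime. Second, I would try to lift known unconditional lower bounds from restricted models---resolution, bounded-depth Frege, read-once branching programs, monotone and algebraic circuits---to the uniform Turing-machine setting, perhaps via a diagonalization or compression argument that trades the structural assumption for a generic time bound.

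The main obstacle, of course, is that any such argument must circumvent every known barrier to proving superpolynomial lower bounds: relativization, algebrization, and natural proofs. Indeed, an unconditional $2^{\Omega(n)}$ bound on $3$-SAT would imply $\mathrm{NEXP}\not\subseteq\mathrm{P}/\mathrm{poly}$ and much more, so a successful proof would require genuinely non-relativizing, non-algebrizing, and non-natural techniques, none of which are currently available. For this reason I do not expect to prove ETH; the proper role of this statement in the present paper is simply as a hypothesis, strictly weaker than the SETH assumption used to derive the matching lower bound announced in the abstract.
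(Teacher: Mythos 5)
You correctly recognize that ETH is a conjecture, not a theorem, and that no proof is known or expected; the paper itself states it only as a \textbf{Conjecture} (citing the fine-grained complexity literature) and never attempts to prove it, using it solely as a hypothesis for the conditional lower bounds in Theorem~\ref{th:ETH-hardness}. Your refusal to manufacture a proof, together with the observations that a proof would imply $\mathrm{P}\neq\mathrm{NP}$ and would have to bypass the relativization, algebrization, and natural-proofs barriers, is exactly the right assessment, and nothing in the paper contradicts it.
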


\begin{conjecture}[Strong Exponential Time Hypothesis (SETH)]\label{conj:SETH}
For any $\delta<1$ there exists integer~$k$ such
that deciding satisfiability of a $k$-CNF-SAT formula
on $n$ variables cannot be solved in $O(2^{\delta n})$ time.
\end{conjecture}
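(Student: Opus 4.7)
The ``statement'' in question, Conjecture~\ref{conj:SETH}, is the Strong Exponential Time Hypothesis itself. Unlike a theorem or lemma, SETH is an open conjecture adopted as a working hypothesis in fine-grained complexity, not something to be established within this paper. Any proof of SETH would in particular imply $\mathsf{P}\neq\mathsf{NP}$, since even a deterministic polynomial-time algorithm for CNF-SAT would refute it, so a genuine proof would have to deliver exponential lower bounds against arbitrary algorithms, which is well beyond what is currently within reach.

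If one were to outline a plausible line of attack, the first step would be to rule out any algorithm whose running time has exponent $\delta<1$ \emph{uniformly} in $k$. The current state of the art runs in the opposite direction: PPSZ and Sch\"oning-style algorithms achieve $k$-SAT in time $2^{(1-\Omega(1/k))n}$, so the exponent approaches $1$ only because the $\Omega(1/k)$ slack shrinks with $k$, and no better analysis is known. A hypothetical proof of SETH would have to match this degradation essentially tightly, showing that for every $\delta<1$ some clause width $k$ suffices to defeat \emph{every} algorithm, presumably through a lower bound parameterized uniformly by $k$ against a sufficiently rich model of computation.

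The main obstacle is precisely the barrier that prevents proving even the weaker ETH of Conjecture~\ref{conj:ETH}, and ultimately $\mathsf{P}\neq\mathsf{NP}$: no known technique yields superlinear, let alone exponential, lower bounds against general nonuniform algorithms, and the standard relativization and natural-proofs obstructions rule out the most direct approaches. Accordingly, within this paper SETH plays the conventional role of an \emph{assumption}: the next sections will use it to calibrate a lower bound matching the $O(\sqrt[3]{3}^{\,|D|}\cdot poly(\size(\Gamma)))$ algorithm, but no attempt is made, nor could reasonably be made, to prove Conjecture~\ref{conj:SETH} itself.
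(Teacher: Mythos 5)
You correctly identify that Conjecture~\ref{conj:SETH} is an unproven hypothesis, not a theorem of the paper: the paper merely states SETH with citations to the literature and uses it as an assumption for the lower bound in Theorem~\ref{th:SETH-hardness}, exactly as you describe. Since there is no proof in the paper and none is expected, your response is the appropriate one and matches the paper's treatment.
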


Below we discuss some results specific to CSPs. Let $(k,d)$-CSP be the class of CSP problems
on a $d$-element domain where each constraint involves at most $k$ variables. The number of variables
in an instance will be denoted as $n$. A trivial exhaustive search for a $(k,d)$-CSP instance runs in $O^\ast(d^n)$ time,
where notation $O^\ast(\cdot)$ hides factors polynomial in the size of the input.
For $(2,d)$-CSP instances the complexity can be improved to $O^\ast((d-1)^n)$~\cite{razgon:05}.
Some important subclasses of $(2,d)$-CSP can even be solved in $O^\ast(2^{O(n)})$ time.
For example,~\cite{lawler:76} and~\cite{bjoerklund:09} developed respectively $O(2.45^n)$ and $O^\ast(2^n)$ algorithms
for solving the $d$-coloring problem.
On the negative side, ETH is known to have the following implications:
\begin{itemize}
\item The $(2,d)$-CSP problem cannot be solved in $d^{o(n)}=2^{o(n\log d)}$ time~\cite{traxler:08}.
\item The {\sc Graph Homomorphism} problem cannot be solved in $2^{o\left(\frac{n\log d}{\log\log d}\right)}$ time~\cite{Fomin:ICALP15}.
(This problem can be viewed as a special case of $(2,d)$-CSP, in which a single binary relation
is applied to different pairs of variables).
\end{itemize}
Recently, exponential-time algorithms for crisp NP-hard languages have been studied using algebraic techniques~\cite{jonsson:jcss17,jonsson:mfcs17,Lagerkvist:arXiv18}.
For example,~\cite{jonsson:mfcs17} showed that the following conditions are equivalent, assuming the (now proved) algebraic CSP dichotomy conjecture:
(a) ETH fails; 
(b) there exists a finite crisp NP-hard language $\Gamma$ that can be solved in subexponential time
(i.e.\ all $\Gamma$-instances on $n$ variables can be solved in $O(2^{o(n)})$ time);
(c) all finite crisp NP-hard languages $\Gamma$ can be solved in subexponential time.

The rest of the paper is organized as follows: Section \ref{sec:background} gives a background on the VCSP framework,
and Section~\ref{sec:results} presents our results. All proofs are given in Section~\ref{sec:proofs} and Appendices~$\mbox{\ref{sec:core:proofs}-\ref{sec:core:gstar:proof}}$.

\section{Background}\label{sec:background}
We denote $\Qc=\mathbb Q\cup\{\infty\}$, where $\infty$ is the positive infinity.
A function of the form $f:D^n\rightarrow \Qc$ will be called a {\em cost function over $D$ of arity $n$}.
We will always assume that the set $D$ is finite.
The {\em effective domain} of $f$ is the set $\dom f=\{x\mid f(x)<\infty\}$.
Note that $\dom f$ can be viewed both as an $n$-ary relation over $D$
and as a function $D^n\rightarrow\{0,\infty\}$. 
We assume that $f$
is represented as a list of pairs $\{(x,f(x))\::\:x\in \dom f\}$.
Accordingly, 
we define $\size(f)=\sum\nolimits_{x\in\dom f}[n\log|D|+\size(f(x))]$,
where the size of a rational number $p/q$ (for integers $p,q$) is $\log(|p|+1)+\log |q|$.

\begin{definition}
A valued constraint satisfaction language $\Gamma$ over domain $D$ is a set of cost functions 
 $f:D^n\rightarrow \Qc$, where the arity $n$ depends on $f$ and may
be different for different functions in $\Gamma$. The domain of $\Gamma$ will be denoted as $D_\Gamma$.
For a finite $\Gamma$ we define $\size(\Gamma)=|D|+\sum_{f\in\Gamma}\size(f)$.
\end{definition}

A language $\Gamma$ is called {\em finite-valued} if all functions $f\in\Gamma$ take finite (rational) values.
It is called {\em crisp} if all functions $f\in\Gamma$ take only values in $\{0,\infty\}$.
We denote $\Feas\Gamma=\{\dom f\:|\:f\in\Gamma\}$ to be the crisp language obtained from $\Gamma$ in a natural way.
Throughout the paper, for a subset $A\subseteq D$ we use $u_A$ to denote the unary function $D\rightarrow \{0,\infty\}$
with $\argmin u_A=A$. (Domain $D$ should always be clear from the context). For a label $a\in D$ we also write $u_a=u_{\{a\}}$ for brevity.

\begin{definition}\label{def:vcsp}
An instance $\calI$ of the valued constraint
satisfaction problem (VCSP) is a function $D^V\rightarrow \Qc$ given
by
\begin{equation}
f_\calI(x)\ =\ \sum_{t\in T} f_t(x_{v(t,1)},\ldots,x_{v(t,n_t)})      \label{eq:VCSPinst}
\end{equation}
It is specified by a finite set of variables $V$, finite set of terms
$T$, cost functions $f_t : D^{n_t}\rightarrow\Qc$ of arity $n_t$ and
indices $v(t, k)\in V$ for $t\in T , k =1,\ldots, n_t$. 
A solution to $\calI$ is a labeling $x\in D^V$ with the
minimum total value.
The size of $\calI$ is defined as $\size(\calI)=|V|+|D|+\sum_{t\in T}\size(f_t)$.

The instance $\calI$ is called a $\Gamma$-instance if all terms $f_t$ belong to $\Gamma$.
\end{definition}
The set of all $\Gamma$-instances will be denoted as $\vcsp{\Gamma}$.
A finite language $\Gamma$ is called {\em tractable} if all instances $\calI\in\vcsp{\Gamma}$
can be solved in polynomial time, and it is {\em NP-hard} if the corresponding
optimization problem is NP-hard.
A long sequence of works culminating with recent breakthrough papers~\cite{Bulatov:FOCS17,Zhuk:FOCS17}
has established that every finite language $\Gamma$ is either tractable or NP-hard.

\subsection{Polymorphisms and cores}

Let $\calO_D^{(m)}$ denote the set of all operations $g:D^m\rightarrow D$ and let
 $\calO_D=\bigcup_{m\ge 1}{\calO_D^{(m)}}$. When $D$ is clear from the context, we will sometimes write simply $\calO^{(m)}$ and $\calO$.

Any language $\Gamma$ defined on $D$ can be associated
with a set of operations on $D$,
known as the polymorphisms of $\Gamma$, which allow one to combine (often in a useful way)
several feasible assignments into a new one.
\begin{definition}
\label{def:polymorphism}
An operation $g\in \calO_D^{(m)}$ is a \emph{polymorphism} of a cost function $f:D^n\rightarrow \Qc$ if,
for any $x^1,x^2,\ldots,x^m \in \dom f$,
we have that $g(x^1,x^2,\ldots,x^m)\in \dom f$ where $g$ is applied component-wise.

For any valued constraint language $\Gamma$ over a set $D$,
we denote by 
$\Pol[m]\Gamma$ 
the set of all operations on $\calO^{(m)}_D$ which are polymorphisms of every
$f \in \Gamma$.
We also let $\Pol\Gamma=\bigcup_{m\ge 1}\Pol[m]\Gamma$.
\end{definition}

Clearly, if $g$ is a polymorphism of a cost function $f$, then $g$ is also a polymorphism of $\dom f$.
For $\{0,\infty\}$-valued functions, which naturally correspond to relations, the notion
of a polymorphism defined above coincides with the standard notion of a polymorphism for relations.
Note that the projections (aka dictators), i.e. operations of the form $e_n^i(x_1,\ldots,x_n)=x_i$,  are polymorphisms of all valued constraint languages.
Polymorphisms play the key role in the algebraic approach to the CSP, but, for VCSPs, more general constructs are necessary, which we now define.

\begin{definition}
An $m$-ary \emph{fractional operation} $\omega$ on $D$ is a probability distribution on $\calO_D^{(m)}$.
The support of $\omega$ is defined as $\supp(\omega)=\{g\in \calO_D^{(m)}\mid \omega(g)>0\}$.
\end{definition}
For an operation $g\in\calO^{(m)}$ we will denote $\chi_g$ to be characteristic vector of $g$,
i.e.\ the fractional operation with $\chi_g(g)=1$ and $\chi_g(h)=0$ for $h\ne g$.

\begin{definition}\label{def:fpol}
A $m$-ary fractional operation $\omega$ on $D$ is said to be a \emph{fractional polymorphism} of a
cost function $f:D^n\rightarrow \Qc$ if, for any $x^1,x^2,\ldots,x^m \in \dom f$,
we have
\begin{equation}
\sum_{g\in\supp(\omega)}{\omega(g)f(g(x^1,\ldots,x^m))} \le \frac{1}{m}(f(x^1)+\ldots+f(x^m)).
\label{eq:wpol-dist}
\end{equation}

For a constraint language $\Gamma$, $\fPol[m]\Gamma$ will denote the set of all $m$-ary fractional operations that are fractional polymorphisms
of each function in $\Gamma$. Also, let 
$\fPol\Gamma=\bigcup_{m\ge 1}\fPol[m]\Gamma$,
$\fPolplus[m]\Gamma=\bigcup_{\omega\in\fPol[m]\Gamma} \fPolplus\omega$
and
$\fPolplus\Gamma=\bigcup_{m\ge 1}\fPolplus[m]\Gamma$.


(It is easy to check that $\fPolplus\Gamma\subseteq\Pol\Gamma$, and $\fPolplus\Gamma=\Pol\Gamma$ if $\Gamma$ is crisp).
\end{definition}

Next, we will need the notion of {\em cores}. 
\begin{definition}
Language $\Gamma$ on domain $D$ is called a {\em core} if all operations $g\in\fPolplus[1]\Gamma$ are bijections.
Subset $B\subseteq D$ is called a {\em core of $\Gamma$} if $B=g(D)$ for some operation $g\in\fPolplus[1]\Gamma$
and the language $\Gamma[B]$ is a core, where $\Gamma[B]$
is the language on domain $B$ obtained by restricting each function $f:D^n\rightarrow\Qc$
in $\Gamma$ to $B^n$.
\end{definition}
The following facts are folklore knowledge. We do not know an explicit reference (at least in the case of general-valued languages), so we prove them in Appendix~\ref{sec:core:proofs} for completeness.
\begin{lemma}\label{lemma:core-basics}
Let $B$ be a subset of $D=D_\Gamma$ such that $B=g(D)$ for some $g\in\fPolplus[1]\Gamma$. \\
(a) Set $B$ is a core of $\Gamma$ if and only if  $|B|=\coresize(\Gamma)\eqdef \min\;\{ |g(D)| \::\: g\in\fPolplus[1]\Gamma \}$. \\
(b) There exists vector $\omega\in\fPol[1]\Gamma$ such that $g(D)\subseteq B$ for all $g\in\supp(\omega)$. 
Furthermore, if $B$ is a core of $\Gamma$ then such $\omega$ can be chosen so that $g(a)=a$ for all $g\in\supp(\omega)$ and $a\in B$. \\
(c) Let $\calI$ be a $\Gamma$-instance on variables $V$. Then $\min_{x\in B^V}f_\calI(x)=\min_{x\in D^V}f_\calI(x)$.
\end{lemma}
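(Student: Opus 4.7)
The plan is to prove the parts in an order that avoids circularity: first (b) part 1, then (c), then both directions of (a), and finally (b) part 2. Throughout I use the composition $\omega_1 \star \omega_2 \in \fPol[1]\Gamma$, defined by $(\omega_1 \star \omega_2)(h) = \sum_{h = h_1 \circ h_2} \omega_1(h_1)\omega_2(h_2)$, which preserves membership in $\fPol[1]\Gamma$. The hardest step is (b) part 1, which I approach by minimizing the objective $\phi(\omega) = \sum_h \omega(h)\,|h(D)\setminus B|$ over the polytope $\fPol[1]\Gamma$. Starting with $\omega_0 \in \fPol[1]\Gamma$ containing $g$ in its support (where $g(D) = B$), I would argue that at any minimizer $\omega^*$ with $\phi(\omega^*) > 0$, the composition $\omega_0 \star \omega^*$ yields a strict decrease: the $h_0 = g$ terms contribute $0$ (because $g(h(D)) \subseteq g(D) = B$) while other terms satisfy $|h_0(h(D))\setminus B| \le |h_0(D)\setminus B|$. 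With iterations of $\omega_0$ if needed, this produces a contradiction, forcing $\phi(\omega^*) = 0$.

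For (c), given $\omega$ from (b) part 1 with $\supp\omega \subseteq H_B := \{h : h(D) \subseteq B\}$, applying the fractional polymorphism inequality termwise to the instance $f_\calI$ yields $\sum_g \omega(g) f_\calI(g(x^*)) \le f_\calI(x^*)$ for any $x^* \in D^V$. Taking $x^*$ optimal, some $g \in \supp \omega$ achieves $f_\calI(g(x^*)) \le f_\calI(x^*)$, and since $g(x^*) \in B^V$ we conclude $\min_{B^V} f_\calI \le \min_{D^V} f_\calI$; the reverse inequality is trivial.

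For (a) $(\Leftarrow)$, assume $|B| = c := \coresize(\Gamma)$ and take $g' \in \supp \omega'$ with $\omega' \in \fPol[1]\Gamma[B]$. I construct $\omega^* \in \fPol[1]\Gamma$ explicitly by splitting $\omega_0$: for each $h' \in \supp \omega'$ assign weight $\omega_0(g)\omega'(h')$ to the operation $h' \circ g : D \to B$, and for each $h_0 \in \supp \omega_0 \setminus \{g\}$ keep weight $\omega_0(h_0)$ on $h_0$. A direct check — applying the $\omega'$-inequality at $y := g(x) \in B^n \cap \dom f$, then the $\omega_0$-inequality at $x$ — confirms $\omega^* \in \fPol[1]\Gamma$. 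Since $g' \circ g \in \supp \omega^* \subseteq \fPolplus[1]\Gamma$, we get $|g'(B)| = |g' \circ g(D)| \ge c = |B|$, forcing $g'$ to be a bijection on $B$, hence $\Gamma[B]$ is a core. For $(\Rightarrow)$, assume $B$ is a core; $|B| \ge c$ is immediate. If $|B| > c$, pick $g^*$ with $|g^*(D)| = c$ and $g^* \in \supp \omega_{g^*}$. Taking $\omega_1$ from (b) part 1, the composition $\omega_1 \star \omega_0 \star \omega_{g^*}$ has support in $H_B$ (because left-composing with $\omega_1$ forces images into $B$) and contains operations $h_1 \circ g \circ g^*$ whose restriction to $B$ has image of size at most $|g^*(D)| = c < |B|$, so is not a bijection. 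Restricting this fractional polymorphism to $B$ yields a non-bijection in $\fPolplus[1]\Gamma[B]$, contradicting the core property.

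For (b) part 2, part (a) now gives $|B| = c$, so each $h \in \supp \omega_1$ has $|h(D)| \ge c$ and $h(D) \subseteq B$, forcing $h(D) = B$; moreover $h|_B : B \to B$ is a bijection since $\Gamma[B]$ is a core. The main obstacle is upgrading every $h|_B$ to be $\text{id}_B$ (rather than merely some bijection). My plan is to exploit the finite subgroup $G \le S_B$ generated by $\{h|_B : h \in \supp \omega_1\}$: iterating $\omega_1$ produces operations whose restrictions traverse $G$, and since $\pi^{|G|} = \text{id}_B$ for every $\pi \in G$, a suitable averaging or LP argument (using the group structure of $\fPolplus[1]\Gamma[B]$) should concentrate the mass on retractions, yielding the desired $\omega$ whose entire support fixes $B$ pointwise.
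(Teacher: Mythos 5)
Your overall decomposition is sensible, and parts (a) and (c) are sound: the explicit composition $\omega_1 \star \omega_2$ argument you use mirrors the paper's key observations that $h' \in \fPolplus[1]\Gamma[B]$ implies $h' \circ g \in \fPolplus[1]\Gamma$ and that $h \in \fPolplus[1]\Gamma$ implies $g \circ h$ (restricted to $B$) belongs to $\fPolplus[1]\Gamma[B]$, and your proof of (c) via applying the fractional polymorphism inequality to $f_\calI$ at an optimal $x^*$ is exactly the paper's.

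The gap is in (b), which is the load-bearing step. For (b) part 1, your minimization of $\phi(\omega) = \sum_h \omega(h)\,|h(D)\setminus B|$ does not close: the bound you derive is $\phi(\omega_0 \star \omega^*) \le \sum_{h_0 \neq g}\omega_0(h_0)\,|h_0(D)\setminus B| = \phi(\omega_0)$, which compares $\omega_0 \star \omega^*$ against $\omega_0$, not against $\omega^*$. Since $\omega^*$ already minimizes $\phi$, this gives $\phi(\omega^*) \le \phi(\omega_0)$, which is trivially true and produces no contradiction. Composing in the other order yields $\phi(\omega^* \star \omega_0) \le \phi(\omega^*)$ but only non-strictly; the equality case forces $h^*(D)\setminus B \subseteq h^*(B)$ for $h^* \in \supp\omega^*$, which is a constraint but not an absurdity (e.g.\ with $D=\{1,2,3\}$, $B=\{1,2\}$, $g$ the retraction $3\mapsto1$, and $h^*$ the transposition $(2\,3)$, equality holds with $\phi>0$). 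Similarly, for (b) part 2 your plan of ``a suitable averaging or LP argument using the group structure'' is only a sketch; while the intuition about iterating through the permutation group of $B$ is correct, it does not constitute a proof.

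The paper handles both halves of (b) by invoking a genuine technical result: the Expansion Lemma of Kolmogorov--Thapper--\v{Z}ivn\'y (Lemma~\ref{lemma:expansion:1} in the appendix, a special case of the Expansion Lemma of~\cite{kolmogorov15:power}). It states that if $\mathbb G^\ast \subseteq \mathbb G = \fPolplus[1]\Gamma$ is such that every $g' \in \mathbb G$ can be post-composed with a sequence of elements of $\mathbb G$ to land in $\mathbb G^\ast$, then some $\omega \in \fPol[1]\Gamma$ has $\supp(\omega) \subseteq \mathbb G^\ast$. For (b) part 1 one takes $\mathbb G^\ast = \{h : h(D)\subseteq B\}$ (verifying the hypothesis with a single composition by $g$); for (b) part 2 one takes $\mathbb G^\ast = \{h : h(D)=B,\ h|_B = \mathrm{id}_B\}$ and uses that $(h\circ g')$ permutes $B$, so some power lands in $\mathbb G^\ast$. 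You would need to either invoke this lemma (citing~\cite{kolmogorov15:power}) or prove it; the underlying argument is a nontrivial reachability/absorption statement about the compositional dynamics on $\mathbb G$ that is not recovered by the linear objective $\phi$ you propose.
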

For a language $\Gamma$ we denote $\Bcore\Gamma$ to be the set of subsets $B\subseteq D$ which are cores of $\Gamma$,
and $\Ocore\Gamma$ to be set of operations $g\in\fPolplus[1]\Gamma$ such that $g(D)\in \Bcore\Gamma$
(or equivalently such that $|g(D)|=\coresize(\Gamma)$). 

\subsection{Dichotomy theorem}

Several types of operations play a special role in the algebraic approach to (V)CSP.

\begin{definition}
An operation $g\in \calO_D^{(m)}$ is called
\begin{itemize}
\item \emph{idempotent} if $g(x,\ldots,x)=x$ for all $x\in D$;
\item \emph{cyclic} if $m\ge 2$ and $g(x_1,x_2,\ldots,x_m)=g(x_2,\ldots,x_m,x_1)$ for all $x_1,\ldots,x_m\in D$;
\item \emph{symmetric} if $m\ge 2$ and $g(x_1,x_2,\ldots,x_m)=g(x_{\pi(1)},x_{\pi(2)},\ldots,x_{\pi(m)})$ for all $x_1,\ldots,x_m\in D$, and any permutation $\pi$ on $[m]$;
\item \emph{Siggers} if $m=4$ and $g(r,a,r,e)=g(a,r,e,a)$ for all $a,e,r\in D$.
\end{itemize}
A fractional operation $\omega$ is said to be idempotent/cyclic/symmetric if all operations in $\supp(\omega)$ have the corresponding property.
\end{definition}
Note,  the Siggers operation is traditionally defined in the literature as an {\bf idempotent} operation $g$
satisfying $g(r,a,r,e)=g(a,r,e,a)$. Here we follow the terminology in~\cite{CSPsurvey:17} that does not require idempotency.
(In~\cite{ChenLarose:17} such an operation was called {\em quasi-Siggers}). 

We can now formulate the dichotomy theorem.
\begin{theorem}\label{th:dichotomy}
Let $\Gamma$ be a language.
If the core of $\Gamma$ admits a cyclic fractional polymorphism then $\Gamma$ is tractable~\cite{KKZ:SICOMP17,Bulatov:FOCS17,Zhuk:FOCS17}.
Otherwise $\Gamma$ is NP-hard~\cite{Kozik15:algebraic}.
\end{theorem}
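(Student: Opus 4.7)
The plan is to combine the three cited results; since the statement is a summary of established work, the ``proof'' amounts to assembling those results and checking that the algebraic hypotheses line up. First I would reduce to the core. Fix $B\in\Bcore\Gamma$; by Lemma~\ref{lemma:core-basics}(c), for every $\Gamma$-instance $\calI$ on variables $V$ we have $\min_{x\in D^V} f_\calI(x) = \min_{x\in B^V} f_\calI(x)$, so $\vcsp\Gamma$ polynomially reduces to $\vcsp{\Gamma[B]}$ (every term can be replaced by its restriction to $B$ at no cost in optimum). Moreover, $B$ being a core is, by definition, preserved under this restriction. Hence it suffices to prove the dichotomy for the core language $\Gamma[B]$ itself.

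For the hardness direction, suppose $\Gamma[B]$ admits no cyclic fractional polymorphism. Then the theorem of Kozik and Ochremiak~\cite{Kozik15:algebraic} immediately yields NP-hardness of $\vcsp{\Gamma[B]}$, hence of $\vcsp\Gamma$. No further work is needed beyond this citation.

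For the tractability direction, suppose $\Gamma[B]$ admits a cyclic fractional polymorphism $\omega$. The result of Kolmogorov, Krokhin and Rol\'{i}nek~\cite{KKZ:SICOMP17} shows that a core general-valued language is tractable whenever (i) its feasibility relaxation $\Feas{\Gamma[B]}$ is tractable as a crisp CSP and (ii) a suitable non-trivial fractional polymorphism exists. Condition~(ii) is supplied directly by $\omega$. For condition~(i), the crisp polymorphism clone of $\Feas{\Gamma[B]}$ contains $\supp(\omega)\subseteq \fPolplus{\Gamma[B]}$, and a cyclic operation in this clone on the core witnesses that the core of $\Feas{\Gamma[B]}$ is \emph{not} of ``bad'' Taylor type; the now-established CSP dichotomy of Bulatov~\cite{Bulatov:FOCS17} and Zhuk~\cite{Zhuk:FOCS17} then guarantees that $\Feas{\Gamma[B]}$ is tractable as a crisp CSP.

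The main obstacle, such as it is, is of a bookkeeping nature rather than conceptual: the three references use slightly different but equivalent algebraic characterizations (cyclic fractional polymorphism, Siggers, Taylor, the absence of a projection onto a two-element quotient, etc.), and one has to translate between them. In particular, one needs to verify that the core of $\Gamma[B]$ as a general-valued language is compatible with the core of $\Feas{\Gamma[B]}$ as a crisp language, so that a cyclic fractional polymorphism of the former does produce, on the appropriate crisp core, the precise algebraic witness (a Siggers or Taylor term) that the CSP dichotomy papers require as their tractability hypothesis. Once this translation is done, the three cited theorems slot together and the dichotomy follows.
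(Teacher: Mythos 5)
The paper does not give a proof of Theorem~\ref{th:dichotomy}; it is stated purely as a summary of the cited results of Kozik--Ochremiak (hardness) and Kolmogorov--Krokhin--Rol\'{i}nek together with Bulatov and Zhuk (tractability). Your assembly of these citations is correct and is exactly the intended reading of the statement: reduce to the core, apply \cite{Kozik15:algebraic} for the NP-hardness side (noting that hardness of $\vcsp{\Gamma[B]}$ transfers back to $\vcsp\Gamma$ since Lemma~\ref{lemma:core-basics}(c) gives a polynomial equivalence of the optima in both directions), and for tractability combine Theorem~\ref{th:KKZ:SICOMP17} with the crisp CSP dichotomy, using that every operation in $\supp(\omega)$ for a cyclic $\omega\in\fPol{\Gamma[B]}$ is a cyclic polymorphism of $\Feas{\Gamma[B]}$; the translation between the cyclic/Siggers/Taylor conditions that you flag as ``bookkeeping'' is handled in this paper by Lemma~\ref{lemma:solvability:equiv} and the appendix.
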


We will call languages $\Gamma$ satisfying the condition of Theorem~\ref{th:dichotomy} {\em solvable}.
The following equivalent characterizations of solvability are either known or 
can be easily be derived from previous work~\cite{Siggers:1,Siggers:2,kolmogorov15:power,Kozik15:algebraic}
 (see Appendix~\ref{sec:lemma:solvability:equiv:proof}):
\begin{lemma}\label{lemma:solvability:equiv}
Let $\Gamma$ be a language and $g\in\fPolplus[1]\Gamma$. The following conditions are equivalent:
\begin{enumerate}
\item[(a)] $\Gamma$ is solvable.
\item[(b)] $\Gamma$ admits a cyclic fractional polymorphism of some arity $m\ge 2$.
\item[(c)] $\fPolplus\Gamma$ contains a Siggers operation.
\item[(d)] $\Gamma\cup\{u_a\:|\:a\in B\}$ is solvable for any core $B$ of $\Gamma$. 
\item[(e)] $\Gamma[g(D_\Gamma)]$ is solvable.
\end{enumerate}
Furthermore, a finite-valued language $\Gamma$  is solvable if and only if it admits a symmetric fractional polymorphism of arity $2$.
\end{lemma}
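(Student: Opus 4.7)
The plan is to establish the equivalences (a)--(e) through a single ``lift and restrict'' lemma for fractional polymorphisms between $\Gamma$ and its core restrictions, and then invoke the fractional version of Siggers' theorem from~\cite{Siggers:1,Siggers:2,Kozik15:algebraic} to connect cyclic and Siggers operations. Concretely, fix a core $B$ of $\Gamma$ and let $\omega_1 \in \fPol[1]\Gamma$ be the fractional operation supplied by Lemma~\ref{lemma:core-basics}(b), whose support consists of retractions $h : D \to B$ fixing $B$ pointwise. I would then show: (i) given $\omega' \in \fPol[m](\Gamma[B])$, the distribution on $\calO_D^{(m)}$ assigning weight $\omega_1(h)\,\omega'(g')$ to the composition $g'\circ h^{\otimes m}$ is a fractional polymorphism of $\Gamma$; and (ii) given $\omega \in \fPol[m]\Gamma$, the distribution on $\calO_B^{(m)}$ assigning weight $\omega_1(h)\,\omega(g)$ to $(h\circ g)|_{B^m}$ is a fractional polymorphism of $\Gamma[B]$. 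Both directions preserve cyclicity, symmetry, and the Siggers identity, since pre- and post-composition with a unary operation commutes with argument permutations. This immediately yields (a)$\Leftrightarrow$(b), and combined with the fractional Siggers theorem it also gives (b)$\Leftrightarrow$(c).

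For (a)$\Leftrightarrow$(d), I would observe that $\fPolplus[1](\Gamma \cup \{u_a : a \in B\})$ coincides with the set of operations in $\fPolplus[1]\Gamma$ that fix $B$ pointwise: one direction is immediate since preserving $\dom u_a = \{a\}$ forces $h(a) = a$, and the other follows from Lemma~\ref{lemma:core-basics}(b). Hence $B$ remains a core of $\Gamma \cup \{u_a : a \in B\}$, and its restriction to $B$ equals $\Gamma[B]$ (the constants become trivially satisfied), so solvability transfers by definition. For (a)$\Leftrightarrow$(e), I would argue that if $g_B \in \Ocore\Gamma$ has image $B$, then $g\circ g_B \in \fPolplus[1]\Gamma$ has image of size at least $\coresize(\Gamma) = |B|$, forcing $g$ to be injective on $B$; consequently $g(B)$ is a core of $\Gamma[g(D_\Gamma)]$ that is isomorphic (as a valued language) to $\Gamma[B]$, so one is solvable iff the other is.

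For the finite-valued case, the equivalence with a binary symmetric fractional polymorphism follows from~\cite{kolmogorov15:power,tz16:jacm}: for finite-valued $\Gamma$, solvability (equivalently BLP tightness) is characterized by the existence of symmetric fractional polymorphisms of every arity $m\geq 2$, and such a family can be generated from a single binary symmetric fractional polymorphism by iterated composition; the converse direction is trivial since a binary symmetric operation is automatically cyclic.

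The main obstacle I anticipate is the careful bookkeeping in step~(i) of the lift/restrict lemma: one must verify that the inequality~\eqref{eq:wpol-dist} propagates through both the inner expectation (over $h$ with distribution $\omega_1$) and the outer expectation (over $g'$ with distribution $\omega'$), using in the inner step that $h(x^i) \in \dom f \cap B^n$ whenever $x^i \in \dom f$, and in the outer step that $\omega'$ is a fractional polymorphism of $f|_{B^n}\in\Gamma[B]$. A subtler point is that the cited fractional Siggers theorem is typically stated for idempotent fractional polymorphisms, so I plan to first pass to the core (made idempotent via the (d) reduction together with Lemma~\ref{lemma:core-basics}(b)), apply Siggers there, and then lift back using~(i).
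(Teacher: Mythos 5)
Your overall plan follows essentially the same route as the paper: a lift/restrict construction between $\Gamma$ and $\Gamma[B]$ that preserves cyclicity/symmetry/Siggers identities, the clone theorem for $\fPolplus\Gamma$ together with the Barto--Kozik--Siggers result to pass between cyclic and Siggers operations, and Theorem~\ref{thm:BLP}(b) for the finite-valued statement. The lift/restrict lemma you sketch in (i)--(ii) is exactly the lemma proved in the appendix.

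There is, however, a genuine gap in your argument for (a)$\Leftrightarrow$(d). You claim that the restriction of $\Gamma\cup\{u_a : a\in B\}$ to $B$ ``equals $\Gamma[B]$ (the constants become trivially satisfied)''. This is not so: the restriction equals $\Gamma[B]\cup\{u_a|_B : a\in B\}$, and the unary terms $u_a|_B$ are substantive constraints --- they force every operation in $\fPolplus[1]{}$ to fix $B$ pointwise, whereas $\Gamma[B]$ alone may admit nontrivial permutations in $\fPolplus[1]{\Gamma[B]}$. Hence $\fPolplus{\Gamma[B]\cup\{u_a|_B\}}$ is in general a \emph{proper} subset of $\fPolplus{\Gamma[B]}$, and solvability does not ``transfer by definition''. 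The hard direction (a)$\Rightarrow$(d) needs exactly the Kozik--Ochremiak theorem that the paper cites: if a core language admits a cyclic fractional polymorphism, then adding all singleton unary constraints still yields a cyclic fractional polymorphism (equivalently, the core admits an \emph{idempotent} cyclic fractional polymorphism). Your proposal is in fact internally inconsistent on this point: in the last paragraph you rely on the $u_a$'s to ``make idempotent'' the core before applying Siggers' theorem, which presupposes that these constants are not trivially satisfied. Without the cited Kozik--Ochremiak result both (a)$\Rightarrow$(d) and your idempotency reduction for the Siggers step remain unproven. (A smaller remark on (e): the claim that $\Gamma[g(B)]$ is \emph{isomorphic} to $\Gamma[B]$ as a valued language is not obvious --- injectivity of $g$ on $B$ alone does not give value-preservation; the cleaner route is to apply the lift/restrict lemma twice, once for $g_B\in\Ocore\Gamma$ and once for the given $g$, which avoids the isomorphism claim entirely.)
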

Note that checking solvability of a given language $\Gamma$ is a decidable problem.
Indeed, condition (c) can be tested 
by solving
a linear program with $|\calO^{(4)}_D|=|D|^{|D|^4}$ variables and $O(poly(\size(\Gamma)))$ constraints, where we maximize the total weight of Siggers operations 
subject to linear constraints expressing that $\omega\in\mathbb R^{\calO^{(4)}_D}$ is a fractional polymorphism of $\Gamma$ of arity $4$.

\subsection{Basic LP relaxation}\label{sec:LP}

Symmetric operations are known to be closely related to LP-based algorithms for CSP-related problems.
One algorithm in particular has been known to solve many VCSPs to optimality. This algorithm is based
on the so-called {\em basic LP relaxation}, or BLP, defined as follows.

Let $\mathbb M_n=\{\mu\ge 0\:|\:\sum_{x\in D^n}\mu(x)=1\}$ be the set of probability distributions over
labelings in $D^n$.
We also denote $\Delta=\mathbb M_1$; thus, $\Delta$ is the standard ($|D|-1$)-dimensional simplex.
The corners of $\Delta$ can be identified with elements in $D$.
For a distribution $\mu\in\mathbb M_n$ and a variable $v\in\{1,\ldots,n\}$, let
 $\mu_{[v]}\in \Delta$ be the marginal probability of distribution $\mu$ for $v$:
\begin{equation*}
\mu_{[v]}(a)\ = \sum_{x\in D^n:x_v=a} \mu(x) \qquad \forall a \in D.
\end{equation*}
Given a VCSP instance $\calI$ in the form~\eqref{eq:VCSPinst}, we define the value $\BLP(\calI)$ as follows:
\begin{eqnarray}
&& \hspace{-85pt} \BLP(\calI)\ =\ \min_{\mu,\alpha}\ \sum_{t\in T}\sum_{x\in \dom f_t}\mu_t(x)f_t(x)  \label{eq:BLP} \\
 \mbox{s.t.~~} (\mu_t)_{[k]}&=&\alpha_{v(t,k)} \hspace{20pt} \forall t\in T,k\in\{1,\ldots,n_t\} \nonumber \\
\mu_t&\in&\mathbb M_{n_t}                      \hspace{29pt} \forall t\in T \nonumber \\
\mu_t(x)&=&0                                   \hspace{43pt} \forall t\in T,x\notin \dom f_t \nonumber \\
\alpha_v&\in&\Delta                            \hspace{40pt} \forall v\in V \nonumber
\end{eqnarray}
If there are no feasible solutions then $\BLP(\calI)=\infty$.
The objective function and all constraints in this system are linear, therefore this is a linear program. Its size is polynomial in $\size(\calI)$, so
$\BLP(\calI)$ can be found in time polynomial in $\size(\calI)$.

We say that BLP \emph{solves} $\calI$ if
$\BLP(\calI)=\min_{x\in D^n}f_\calI(x)$, and BLP
solves $\VCSP\Gamma$ if it solves all instances $\calI$ of $\VCSP\Gamma$.
The following results are known.
\begin{theorem}[\cite{kolmogorov15:power}]
\label{thm:BLP}
(a) BLP solves $\VCSP\Gamma$ if and only if $\Gamma$ admits a symmetric fractional polymorphism of every arity $m\ge 2$.
(b) If $\Gamma$ is finite-valued then BLP solves $\VCSP\Gamma$ if and only if $\Gamma$ 
admits a symmetric fractional polymorphism of arity $2$
(i.e.\ if it is solvable).
\end{theorem}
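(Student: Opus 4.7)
The plan is to prove the two directions of part (a) separately, using a rounding argument for the ``if'' direction and LP duality for the ``only if'' direction, and then to derive part (b) by lifting an arity-$2$ symmetric fractional polymorphism to all higher arities for finite-valued $\Gamma$.

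For the \emph{if} direction of (a), I would start with an optimal BLP solution $(\mu, \alpha)$ of an arbitrary instance $\calI$ and round it to an integer labeling of no greater cost. After rescaling, one may assume that every $\mu_t$ and every $\alpha_v$ is rational with a common denominator $N$, so that each $\alpha_v$ represents a multiset of $N$ labels and each $\mu_t$ represents a multiset of $N$ tuples whose $k$-th column marginal equals $\alpha_{v(t,k)}$. Applying a symmetric fractional polymorphism $\omega \in \fPol[N]\Gamma$ column-wise to these multisets produces, for every variable $v$, a distribution over single labels; symmetry of $\omega$ guarantees that this output depends only on the multiset $\alpha_v$ and not on its ordering, so the labels are consistent across terms that share variables. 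Inequality (\ref{eq:wpol-dist}) applied to each term then bounds the expected cost by $\BLP(\calI)$, so some $g \in \supp(\omega)$ rounds $(\mu, \alpha)$ to an integer solution of value $\le \BLP(\calI)$.

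For the \emph{only if} direction of (a), I would fix $m \ge 2$ and construct a symmetric arity-$m$ fractional polymorphism via LP duality. For each $f \in \Gamma$ of arity $n$ and each tuple $x^1, \ldots, x^m \in \dom f$, the one-term instance consisting of $f$ alone admits the BLP solution that places mass $1/m$ on each $x^j$ and achieves objective $\frac{1}{m}\sum_j f(x^j)$; since BLP solves $\VCSP\Gamma$, there is an integer labeling $y \in \dom f$ with $f(y) \le \frac{1}{m}\sum_j f(x^j)$. I would then set up a primal LP with variables $\omega(g)$ ranging over symmetric $g \in \calO_D^{(m)}$, whose constraints encode (\ref{eq:wpol-dist}) for every $f$ and every choice of $x^1, \ldots, x^m$. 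By Farkas' lemma, infeasibility would produce a positive combination of violated inequalities which, when assembled into a $\Gamma$-instance by gluing the corresponding function-tuples, exhibits an instance whose BLP value lies strictly below its integer optimum, contradicting the hypothesis.

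For (b), by (a) it suffices to lift a symmetric $\omega \in \fPol[2]\Gamma$ to an arity-$m$ symmetric fractional polymorphism for every $m \ge 2$. For finite-valued $\Gamma$, inequality (\ref{eq:wpol-dist}) becomes a pure averaging inequality (the $\dom f$ constraints are vacuous), so composing $\omega$ along a balanced binary tree and symmetrizing by averaging over all orderings of the $m$ inputs yields the desired arity-$m$ symmetric fractional polymorphism, after which part (a) applies. The step I expect to be the main obstacle is the \emph{only if} direction of (a): because $\Gamma$ may be infinite and $|\calO_D^{(m)}|$ is doubly exponential in $|D|$, the relevant LP has infinitely many constraints and astronomically many variables, so the Farkas-duality argument must be organized carefully, for instance by passing to a finite sub-language or invoking a compactness argument, so that the dual witness against BLP-solvability corresponds to a genuine finite $\Gamma$-instance.
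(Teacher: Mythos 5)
This theorem is cited from \cite{kolmogorov15:power} and is not proved in the present paper, so there is no in-paper proof to compare against; I therefore assess your proposal on its own terms.

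Your rounding argument for the ``if'' direction of (a) is essentially correct: because every $g\in\supp(\omega)$ is symmetric, $g$ applied to the $k$-th column of the $N\times n_t$ matrix extracted from $\mu_t$ depends only on the multiset of that column, which by the BLP marginal constraints equals the multiset representing $\alpha_{v(t,k)}$; consistency across terms sharing a variable then follows, and averaging the costs over $\omega$ bounds the expected integer cost by $\BLP(\calI)$. Your Farkas plan for the ``only if'' direction is also the right idea (and is close in spirit to how this paper itself uses duality in Lemma~\ref{lemma:Farkas} and the $\calI(y)$-constructions of Sections~\ref{sec:th:find-core:general:proof}--\ref{sec:th:24:proof}): a dual certificate against the existence of an $m$-ary symmetric fractional polymorphism can be rescaled to integer multiplicities and read off as a $\Gamma$-instance on the variable set of size-$m$ multisets over $D$, whose integer solutions are exactly symmetric operations and on which BLP strictly undercuts the integer optimum.

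The genuine gap is in your argument for (b). You propose composing a binary symmetric $\omega$ along a balanced tree and then ``symmetrizing by averaging over all orderings of the $m$ inputs.'' Averaging over the $m!$ input permutations does fix the unequal coefficients $w_1,\ldots,w_m$ that arise from a tree composition (each becomes $1/m$, so inequality~\eqref{eq:wpol-dist} is restored), but it does \emph{not} make the resulting fractional operation symmetric in the sense of this paper's Definition~10: symmetry requires that every operation in the support be a symmetric operation, whereas the operations appearing in your averaged support are of the form $h^\pi(x_1,\ldots,x_m)=h(x_{\pi(1)},\ldots,x_{\pi(m)})$ for non-symmetric $h$, and permuting arguments of a non-symmetric operation does not produce a symmetric one. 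Already at $m=3$, $g_1(g_2(x_1,x_2),x_3)$ with $g_1,g_2$ binary symmetric is not symmetric, and no permuted copy of it is. This matters because the whole point of (a) is that you need the \emph{support operations} to be symmetric so that column-rounding is order-independent; a distribution with merely ``permutation-averaged coefficients'' gives you no such guarantee. Getting from a binary symmetric fractional polymorphism to symmetric fractional polymorphisms of all arities for finite-valued $\Gamma$ is a nontrivial step in \cite{kolmogorov15:power,tz16:jacm}; it uses iterative machinery in the spirit of the Expansion Lemma (stated here as Lemma~\ref{lemma:expansion:1}) rather than a direct composition-plus-symmetrization.
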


BLP relaxation also plays a key role for general-valued languages, as the following result shows.
Recall that $u_A$ for a subset  $A\subseteq D$ is the unary function $D\rightarrow \{0,\infty\}$
with $\dom u_A=A$. 
\begin{definition}
Consider instance $\calI$ with the set of variables $V$ and domain $D$. For node $v\in V$ denote
$D_v=\{a\in D\:|\:\exists x\in D^V\mbox{ s.t.\ }f_\calI(x)<\infty,x_v=a\}$.
We define $\Feasplus\calI$ and $\calI+\Feasplus\calI$ to be the instances
with variables $V$ and
 the following objective functions:
$$
f_{\Feasplus\calI}(x)=\sum_{v\in V} u_{D_v}(x_v)
\qquad\qquad
f_{\calI+\Feasplus\calI}(x)=f_\calI(x)+f_{\Feasplus\calI}(x)
$$
\end{definition}
It is easy to see that
$
f_{\calI}(x)=
f_{\calI+\Feasplus\calI}(x)
$ for any $x\in D^V$. However, the BLP relaxations of these two instances may differ.
\begin{theorem}[\cite{KKZ:SICOMP17}] \label{th:KKZ:SICOMP17}
If $\Gamma$ is solvable and $\calI$ is a $\Gamma$-instance then
 BLP solves $\calI+\Feasplus\calI$.
\end{theorem}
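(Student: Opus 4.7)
The plan is to establish the two inequalities separately. The easy direction $\BLP(\calI + \Feasplus\calI) \le \min_x f_\calI(x)$ holds because any integer minimizer of $f_\calI$ is finite at every variable and so assigns value $0$ to $f_{\Feasplus\calI}$, giving an integer BLP solution of the same objective value. The real work is the matching lower bound. I would take an optimal BLP solution $(\mu^*,\alpha^*)$ of $\calI+\Feasplus\calI$---which, thanks to the $\Feasplus$ constraints, already has $\supp(\alpha^*_v)\subseteq D_v$ for every $v\in V$---and extract an integer assignment of no larger cost, first reducing to a core of $\Gamma$ and then rounding on the core via Theorem~\ref{thm:BLP}(a).

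Step~1 is a reduction to the core. Pick $B\in\Bcore\Gamma$ and the unary $\omega_B\in\fPol[1]\Gamma$ furnished by Lemma~\ref{lemma:core-basics}(b), whose support consists of operations $g$ with $g(D)\subseteq B$ and $g|_B=\mathrm{id}_B$. I would push $(\mu^*,\alpha^*)$ through $\omega_B$, replacing $\alpha_v^*$ by $\sum_g \omega_B(g)\,g_*\alpha_v^*$ and $\mu_t^*$ similarly. The fractional polymorphism inequality~\eqref{eq:wpol-dist}, combined with the fact that every $g\in\supp(\omega_B)$ preserves each $\dom f_t$, shows that the new $(\tilde\mu,\tilde\alpha)$ is BLP-feasible of no larger objective, with $\supp(\tilde\alpha_v)\subseteq B$. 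Since any $a\in B\cap D_v$ admits a witness $y\in D^V$ with $f_\calI(y)<\infty$ and $y_v=a$ that can be mapped by $\omega_B$ to an assignment in $B^V$ still finite on $\calI$ with $v$-coordinate $a$, the set $B\cap D_v$ coincides with $D^B_v$, the analogous ``feasible marginal'' set for $\calI$ viewed as a $\Gamma[B]$-instance. Hence $(\tilde\mu,\tilde\alpha)$ is BLP-feasible for $\calI[B]+\Feasplus\calI[B]$, and by Lemma~\ref{lemma:core-basics}(c) the minimum on $B^V$ equals the minimum on $D^V$, so proving the theorem for $\Gamma[B]$ in place of $\Gamma$ suffices.

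Step~2 is BLP tightness on the core. By Lemma~\ref{lemma:solvability:equiv}(d,e) the expansion $\Gamma[B]\cup\{u_a:a\in B\}$ is solvable and, crucially, rigid: the unary constants $u_a$ force every fractional polymorphism of this expansion to fix $B$ pointwise. Moreover each $D^B_v$ is closed under $\Pol\Gamma[B]$, so any fractional polymorphism of $\Gamma[B]$ is automatically a fractional polymorphism of each $u_{D^B_v}$ (which is crisp). The goal is then to convert the single cyclic fractional polymorphism coming from solvability into \emph{symmetric} fractional polymorphisms of every arity $m\ge 2$ for the augmented language $\Gamma[B]\cup\{u_a:a\in B\}\cup\{u_{D^B_v}:v\in V\}$; an application of Theorem~\ref{thm:BLP}(a) to this augmented language then finishes the proof. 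The rigidity from the $u_a$'s is what makes the amplification possible: averaging-style iterates of a cyclic fractional polymorphism stay inside $\fPol$ of the whole augmented language and can be forced to converge to a symmetric one of prescribed arity.

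The main obstacle is exactly this amplification. In general, cyclic-of-some-arity does not imply symmetric-of-every-arity---indeed Theorem~\ref{thm:BLP}(a) says BLP does not solve $\VCSP\Gamma$ for every solvable $\Gamma$---so one cannot hope to prove the theorem without really using both (i) the idempotency/rigidity gained by restricting to a core, and (ii) the specific shape of the unaries $u_{D^B_v}$, which are preserved by every polymorphism of $\Gamma[B]$. Steps~1 and the closure property $D^B_v=B\cap D_v$ are comparatively soft; the technical heart of the proof will be the symmetrisation, for which I would adapt the duality arguments of~\cite{kolmogorov15:power} to the rigid core so that only the subsets $D^B_v$ actually appearing in $\Feasplus\calI$ need be accommodated.
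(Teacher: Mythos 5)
The paper does not supply a proof of this theorem; it is imported verbatim from~\cite{KKZ:SICOMP17}, so your attempt must be judged on its own merits rather than against an internal argument.

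Step~1, pushing an optimal BLP solution through a unary $\omega_B\in\fPol[1]\Gamma$ that retracts onto a core $B$, is sound: polymorphism-preservation of each $\dom f_t$ keeps the pushed-forward $\tilde\mu_t$ feasible, inequality~\eqref{eq:wpol-dist} keeps the objective from increasing, marginal consistency is preserved by the push-forward, and the retraction property $g|_B=\mathrm{id}_B$ gives $D^B_v=B\cap D_v$. This is indeed the opening move in~\cite{KKZ:SICOMP17}. Step~2, however, has a genuine gap. You reduce the theorem to the claim that the augmented language $\Gamma'=\Gamma[B]\cup\{u_a:a\in B\}\cup\{u_{D^B_v}:v\in V\}$ admits symmetric fractional polymorphisms of \emph{every} arity $m\ge 2$, so that Theorem~\ref{thm:BLP}(a) can be applied. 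That claim is false. The crisp unaries $u_{D^B_v}$ are preserved by every $g\in\Pol{\Gamma[B]}$, so adjoining them does not change $\fPol{}$ at all, and the claim collapses to ``every solvable rigid core language admits symmetric fractional polymorphisms of every arity.'' Take $\Gamma[B]$ to be linear equations over $\mathbb Z_3$ together with all the constants $u_a$: this is a solvable rigid core (it has the cyclic binary polymorphism $2x+2y$), but its polymorphism clone consists exactly of the idempotent affine operations, and a symmetric affine operation of arity $3$ would need coefficient $3^{-1}\bmod 3$, which does not exist. So this $\Gamma'$ has no symmetric polymorphism of arity $3$, hence certainly not symmetric fractional polymorphisms of every arity, and Theorem~\ref{thm:BLP}(a) tells you BLP does \emph{not} solve $\VCSP{\Gamma'}$. (The theorem being proved still holds there, but for the trivial reason that $\Gamma$ is crisp, so $D_v\ne\varnothing$ already certifies feasibility.) No amount of ``averaging-style iteration'' of a cyclic fractional operation can manufacture symmetric operations of arities that do not exist. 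What~\cite{KKZ:SICOMP17} actually establishes is tightness of BLP \emph{only} on instances of the special form $\calI+\Feasplus\calI$, not on the whole class $\VCSP{\Gamma'}$; the fractional operations they construct need only act symmetrically on the tuples actually supported by the instance, exploiting the structure of the sets $D_v$. Your final sentence gestures in that direction, but the concrete route you wrote down passes through a false language-level statement and therefore does not close.
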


If $\Gamma$ is solvable and we know a core $B$ of $\Gamma$, then an optimal solution for every  $\Gamma$-instance can be found
by using the standard self-reducibility method, in which we repeatedly add unary terms of the form $u_a(x_v)$
to the instance for different $v\in V$ and $a\in B$ and check whether this changes the optimal value of the BLP
relaxation. A formal description of the method is given below.
(Notations  $\calI[B]$ and $\calI+u_a(x_v)$ should be self-explanatory;
in particular, the former is the instance obtained from $\calI$ by restricting
each term to domain $B$).

\begin{algorithm}[H]
\DontPrintSemicolon
compute value $LP^\ast\!=\!BLP(\calI\!+\!\Feasplus\calI)$, then update $\calI\leftarrow\calI[B]$ (or return $\varnothing$ if $LP^\ast=\infty$)\;
\For{each variable $v\in V$ in some order}
{
	\For{each label $a\in B$ in some order}
	{
		let $\calI'=\calI+u_a(x_v)$, and compute $LP'=BLP(\calI'+\Feas{\calI'})$ \;
		if $LP'\!=\!LP^\ast$  then update $\calI\!\leftarrow\!\calI'$ and go to line 2 (i.e.\ proceed with the next variable $v$)\!\!\!\!\!\! 
	}
	return {\tt FAIL}
}
return labeling $x^\ast\in B^V$ where $x^\ast_v$ equals the label $a$ for which term $u_a(x_v)$ has been added
\caption{{\tt LP-Probe}$(\calI,B)$. Input: instance $\calI$ with variables $V$ and domain $D$, set~$B\subseteq D$\:
Output: either a labeling $x^\ast\in\arg\min_{x\in D^V}f_\calI(x)$ with $x^\ast\in B^V$ or a flag in $\{\varnothing,{\tt FAIL}\}$\;
\label{alg:LPProbe}}
\end{algorithm}

\begin{lemma}\label{lemma:LPProbe}
(a) If {\tt LP-Probe}$(\calI,B)$ returns a labeling $x^\ast$ then 
$x^\ast\in\arg\min_{x\in D^V}f_\calI(x)$. \\
(b) If {\tt LP-Probe}$(\calI,B)$ returns $\varnothing$ then instance $\calI$ is infeasible. \\
(c) Suppose that $\calI$ is a $\Gamma$-instance where $\Gamma$ is solvable and $B\in\Bcore\Gamma$.
Then {\tt LP-Probe}$(\calI,B)\!\ne\!\mbox{\tt FAIL}$.
\end{lemma}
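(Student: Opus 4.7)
Parts (a) and (b) follow from a direct BLP calculation that does not use any assumption on $\Gamma$ or $B$; write $\calI_0$ for the original input and $\calI$ for the instance at various stages during the algorithm. For (a), if the algorithm returns $x^\ast$, then the final $\calI$ contains the singleton unary term $u_{x^\ast_v}(x_v)$ for every $v\in V$, and the equality test inside the inner loop guarantees $\BLP(\calI+\Feas{\calI})=LP^\ast<\infty$. In any optimal BLP solution these unary terms pin each $\alpha_v$ to the point mass at $x^\ast_v$, and the consistency constraints $(\mu_t)_{[k]}=\alpha_{v(t,k)}$ then concentrate every $\mu_t$ on the single tuple $(x^\ast_{v(t,1)},\ldots,x^\ast_{v(t,n_t)})$. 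Since $\mu_t$ must sum to one and vanish outside $\dom f_t$, this tuple lies in $\dom f_t$, so $x^\ast$ is feasible for $\calI_0$ and summing contributions yields $LP^\ast=f_{\calI_0}(x^\ast)$. Combined with $LP^\ast\le\min_x f_{\calI_0}(x)\le f_{\calI_0}(x^\ast)$, equality holds throughout, which proves (a). For (b), $LP^\ast=\infty$ means the BLP is infeasible; since $f_{\calI_0+\Feas{\calI_0}}$ agrees with $f_{\calI_0}$ wherever the latter is finite and BLP is a relaxation, $\calI_0$ has no feasible assignment.

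For (c), assume $\Gamma$ is solvable and $B\in\Bcore\Gamma$. The plan is to introduce the auxiliary language $\Lambda=\Gamma[B]\cup\{u_a\::\:a\in B\}$ and to prove by induction on the outer iteration counter that the current instance $\calI$ is always a $\Lambda$-instance with $\min_x f_\calI(x)=LP^\ast$. The key preliminary is solvability of $\Lambda$: since $B$ is a core of $\Gamma$, the restriction $\Gamma[B]$ is itself a core, and by Lemma~\ref{lemma:solvability:equiv}(e) it is solvable; applying Lemma~\ref{lemma:solvability:equiv}(d) to the core language $\Gamma[B]$ (whose own core is $B$) then yields solvability of $\Lambda$. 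Theorem~\ref{th:KKZ:SICOMP17} therefore guarantees that BLP is exact on every instance of the form $\calJ+\Feas{\calJ}$ with $\calJ$ a $\Lambda$-instance. The base of the induction is $\calI=\calI_0[B]$, which is a $\Lambda$-instance satisfying $\min_x f_{\calI_0[B]}(x)=\min_{x\in B^V}f_{\calI_0}(x)=\min_x f_{\calI_0}(x)=LP^\ast$ by Lemma~\ref{lemma:core-basics}(c) combined with Theorem~\ref{th:KKZ:SICOMP17} applied to $\Gamma$.

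For the inductive step, pick an optimal $\bar x\in B^V$ with $f_\calI(\bar x)=LP^\ast$ and set $a=\bar x_v$. Then $\calI'=\calI+u_a(x_v)$ is still a $\Lambda$-instance in which $\bar x$ remains feasible with $f_{\calI'}(\bar x)=LP^\ast$, while BLP monotonicity (adding a unary term can only raise the optimum) gives $\min_x f_{\calI'}(x)\ge\min_x f_\calI(x)=LP^\ast$. Hence $\min_x f_{\calI'}(x)=LP^\ast$, and BLP exactness on $\Lambda$-instances upgrades this to $\BLP(\calI'+\Feas{\calI'})=LP^\ast$; the acceptance test therefore succeeds for this particular $a$, so the algorithm does not reach the FAIL statement. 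The main obstacle is establishing solvability of $\Lambda$; once that is secured, the remainder is routine monotonicity combined with the lower-bound property of BLP.
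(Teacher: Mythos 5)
Your proof is correct and follows essentially the same route as the paper's (terse) argument: parts (a) and (b) are direct consequences of the BLP construction, and part (c) hinges on the same three ingredients the paper uses — solvability of $\Gamma[B]\cup\{u_a\:|\:a\in B\}$ via Lemma~\ref{lemma:solvability:equiv}, Theorem~\ref{th:KKZ:SICOMP17} for exactness of BLP on all instances arising during the algorithm, and Lemma~\ref{lemma:core-basics}(c) to justify the initial restriction to $B$. You have simply spelled out the self-reducibility induction that the paper leaves implicit.
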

\begin{proof}
Part (a) holds by construction,
and part (b) can be derived from the following two facts
(which hold under the preconditions of part (b)):
\begin{itemize}
\item $\min_{x\in B^V}f_\calI(x)=\min_{x\in D^V}f_\calI(x)$ by Lemma~\ref{lemma:core-basics}(c). 
\item BLP solves all instances to which it is applied during the algorithm.
Indeed, by Lemma~\ref{lemma:solvability:equiv} the language $\Gamma'=\Gamma[B]\cup\{u_a\:|\:a\in B\}$
is solvable. The initial instance is a $\Gamma$-instance, and all instances in line 4 are $\Gamma'$-instances.
The claim now follows from Theorem~\ref{th:KKZ:SICOMP17}.
\end{itemize}
\end{proof}

\subsection{Meta-questions and uniform algorithms}
In the light of the previous discussion, it is natural to ask the following
questions about a given language $\Gamma$:
(i) Is $\Gamma$ solvable?
(ii) Is $\Gamma$ a core?
(iii) What is a core of $\Gamma$?
Such questions are usually called {\em meta-questions} or {\em meta-problems} in the literature.
For finite-valued languages their computational complexity has been studied in~\cite{tz16:jacm}.
\begin{theorem}[\cite{tz16:jacm}]\label{th:finite:NPhard}
Problems (i) and (ii) for $\{0,1\}$-valued languages are NP-complete and co-NP-complete, respectively.
\end{theorem}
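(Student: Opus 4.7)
My plan is first to establish membership in NP (resp.\ co-NP) and then to prove hardness by reductions from $3$-SAT.

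For part (i), by the last clause of Lemma~\ref{lemma:solvability:equiv}, a finite-valued $\Gamma$ is solvable iff it admits a symmetric binary fractional polymorphism $\omega$. The existence of such $\omega$ is feasibility of a linear program whose variables are $\omega(g)$ with $g$ ranging over the symmetric binary operations on $D$, and whose constraints are: $\omega\ge 0$, the normalization $\sum_g \omega(g)=1$, and \eqref{eq:wpol-dist} for every $f\in\Gamma$ and every $x^1,x^2\in \dom f$. Since $\Gamma$ is finite-valued, $\dom f = D^{n_f}$ and $\size(f)\ge |D|^{n_f}$, so the total number of constraints is $\mathrm{poly}(\size(\Gamma))$. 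Any feasible LP with $N$ constraints admits a basic feasible solution supported on at most $N$ coordinates, so if a symmetric binary fractional polymorphism exists, one with polynomially many nonzero weights exists. An NP certificate is therefore a short list of pairs $(g,\omega(g))$, verifiable in polynomial time. For (ii), \emph{not} being a core means $\fPolplus[1]\Gamma$ contains a non-bijective $g$; a polynomial-sized witness is such a $g$ together with a polynomial-support unary fractional polymorphism $\omega\in\fPol[1]\Gamma$ having $g\in\supp(\omega)$, by the same basic-feasible-solution argument. Thus (i) is in NP and (ii) is in co-NP.

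For the hardness halves, I would reduce from $3$-SAT. For (i), given a formula $\phi$ I would build a $\{0,1\}$-valued language $\Gamma_\phi$ of polynomial size over a domain $D_\phi$ encoding the variables and literals of $\phi$, so that $\Gamma_\phi$ admits a symmetric binary fractional polymorphism iff $\phi$ is satisfiable. The intuition is that the cost functions of $\Gamma_\phi$ act as ``local obstructions'' that force any candidate $\omega$ to pick, for each clause, a locally satisfying pair, and then glue those choices together consistently through shared variables via majority- or projection-based cost functions. For (ii), the dual construction: design $\Gamma_\phi$ so that $\fPolplus[1]\Gamma_\phi$ contains a non-bijective operation iff $\phi$ is satisfiable; then ``$\Gamma_\phi$ is a core'' is equivalent to unsatisfiability, yielding co-NP-hardness.

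The main obstacle is the hardness construction. Membership is essentially immediate from the LP characterization, but encoding the combinatorial structure of $3$-SAT inside the very rigid algebraic structure of ``admits a symmetric binary fractional polymorphism'' (respectively ``$\fPolplus[1]$ contains a non-bijection'') is delicate: the gadgets must be tight enough that the only way the relevant polymorphism can exist is by implicitly supplying a satisfying assignment, but loose enough that such a polymorphism does exist whenever one is available. I expect this to proceed via a careful pp-style construction within the $\{0,1\}$-valued framework, possibly by first reducing from a more convenient intermediate NP-hard problem, such as deciding whether a given relation on a given domain is preserved by a specified class of operations.
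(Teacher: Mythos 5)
This theorem is cited from Thapper and \v{Z}ivn\'y~\cite{tz16:jacm} rather than proved in the present paper, so there is no in-paper proof to match against; I assess your proposal on its own terms and against the related hardness machinery the paper does construct (Theorems~\ref{th:ETH-hardness} and~\ref{th:GAKGJHAKSGJ}).

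Your membership argument is correct and essentially the standard one. The LP characterization via the last clause of Lemma~\ref{lemma:solvability:equiv}, the observation that $\dom f = D^{n_f}$ for finite-valued $f$ so the constraint count is $O(\size(\Gamma)^2)$, and the basic-feasible-solution bound on support size together give a polynomial-size, polynomial-bit-length certificate; the paper carries out exactly this vertex-counting argument in the proof of Lemma~\ref{lemma:Farkas}. For (ii) you should make one step explicit: a generic vertex of the feasibility polytope of unary fractional polymorphisms need not contain the given non-bijection $g$ in its support, so one should take the vertex maximizing $\omega(g)$, which has polynomial support and $\omega(g)>0$ whenever $g\in\fPolplus[1]\Gamma$. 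This is a small omission, easily repaired.

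The hardness halves are a genuine gap. You state the target (a $\{0,1\}$-valued gadget language $\Gamma_\phi$ whose symmetric binary fractional polymorphisms are forced to encode satisfying assignments of $\phi$) and describe an intuition (``local obstructions glued through shared variables''), but you supply no construction and no argument that the gadgets are sound; you flag this yourself as ``the main obstacle,'' and it really is the entire content of the hardness proof. The route actually used in the literature, and reused in this paper for its ETH/SETH lower bounds, is materially different: it reduces from graph $3$-coloring, not $3$-SAT, via the lifting construction $\Gamma=\Gamma^{\calL}(\calI)$ on domain $D=V\times\{1,2,3\}$ augmented with unary guard functions $u^{\calL}_{D_v}$. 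Theorem~\ref{th:GAKGJHAKSGJ} shows that $\Gamma$ is solvable with $\coresize(\Gamma)=|V|$ when the graph is $3$-colorable and is a non-solvable core with $\coresize(\Gamma)=3|V|$ otherwise, which gives NP-hardness of (i) and co-NP-hardness of (ii) in one stroke; the $\{0,1\}$-valued case follows from the $\{0,\infty\}$-valued one because $\fPolplus{\Gamma^{\{0,1\}}(\calI)}\subseteq\Pol{\Gamma^{\{0,\infty\}}(\calI)}$. The guards $u^{\calL}_{D_v}$ are what make the analysis tractable: they force every polymorphism to preserve each fiber $D_v$, which is exactly the structural control your proposed clause gadgets would also need but do not provide. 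I would complete the proof by following this lifting route rather than attempting bespoke SAT gadgets.
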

\begin{theorem}[\cite{tz16:jacm}]\label{th:tz16:core-algorithm}
There is a polynomial-time algorithm that, given a core finite-valued language~$\Gamma$, either
finds a binary idempotent symmetric fractional polymorphism $\omega$ of $\Gamma$
with $|\supp(\omega)|=O(poly(\size(\Gamma)))$, or asserts that none exists.
\end{theorem}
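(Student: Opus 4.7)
The plan is to formulate the search for $\omega$ as a linear feasibility program and to exploit the asymmetry between its exponentially many variables and its polynomially many constraints. Let $\calG$ denote the set of binary idempotent symmetric operations $g:D^2\to D$. For each $g\in\calG$ introduce a variable $\omega(g)\ge 0$, with the normalization $\sum_{g\in\calG}\omega(g)=1$ and, for every $f\in\Gamma$ and $x^1,x^2\in\dom f$, the inequality $\sum_g \omega(g)f(g(x^1,x^2))\le\tfrac12(f(x^1)+f(x^2))$. Since each $f\in\Gamma$ is stored by listing its effective domain, $|\dom f|\le\size(f)$, so the total number $N$ of constraints is polynomial in $\size(\Gamma)$. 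Feasibility of this LP is exactly the existence of the sought polymorphism. The sparsity bound $|\supp(\omega)|=O(\mathrm{poly}(\size(\Gamma)))$ is then automatic from LP theory: every feasible LP with $N{+}1$ constraints has a basic solution supported on at most $N{+}1$ coordinates, and a Carath\'eodory-style reduction compresses any witness to that size.

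To decide the LP despite $|\calG|$ being exponential in $|D|$, I would run the ellipsoid method on the dual, which has $N$ variables $y_0,(y_c)_c$ and one constraint per $g\in\calG$. A separation oracle must minimize $\sum_c y_c f_c(g(x^1_c,x^2_c))$ over $g\in\calG$ and report a violated cut when this minimum dips below $-y_0$. The inner minimization is itself a finite-valued VCSP $\calI_y$: its variables are $\{g(a,b):\{a,b\}\subseteq D\}$ with $g(a,a)=a$ pinned by idempotency, and its terms are the nonnegatively scaled cost functions $y_cf_c$. Because scaling by a nonnegative constant preserves every symmetric fractional polymorphism, $\calI_y$ shares the binary symmetric fractional polymorphisms of $\Gamma$, so by Theorem~\ref{thm:BLP}(b) BLP solves $\calI_y$ exactly whenever the sought polymorphism of $\Gamma$ exists. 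I would use BLP on $\calI_y$ as the oracle and extract a concrete minimizer $g\in\calG$ from its optimal marginals by standard arg-min / conditional-expectation rounding, yielding either a violated cut or a certification that the current dual iterate is feasible.

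The main obstacle in this plan is circularity: BLP is a priori only a lower bound for $\calI_y$, yet the algorithm is supposed to \emph{find} the very polymorphism whose existence guarantees BLP's tightness. I would close the loop by interpreting any integrality gap detected by the oracle (the value of the rounded $g$ sitting strictly above the BLP lower bound) as a direct infeasibility certificate of the original LP --- i.e.\ as a proof that $\Gamma$ admits no binary idempotent symmetric fractional polymorphism, the desired ``no'' answer. This keeps correctness unconditional: the ellipsoid terminates in $\mathrm{poly}(N)$ iterations, and the polynomial collection of $g$'s returned by the oracle is precisely the support from which a final restricted LP produces the small-support $\omega$ promised by the theorem.
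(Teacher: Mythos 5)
This statement is cited from~\cite{tz16:jacm} rather than proved in the paper, but your plan tracks very closely the machinery the paper develops in Section~\ref{sec:th:24:proof} for the generalization to general-valued languages (Theorem~\ref{th:24}): set $m=2$, let $\Omega^+=\Omega^-$ be the binary idempotent symmetric polymorphisms, work in the dual (the polytope $Y[\Omega^+]$ of Lemma~\ref{lemma:Farkas}), and run the ellipsoid method with a separation oracle that solves, for the current dual vector $y$, a VCSP instance over variables indexed by $D^2$ whose terms are $y(f,\bx)\,f$ plus idempotency pins $u_a(g_{(a,a)})$. The support bound $1+\sum_f\binom{|\dom f|}{2}$ comes out of the same Carath\'eodory-style vertex argument you describe, realized concretely in the last paragraph of the proof of Lemma~\ref{lemma:Farkas}. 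So the skeleton is right.

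There is, however, a genuine gap in your separation oracle, precisely at the point you flagged as circular. ``Standard arg-min rounding'' of the BLP marginals is not guaranteed to recover an optimal $g$ even when BLP is tight; consequently, the observation that the rounded $g$ sits strictly above the BLP value does \emph{not} by itself certify that $\Gamma$ admits no binary idempotent symmetric fractional polymorphism --- it may simply mean the rounding was unlucky. The paper closes this gap by replacing ad-hoc rounding with the self-reducibility procedure {\tt LP-Probe}$(\calI,B)$ (Algorithm~\ref{alg:LPProbe}): repeatedly pin $g_{(a,b)}=c$, re-solve BLP, keep the pin iff the BLP value is unchanged. Lemma~\ref{lemma:LPProbe}(a) guarantees that whenever {\tt LP-Probe} returns a labeling, it is a genuine minimizer (hence either yields a violated cut or certifies the current $y$ feasible), and Lemma~\ref{lemma:LPProbe}(c) guarantees that a {\tt FAIL} outcome is only possible when $\Gamma$ is not solvable or not a core --- which is the correct ``no'' certificate, and also the place where core-ness of $\Gamma$ is used (the idempotency pins turn the instance into a $\Gamma\cup\{u_a\}$-instance, and Lemma~\ref{lemma:solvability:equiv}(d) needs $\Gamma$ to be a core for that augmented language to remain solvable). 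Swap your rounding for {\tt LP-Probe}, replace ``integrality gap certifies infeasibility'' by ``{\tt LP-Probe} FAIL certifies not-solvable-or-not-a-core,'' and the argument goes through exactly as in the paper.
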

For crisp languages the following hardness results are known.
\begin{theorem}[\cite{hell:92}]\label{th:crisp:NPhard-core}
Deciding whether a given crisp language $\Gamma$ with a single binary relation
is a core is a co-NP-complete problem. 
(Equivalently, testing whether a given directed graph
has a non-bijective homomorphism onto itself is an NP-complete
problem).
\end{theorem}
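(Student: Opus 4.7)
The plan is to establish the equivalent statement that detecting a non-bijective endomorphism of a digraph is NP-complete; the link to the stated theorem is immediate because for a crisp language $\Gamma$ consisting of a single binary relation $R$ on domain $D$, the set $\fPolplus[1]\Gamma$ equals $\Pol[1]\Gamma$ and coincides with the set of endomorphisms of the digraph $(D,R)$, so by definition $\Gamma$ is a core exactly when all those endomorphisms are bijective. Membership of the non-core problem in NP is clear: a non-bijective endomorphism $f:D\to D$ serves as a polynomial-size certificate, and one checks in polynomial time that $(f(x),f(y))\in R$ for every $(x,y)\in R$ and that $|f(D)|<|D|$.

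For NP-hardness I would reduce from $H$-coloring for a fixed digraph $H$ whose homomorphism problem is known to be NP-complete (one can take, e.g., a digraph arising from directed $3$-coloring, or any specific NP-hard case). Given an input digraph $G_0$ for which we wish to decide whether there is a homomorphism $G_0\to H$, I construct a combined digraph
\[
G^\star \;=\; R \;\cup\; P \;\cup\; G_0 \;\cup\; E,
\]
where $P$ is a disjoint copy of $H$ playing the role of a \emph{palette}, $R$ is a polynomial-size rigid digraph whose only endomorphism is the identity and whose vertices are pairwise distinguishable, and $E$ is a set of added arcs of two kinds: (i) arcs attaching each vertex of $P$ to a private distinguishing vertex of $R$, so that whenever an endomorphism fixes $R$ pointwise it also fixes $P$ pointwise, and (ii) arcs attaching each vertex of $G_0$ to $R$ in a pattern that forces the image of $V(G_0)$ under any endomorphism fixing $R\cup P$ to lie inside $V(P)$.

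Correctness splits into two directions. If $h:G_0\to H$ is a homomorphism, then the map fixing $R\cup P$ pointwise and sending each $v\in V(G_0)$ to $h(v)\in V(P)$ is a non-bijective endomorphism of $G^\star$. Conversely, any non-bijective endomorphism $f$ of $G^\star$ must, by the rigidity of $R$ together with the forcing arcs in $E$, fix $R$ pointwise, then fix $P$ pointwise via the attachments in (i), and finally send $V(G_0)$ into $V(P)$ via the attachments in (ii); the restriction of $f$ to $V(G_0)$ then yields a homomorphism $G_0\to H$. Since $f$ must be non-bijective and the only freedom left is on $V(G_0)$, the existence of such an $f$ is equivalent to the existence of a homomorphism $G_0\to H$.

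The main obstacle is the engineering of the rigid gadget $R$ together with the forcing arcs in $E$, all of polynomial size, so that both forcing properties hold simultaneously and no unintended endomorphisms of $G^\star$ appear. Polynomial-size rigid digraphs with pairwise distinguishable vertices are classical objects (for instance oriented paths equipped with distinct prime-length decorations, or small rigid tournaments), and standard techniques show how to wire them to $P$ and to $G_0$ so that precisely the desired forcings occur; assembling these ingredients gives a polynomial-time many-one reduction from $H$-coloring to the complement of the core-testing problem, which completes the argument.
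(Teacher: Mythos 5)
This theorem is stated in the paper as a citation to Hell and Ne\v{s}et\v{r}il's 1992 result (``The core of a graph''); the paper itself gives no proof, so there is no in-paper argument to compare against. Your outline pursues the right general strategy — reduce $H$-colorability to the existence of a non-bijective endomorphism by gluing a rigid digraph $R$, a palette $P\cong H$, and the input $G_0$ — and the preliminaries (the equivalence with core-ness via $\fPolplus[1]\Gamma=\Pol[1]\Gamma=\operatorname{End}(D,R)$, and the NP certificate) are fine. But the construction as described has genuine gaps that the acknowledged ``engineering'' does not wave away.

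The central problem is an irreconcilable tension between your two kinds of forcing arcs. For the forward direction you need the map $f$ that fixes $R\cup P$ pointwise and sends each $v\in V(G_0)$ to $h(v)\in V(P)$ to be an endomorphism of $G^\star$. In particular every type (ii) arc $(v,r)$ with $r\in R$ must map to an arc $(h(v),r)$, i.e.\ $h(v)$ must already have in $G^\star$ precisely the same attachments to $R$ as $v$ has. But by your type (i) arcs, each vertex of $P$ has a \emph{private} distinguishing attachment to $R$; different vertices of $P$ have different neighbourhoods in $R$. Since $h(v)$ is not known at construction time, there is no way to give $v$ a fixed set of arcs to $R$ that is simultaneously compatible with every possible image in $P$. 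Either you make the $P$-to-$R$ attachments uniform across $P$ (then they no longer pin $P$ down pointwise — which is essential, since $H$ must have nontrivial automorphisms for $H$-coloring to be hard), or you keep them private (then the forward direction fails). Some additional idea, such as routing the forcing through the internal structure of $H$ rather than through distinguishing attachments, is required, and nothing in the proposal supplies it.

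There are two further unaddressed points in the converse direction. First, rigidity of $R$ only tells you that every endomorphism \emph{of $R$} is the identity; for an endomorphism $f$ of $G^\star$, you must first argue that $f(R)\subseteq R$, which requires $R$ to have some feature (e.g.\ a directed path longer than anything else in $G^\star$) that cannot embed elsewhere — this is not established. Second, even granting that $f$ fixes $R\cup P$ pointwise, the statement ``the only freedom left is on $V(G_0)$'' does not yield a homomorphism $G_0\to H$: a non-bijective $f$ could fold $G_0$ onto a proper subset of $G_0$ without sending any vertex into $P$, in which case $f|_{G_0}$ is a retraction of $G_0$, not a homomorphism to $H$. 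You need a mechanism that rules out such internal foldings (or converts them into an $H$-coloring), and this is again exactly the nontrivial part of the Hell--Ne\v{s}et\v{r}il construction that the proposal defers to ``standard techniques.''
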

\begin{theorem}[\cite{ChenLarose:17}]\label{th:crisp:NPhard-Siggers}
Deciding whether a given crisp language $\Gamma$ with binary relations
is solvable is an NP-complete problem. 
\end{theorem}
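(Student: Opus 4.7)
My plan has two parts: membership in NP and NP-hardness.

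For NP-membership, I would invoke Lemma~\ref{lemma:solvability:equiv}(c): $\Gamma$ is solvable if and only if $\fPolplus\Gamma$ contains a Siggers operation. Since $\Gamma$ is crisp, the remark after Definition~\ref{def:fpol} gives $\fPolplus\Gamma = \Pol\Gamma$, so I simply need a certificate of the existence of $g \in \Pol[4]\Gamma$ satisfying $g(r,a,r,e) = g(a,r,e,a)$ for all $a,r,e \in D$. Such a $g$ is specified by $|D|^4$ values in $D$, a string of length $O(|D|^4 \log|D|)$, which is polynomial in $\size(\Gamma)$ because $|D| \le \size(\Gamma)$. Verification consists of (i) checking the Siggers identity on all $|D|^3$ triples, and (ii) for each binary $R\in\Gamma$ and each choice of four tuples in $R$, checking that the componentwise application of $g$ lands in $R$; the latter costs $O(|R|^4) \le O(\size(\Gamma)^4)$ per relation. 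Both checks run in polynomial time.

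For NP-hardness I would reduce from a concrete NP-hard problem, most naturally $3$-SAT (or equivalently a $3$-coloring/NAE-SAT variant). The key observation driving the reduction is that the question ``does $\Gamma$ admit a Siggers polymorphism?'' is itself a CSP: its variables are the unknowns $g(x)$ for $x \in D^4$, its domain is $D$, and its constraints are (i) the Siggers identities $g(r,a,r,e) = g(a,r,e,a)$ and (ii) the preservation conditions $g(x^1,\ldots,x^4) \in R$ for each $R\in\Gamma$ and each $(x^1,\ldots,x^4)\in R^4$. Given a $3$-CNF formula $\phi$ on variables $x_1,\ldots,x_n$ and clauses $C_1,\ldots,C_m$, I would build a crisp binary language $\Gamma_\phi$ on a domain $D$ assembled from a ``skeleton'' part (whose role is to rigidify $g$) together with variable-gadget and clause-gadget parts. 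The skeleton is an asymmetric digraph on $D$ chosen to be a core and to admit essentially a unique extension of any tentative Siggers operation on the rest of the domain; the variable gadgets are binary relations forcing, for each $x_i$, a binary choice in the behaviour of $g$ on a corresponding cell of $D^4$; and the clause gadgets are binary relations that rule out exactly those combinations of choices that violate the clauses.

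The main obstacle, and the technical heart of the argument, is this gadget design together with its correctness proof. Completeness — producing from any satisfying assignment of $\phi$ an explicit Siggers polymorphism of $\Gamma_\phi$ — is the easier direction: one reads off the values of $g$ from the assignment and verifies preservation case-by-case. Soundness is the harder direction, because an arbitrary Siggers polymorphism has considerable freedom: one must exploit the Siggers identity $g(r,a,r,e) = g(a,r,e,a)$, together with the rigidity of the skeleton relation, to prove that on each variable gadget $g$ is forced into one of two canonical shapes, and then read a truth assignment from these shapes and show the clause gadgets rule out exactly the unsatisfying ones. A final technical point is to ensure that the construction keeps all relations binary (a constraint of the theorem), which typically requires simulating higher-arity clause constraints by chains of binary constraints through auxiliary domain elements.
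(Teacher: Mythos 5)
Your NP-membership argument is correct and complete: since $\Gamma$ is crisp we have $\fPolplus\Gamma=\Pol\Gamma$, so by Lemma~\ref{lemma:solvability:equiv}(c) a $4$-ary Siggers polymorphism is a valid polynomial-size certificate ($|D|^4\log|D|$ bits, with $|D|\le\size(\Gamma)$), and checking the Siggers identity together with preservation of every relation is clearly polynomial time.

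The hardness half, however, is not a proof; it is a plan whose central step you yourself label ``the technical heart of the argument'' and leave entirely undone. There is no concrete skeleton digraph, no variable gadget, no clause gadget, no rigidity lemma, and no soundness argument --- only the assertion that such things ought to be constructible. Two specific risks in the plan deserve flagging. First, the Siggers identity $g(r,a,r,e)=g(a,r,e,a)$ is a single equation on a $4$-ary operation and constrains $g$ only weakly, so the claim that it forces $g$ into ``one of two canonical shapes'' on each variable gadget is a strong assertion you have not begun to justify. Second, your sketch never says whether $\Gamma_\phi$ is or is not a core; if it always were, you would be establishing NP-hardness of solvability for \emph{core} crisp languages, which the paper explicitly records as an open question (Section~\ref{sec:background}). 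The argument this paper actually relies on, cited from \cite{ChenLarose:17} and partly reproduced as Theorem~\ref{th:GAKGJHAKSGJ} and Lemma~\ref{lemma:GAKSHAKSJFG}, avoids both pitfalls: it reduces from graph $3$-coloring via the lifting construction $\Gamma^{\calL}(\calI)$ on domain $V\times\{1,2,3\}$, shows that when $G$ is $3$-colorable the lift retracts to a $|V|$-element majority-closed substructure (hence solvable and not a core), and that when $G$ is not $3$-colorable the lift is a core whose CSP is NP-hard, so it admits no Siggers polymorphism by Theorem~\ref{th:dichotomy}. In particular the ``no Siggers'' direction is obtained indirectly through the dichotomy theorem rather than by analysing the shape of a hypothetical Siggers operation, which is exactly the step your plan cannot yet discharge.
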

It is still an open question whether an analogue of Theorem~\ref{th:tz16:core-algorithm}
holds for crisp languages, i.e.\ whether solvability of a given core crisp language $\Gamma$
can be tested in polynomial time. However, it is known~\cite{ChenLarose:17} that
the answer would be positive assuming the existence of a certain {\em uniform} polynomial-time algorithm for CSPs.
\begin{definition}\label{def:UniformAlg}
Let $\calF$ be a class of languages. A uniform polynomial-time algorithm
for $\calF$ is a polynomial-time algorithm that, for each input $(\Gamma, \calI)$ 
with $\Gamma\in\calF$ and $\calI\in\VCSP\Gamma$,
computes $\min_x f_\calI(x)$.
\end{definition}

\begin{theorem}[\cite{ChenLarose:17}]
Suppose that there exists a uniform polynomial-time algorithm for the class of solvable core crisp languages.
Then there exists a polynomial-time algorithm that decides whether a given core crisp language is
solvable (or equivalently admits a Siggers polymorphism).
\end{theorem}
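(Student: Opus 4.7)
The plan is to reduce the solvability question to a satisfiability query on an explicit $\VCSP\Gamma$ instance and then run the hypothesized uniform algorithm on that instance, guarding against its promise-nature by a polynomial timeout and an independent verification step.

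The starting point is the standard \emph{indicator CSP} construction. Given a core crisp language $\Gamma$ on domain $D$, I would build a $\VCSP\Gamma$ instance $\calI_\Gamma$ whose variables are indexed by the quotient of $D^4$ under the identifications $(r,a,r,e)\sim(a,r,e,a)$ for all $a,e,r\in D$, with one term per choice of $f\in\Gamma$ (of arity $n$) and 4-tuple $(t^1,t^2,t^3,t^4)$ of elements of $\dom f$, namely $f(x_{c_1},\ldots,x_{c_n})$ where $c_j=(t^1_j,t^2_j,t^3_j,t^4_j)$. Labellings $x$ with $f_{\calI_\Gamma}(x)<\infty$ correspond bijectively to operations $g\in\calO_D^{(4)}$ that are polymorphisms of every $f\in\Gamma$ and satisfy the Siggers identity; since $\fPolplus\Gamma=\Pol\Gamma$ for crisp $\Gamma$, Lemma~\ref{lemma:solvability:equiv}(c) says $\calI_\Gamma$ is feasible iff $\Gamma$ is solvable, and $\size(\calI_\Gamma)$ is polynomial in $\size(\Gamma)$.

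To decide feasibility of $\calI_\Gamma$ and recover a witness when one exists, I would extend $\Gamma$ to $\Gamma'=\Gamma\cup\{u_a\mid a\in D\}$. Every unary polymorphism of $\Gamma'$ must fix each $a\in D$ pointwise, so $\Gamma'$ is still core crisp, and by Lemma~\ref{lemma:solvability:equiv}(d) it is solvable iff $\Gamma$ is. Assuming $\Gamma$ is solvable, $\Gamma'$ lies in the class to which the uniform algorithm applies, so a standard self-reducibility loop — iterate over the $O(|D|^4)$ variables of $\calI_\Gamma$, and for each try successive $a\in D$ by adding the term $u_a(x_v)$ and re-calling the uniform algorithm — produces a satisfying assignment, and hence a Siggers polymorphism $g$, in polynomial time.

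The main obstacle is that the uniform algorithm carries no guarantee when its input language is not in its promised class: on a non-solvable $\Gamma'$ it could loop, or even return an incorrect minimum. I would close this gap by (i) capping every invocation at the fixed polynomial $p$ that bounds the algorithm's runtime on inputs in the class, and (ii) independently verifying the final candidate $g\colon D^4\to D$ against the polymorphism conditions for each $f\in\Gamma$ and against the Siggers identity, both checks being polynomial in $\size(\Gamma)$. If $\Gamma$ is solvable, every invocation stays within the bound and self-reducibility delivers a genuine Siggers polymorphism, so verification succeeds and the answer is ``solvable''. If $\Gamma$ is not solvable then no satisfying $g$ exists, so either some call exceeds the timeout, the self-reducibility loop stalls, or the candidate fails verification, and we correctly report ``not solvable''. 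This timeout-plus-verification wrapper is exactly what turns a promise algorithm into an unconditional decision procedure.
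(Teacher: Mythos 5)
Your proof is correct, and it reconstructs what is almost certainly the Chen--Larose argument; the paper cites this theorem without reproving it, so there is no internal proof to compare against. The key ingredients are all present and used correctly: the indicator instance over the quotient of $D^4$ by the Siggers identity, whose feasible labelings are exactly the Siggers polymorphisms of $\Gamma$; the observation that $\fPolplus\Gamma=\Pol\Gamma$ for crisp $\Gamma$, so Lemma~\ref{lemma:solvability:equiv}(c) makes feasibility of $\calI_\Gamma$ equivalent to solvability of $\Gamma$; passing to $\Gamma'=\Gamma\cup\{u_a\mid a\in D\}$, which is rigid (hence core) and, by Lemma~\ref{lemma:solvability:equiv}(d) applied to $B=D$, solvable precisely when $\Gamma$ is; and the timeout-plus-verification wrapper that neutralizes the promise nature of the uniform algorithm.

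A few observations worth recording. First, your device is in the same spirit as the instances $\calI(y),\calI'(y)$ built in the proof of Theorem~\ref{th:24} and the procedure {\tt LP-Probe} of Lemma~\ref{lemma:LPProbe}: there too an operation-as-labeling instance is built, one variable per element of $D^m$ (or per equivalence class thereof), and one term per function-and-tuple pair, with self-reducibility used to extract a witness. The difference is that the paper replaces ``call the uniform algorithm'' with ``solve the BLP relaxation,'' because in the VCSP setting the BLP is what is known to solve solvable instances; in the pure crisp setting of the Chen--Larose theorem no such unconditional black box is available, which is exactly why the uniform-algorithm hypothesis is needed. Second, the final Siggers-identity check in your verification step is vacuous: since the candidate is constructed as a labeling of the quotient $D^4/\!\sim$ and then lifted, it satisfies the identity by construction; only the closure (polymorphism) conditions for each $f\in\Gamma$ need to be checked. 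Third, the ``stall'' case in your analysis deserves one explicit sentence: when $\Gamma$ is unsolvable, the self-reduction may reach a variable for which no value preserves a finite BLP-reported objective; this is correctly reported as ``not solvable,'' because on a solvable $\Gamma$ every uniform-algorithm call is trustworthy and the loop never stalls. None of these points affect correctness; the argument goes through.
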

Currently it is not known whether a uniform polynomial-time algorithm for core crisp languages exists.
(Algorithms in~\cite{Bulatov:FOCS17,Zhuk:FOCS17} assume that  needed polymorphisms of the language are part of the input;
furthermore, the worst-case bound on the runtime is exponential in $|D|$).

We remark that~\cite{ChenLarose:17} considered a wider range of meta-questions for crisp languages.
In particular, they studied the complexity of deciding whether
a given $\Gamma$ admits polymorphism $g\in O^{(m)}_D$ satisfying a given
{\em strong linear Maltsev condition} specified by a set of linear identities. Examples
of such identities are $g(x,\ldots,x)\approx x$ (meaning that $g$ is idempotent),
$g(x_1,x_2,\ldots,x_m)\approx g(x_2,\ldots,x_m,x_1)$ (meaning that $g$ is cyclic),
and $g(r,a,r,e)\approx  g(a,r,e,a)$ (meaning that $g$ is Siggers). 
We refer to~\cite{ChenLarose:17} for further details.


\section{Our results}\label{sec:results}
In this section the domain of language $\Gamma$ is always denoted as $D$, and its size as $d=|D|$.

Our algorithms will construct $\Gamma$-instances $\calI$ on $n=d^m$ variables (where $m\le 4$) with 
$\size(\calI)=O(poly(\size(\Gamma)))$ for some fixed polynomial.
We denote $\TFeas{n,\Gamma}$ to be the running time of a procedure that computes $\Feasplus\calI$
for such $\calI$'s.
Also, let $\TLPPROBE{n,\Gamma}$ be the combined running times of computing 
$\Feasplus\calJ$ for instances $\calJ$ during a call to {\tt LP-Probe}$(\calI,B)$
for such $\calI$ and some subset $B$.
Note, if $\Gamma$ is finite-valued then computing $\Feasplus\calI$ is a trivial problem,
so $\TFeas{n,\Gamma}$ and $\TLPPROBE{n,\Gamma}$ would be polynomial in $n+\size(\Gamma)$.

\paragraph{Conditional cores} First, we consider the problem of computing a core $B\in\Bcore\Gamma$ of a given
language $\Gamma$. A naive solution is to solve a linear program with $|\calO^{(1)}|=d^{d}$ variables.
We will present an alternative technique that runs more efficiently (in the case of finite-valued languages)
but is allowed to output an incorrect result if $\Gamma$ is not solvable.
It will be convenient to introduce the following terminology:
language $\Gamma$ is a {\em conditional core} if either $\Gamma$ is a core or $\Gamma$ is not solvable.
Similarly, set $B$ is a {\em conditional core of $\Gamma$} if either $B\in\Bcore\Gamma$ or $\Gamma$ is not solvable.
Note, $B=\varnothing$ is a conditional core of $\Gamma$ if and only if $\Gamma$ is not solvable.

To compute a conditional core of $\Gamma$, we will use the following approach.
Consider a pair $(\Gamma,\sigma)$ where  $\sigma$ is a string of size $O(poly(\Gamma))$ that specifies set $\calB_\sigma$ of candidate cores of $\Gamma$.
Formally,  $\calB_\sigma=\{B_1,\ldots,B_N\}$ where $\varnothing\ne B_i\subseteq D$ for each $i\in[N]$.
We assume
that elements of $\calB_\sigma$ can be efficiently enumerated,
i.e.\ there exists
a polynomial-time procedure for computing $B_1$ from $\sigma$ and $B_{i+1}$ from $(\sigma,B_i)$.
If $\calB$ is a set of subsets $B\subseteq D$, we will denote
\begin{eqnarray*}
\calO[\calB]&=&\{g\in\calO^{(1)}\:|\:g(D)= B\mbox{ for some }B\in\calB\} \\
\calOplus[\calB]&=&\{g\in\calO^{(1)}\:|\:g(D)\subseteq B\mbox{ for some }B\in\calB\} 
\end{eqnarray*}
\begin{theorem}\label{th:find-core:general}
There exists an algorithm 
that for a given  input $(\Gamma,\sigma)$ does one of the following:
\begin{itemize}
\item[(a)] Produces a fractional polymorphism $\omega\in\fPol[1]\Gamma$ with
$\supp(\omega)\subseteq\calOplus[\calB_\sigma]$ and 
$|\supp(\omega)|\le 1+\sum_{f\in\Gamma}|\dom f|$.
\item[(b)] Asserts that there exists no vector $\omega\in\fPol[1]\Gamma$ with $\supp(\omega)\subseteq\calOplus[\calB_\sigma]$.
\item[(c)] Asserts that one of the following holds: (i) $\Gamma$ is not solvable; (ii) $\calB_\sigma\cap\Bcore\Gamma=\varnothing$.
\end{itemize} 
It runs in $(|\calB_\sigma|\!+\!O(poly(\size(\Gamma))))\!\cdot\! (\TLPPROBE{d,\Gamma}\!+\!O(poly(\size(\Gamma))))$
time and uses $O(poly(\size(\Gamma)))$ space.\!\!\!\!\!
\end{theorem}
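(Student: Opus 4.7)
I would formulate case (a) as the following linear feasibility problem (LP1): find $\omega\geq 0$ supported on $\calOplus[\calB_\sigma]$ with $\sum_g \omega(g)=1$ and $\sum_g \omega(g)f(g(x))\leq f(x)$ for all $f\in\Gamma$ and $x\in\dom f$. LP1 has $N=1+\sum_{f\in\Gamma}|\dom f|$ constraints, so any feasible basic solution has support of size at most $N$, matching the bound in case (a). Since the variable index set $\calOplus[\calB_\sigma]$ may be exponentially large, the algorithm must work with columns implicitly, using {\tt LP-Probe} to generate them.

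The plan is a two-phase scheme. In the \emph{enumeration phase}, for each $B\in\calB_\sigma$ I would call {\tt LP-Probe}$(\calI_\Gamma^{(B)},B)$ once, where $\calI_\Gamma^{(B)}$ is the $\Gamma$-instance on variables $y_a$ ($a\in D$) with terms $f(y_{x_1},\ldots,y_{x_n})$ for every $(f,x)$ with $x\in\dom f$, unary pins $u_a(y_a)$ for $a\in B$, and unary $u_B(y_a)$ for $a\in D\setminus B$. A successful call yields a canonical operation $g_B\in\Pol[1]\Gamma$ that fixes $B$ pointwise and maps $D$ into $B$; by Lemma~\ref{lemma:core-basics}(b) combined with Lemma~\ref{lemma:LPProbe}(c) (applied to $\Gamma\cup\{u_a:a\in B\}$, which is solvable by Lemma~\ref{lemma:solvability:equiv}(d)), such $g_B$ must exist whenever $\Gamma$ is solvable and $B\in\Bcore\Gamma$. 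This phase uses exactly $|\calB_\sigma|$ {\tt LP-Probe} calls. In the \emph{combination phase}, I would seed column generation with $\calG_0=\{g_B:{\tt LP-Probe}\text{ succeeded}\}$ and iterate: solve the current restricted LP1; if infeasible, use the dual weights $y(f,x)\geq 0$ as term multiplicities in a new weighted $\Gamma$-instance and call {\tt LP-Probe} to produce another column; repeat. Since a basic feasible solution of LP1 has at most $N$ nonzero columns, this loop runs for $O(N)$ iterations, contributing only $\operatorname{poly}(\size(\Gamma))$ more {\tt LP-Probe} calls, giving the total of $|\calB_\sigma|+\operatorname{poly}$ promised by the runtime.

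The main obstacle is correctness of the three-way output. Case (a) is immediate whenever the restricted LP1 is feasible. Case (b) is output when column generation stalls with a dual certificate ruling out any $g\in\calOplus[\calB_\sigma]$, provided every {\tt LP-Probe} in both phases has succeeded so the conclusion is unconditional. Case (c) is the delicate one: if some {\tt LP-Probe} call returns {\tt FAIL}, the contrapositive of Lemma~\ref{lemma:LPProbe}(c) says either $\Gamma$ is not solvable or the probed $B$ does not lie in $\Bcore\Gamma$. The subtle part is that a single such failure does not yet justify $\calB_\sigma\cap\Bcore\Gamma=\varnothing$; we must argue the \emph{converse}, namely that when $\Gamma$ is solvable and $\calB_\sigma\cap\Bcore\Gamma\neq\varnothing$, Lemma~\ref{lemma:core-basics}(b) yields $\omega\in\fPol[1]\Gamma$ supported inside $\calOplus[\{B\}]$ for some core $B\in\calB_\sigma$, hence every relevant {\tt LP-Probe} on such $B$ succeeds and the combination phase finds a feasible $\omega$ and outputs (a). Consequently, the algorithm's three possible outcomes---restricted LP1 feasible, infeasible with all {\tt LP-Probe}s succeeding, or some {\tt LP-Probe} failing---align exactly with (a), (b), (c), and the runtime bound is the product of the $|\calB_\sigma|+\operatorname{poly}$ {\tt LP-Probe} calls with the cost $\TLPPROBE{d,\Gamma}+\operatorname{poly}(\size(\Gamma))$ of each call plus auxiliary LP work.
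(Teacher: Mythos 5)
Your overall strategy — a Farkas/LP formulation with support bound coming from a basic feasible solution, and an iterative column-generation style loop whose pricing subroutine is implemented by {\tt LP-Probe} — is the same underlying idea as the paper (the paper phrases it as the ellipsoid method on the separation polytope $Y[\Omega^+]$, but Farkas duality makes these two views equivalent). However, there is a concrete gap in your description of the combination phase.

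The crux is: in the combination phase, \emph{which} $B$ do you pass to {\tt LP-Probe}, and what do you do when the call returns {\tt FAIL}? Your enumeration phase does not resolve this. A successful probe on the fixed pin-instance $\calI_\Gamma^{(B)}$ does \emph{not} certify $B\in\Bcore\Gamma$ (Lemma~\ref{lemma:LPProbe}(c) only gives success when $B$ is a core, not conversely), so later probes {\tt LP-Probe}$(\calI(y),B)$ on those same $B$'s can still fail for non-core $B$'s. If your combination phase commits to one such $B$ and that probe fails, you cannot conclude anything, and if it gives up and outputs (c) you are unsound (it could be that $\Gamma$ is solvable and some other $B^\ast\in\calB_\sigma\cap\Bcore\Gamma$ would have worked). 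The paper fixes this by burning $B$'s adaptively inside the separation oracle: each oracle call iterates over $\calB_\sigma$, and a given $B$ is discarded permanently upon its first {\tt FAIL}. This ensures at most $|\calB_\sigma|$ failed probes total across the whole run and only polynomially many successful ones, giving the stated $(|\calB_\sigma|+\mathrm{poly})\cdot(\TLPPROBE{d,\Gamma}+\mathrm{poly})$ bound, and it makes output (c) sound: when all of $\calB_\sigma$ has been burned, each $B$ is non-core (or $\Gamma$ is not solvable). You have the right cast of lemmas (Lemma~\ref{lemma:LPProbe}(c), Lemma~\ref{lemma:core-basics}(b), Lemma~\ref{lemma:solvability:equiv}(d)) but you need this burn-set bookkeeping, not the upfront enumeration phase, to close the argument; as written, the upfront phase is essentially wasted work and does not tell the combination phase which $B$ to trust.

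Two smaller remarks. First, your construction $\calI_\Gamma^{(B)}$ adds unary terms $u_a$ for $a\in B$ and $u_B$ for $a\in D\setminus B$, so it is an instance of $\Gamma\cup\{u_a:a\in B\}\cup\{u_B\}$, not of $\Gamma$. Lemma~\ref{lemma:LPProbe}(c) then requires this enlarged language to be solvable with core $B$; that is clear for $\Gamma\cup\{u_a:a\in B\}$ via Lemma~\ref{lemma:solvability:equiv}(d), but it is not immediate that adding $u_B$ preserves solvability (the fractional polymorphisms of $\Gamma\cup\{u_a:a\in B\}$ need not satisfy $g(B^m)\subseteq B$). These unary pins are in any case redundant: {\tt LP-Probe}$(\cdot,B)$ already restricts the search to labelings in $B^D$, so the paper simply probes the $y$-weighted instance with unrestricted variables and lets {\tt LP-Probe} enforce the range restriction. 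Second, once the pricing oracle is exact (a successful {\tt LP-Probe} returns a \emph{global} minimizer of $f_{\calI(y)}$ by Lemma~\ref{lemma:LPProbe}(a), hence the minimum reduced cost over all of $\calOplus[\calB_\sigma]\cap\Pol\Gamma$), the dual certificate that your case (b) contemplates can never arise on a successful probe: any returned $g$ satisfies $f_{\calI(y)}(g)\le f_{\calI(y)}({\mathds 1})$, so $\langle c_g,[y\;1]\rangle\le -1<0$. Hence the algorithm only ever outputs (a) or (c), exactly as in the paper; listing (b) as a reachable outcome of this particular algorithm is a misreading, though harmless for soundness since the theorem merely permits it.
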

The algorithm in the theorem above is based on the ellipsoid method~\cite{GLS88:book},
which tests feasibility of a polytope using a polynomial number of calls to the separation oracle.
In our case this oracle is implemented via one or more calls to {\tt LP-Probe}$(\calI,B)$
for appropriate $\calI$ and $B$.

One possibility would be to use Theorem~\ref{th:find-core:general} with the set $\mbox{$\calB_\sigma=\{B\subseteq D\:|\:B\ne\varnothing,B\ne D\}$}$.
If the algorithm returns result (a) then we can take operation $g\in\supp(\omega)$ and call the algorithm recursively for the language $\Gamma[g(D)]$
on a smaller domain. If we get result (b) or (c) then one can show that $\Gamma$ is a conditional core, so we can stop.
For finite-valued languages this approach would run in $O(2^{d}\cdot poly(\size(\Gamma)))$ time.
We will pursue an alternative approach with an improved complexity
 $O(\sqrt[3]3^{\,d}\cdot poly(\size(\Gamma)))$.

This approach will use partitions $\Pi=\{D_1,\ldots,D_k\}$ of domain~$D$. For such $\Pi$
we denote 
\begin{eqnarray*}
\calO_\Pi&=&\{g\in\calO^{(1)}_D\::\:g(a)=g(b)~~~~~~\,\quad\forall a,b\in A\in\Pi\} \\
\BcorePi&=&\{B\subseteq D\::\: |B\cap A|=1~~~~~~\hspace{6.5pt}\quad\forall A\in\Pi\} 
\end{eqnarray*}
We say that $\Pi$ is a {\em partition of $\Gamma$} if the set $\calO_\Pi\cap\fPolplus\Gamma$ is non-empty.
In particular, the partition  $\Pi=\{\{a\}\:|\:a\in D\}$
of $D$ into singletons is a partition of $\Gamma$, since $\fPolplus\Gamma$
contains the identity mapping  $D\rightarrow D$.
We say that $\Pi$ is a {\em maximal partition of $\Gamma$} if $\Pi$ is a partition of $\Gamma$ and no coarser partition $\Pi'\succ\Pi$ 
(i.e.\  $\Pi'$ with $\calO_{\Pi'}\subset \calO_\Pi$) has this property. Clearly, for any $\Gamma$ there exists
at least one $\Pi$ which is a maximal partition of $\Gamma$. 
By analogy with cores, we say that $\Pi$ is a {\em conditional (maximal) partition of $\Gamma$}
if either $\Pi$ is a (maximal) partition of $\Gamma$ or $\Gamma$ is not solvable.

In the results below 
$\Pi$ is always assumed to be a partition of~$D$. 
\begin{lemma}\label{lemma:Pi}
(a) If $\Pi$ is a maximal partition of  $\Gamma$ then $\Bcore\Gamma\subseteq\BcorePi$ and  $\Ocore\Gamma=\OcorePi\cap\fPolplus\Gamma$. \\
(b) If $\Pi$ is a partition of $D$ then $|\BcorePi|\le \sqrt[3]3^{\,d}$.
\end{lemma}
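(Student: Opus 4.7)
The plan for (a) exploits the closure of $\fPolplus[1]\Gamma$ under composition: if $\omega_1,\omega_2\in\fPol[1]\Gamma$, then the product distribution $\omega''(p)=\sum_{g_1\circ g_2=p}\omega_1(g_1)\omega_2(g_2)$ is again a fractional polymorphism, so $g_1\circ g_2\in\fPolplus[1]\Gamma$ whenever $g_1,g_2\in\fPolplus[1]\Gamma$. Fix $B\in\Bcore\Gamma$. By Lemma~\ref{lemma:core-basics}(b) I can choose $g^\ast\in\fPolplus[1]\Gamma$ with $g^\ast(D)=B$ and $g^\ast|_B=\mathrm{id}_B$. Let $h\in\calO_\Pi\cap\fPolplus[1]\Gamma$ be any witness that $\Pi$ is a partition of $\Gamma$, and set $\pi=g^\ast\circ h$. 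Then $\pi\in\fPolplus[1]\Gamma\cap\calO_\Pi$, $\pi(D)\subseteq g^\ast(D)=B$, and $|\pi(D)|\ge\coresize(\Gamma)=|B|$, so $\pi(D)=B$.

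The kernel partition $\Pi_\pi$ of $\pi$ is coarser than or equal to $\Pi$ (since $\pi$ is constant on $\Pi$-blocks) and is itself a partition of $\Gamma$ (witnessed by $\pi$), so the maximality of $\Pi$ gives $\Pi_\pi=\Pi$; in particular $|\Pi|=|\pi(D)|=|B|$. I then pass to an idempotent power $\pi^N$ (which exists because the transformation monoid of $D$ is finite): $\pi^N\in\calO_\Pi\cap\fPolplus[1]\Gamma$, $\pi^N(D)=B$, and $\pi^N$ acts as the identity on $B$. Any two elements $a,a'\in B$ in the same $\Pi$-block satisfy $a=\pi^N(a)=\pi^N(a')=a'$, so $|B\cap A|\le 1$ for every $A\in\Pi$, and combined with $|B|=|\Pi|$ this forces $B\in\BcorePi$. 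For the second identity, the inclusion $\Ocore\Gamma\subseteq\OcorePi\cap\fPolplus\Gamma$ is immediate from $\Bcore\Gamma\subseteq\BcorePi$. Conversely, any $g\in\fPolplus[1]\Gamma$ with $g(D)\in\BcorePi$ satisfies $|g(D)|=|\Pi|=\coresize(\Gamma)$, so $g(D)\in\Bcore\Gamma$ by Lemma~\ref{lemma:core-basics}(a).

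For (b), $|\BcorePi|=\prod_{A\in\Pi}|A|$ since a transversal is a choice of one element per block, so it suffices to prove $\prod_i n_i\le 3^{d/3}$ for positive integers $n_i$ summing to $d$. A standard shifting argument reduces to parts in $\{2,3\}$ with at most two $2$'s: absorb any part equal to $1$ into a neighbor; split any part $n_i\ge 5$ as $3+(n_i-3)$ because $3(n_i-3)>n_i$; and replace $2+2+2$ by $3+3$ because $3^2>2^3$. The remaining configurations give the bound using $2\le 3^{2/3}$ and $4\le 3^{4/3}$.

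The main obstacle is the composition/kernel-partition manipulation in (a); both the closure of $\fPolplus[1]\Gamma$ under composition and the existence of an idempotent power within this monoid are routine, but they are precisely what lets the maximality hypothesis collapse $\Pi_\pi$ down to $\Pi$ and thereby pin the size of the core to $|\Pi|$.
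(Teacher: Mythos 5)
Your proof is correct, but it takes a different route from the paper in both parts. For (a), the paper argues directly from injectivity: for a witness $h\in\calO_\Pi\cap\fPolplus[1]\Gamma$, maximality of $\Pi$ forces $h$ to be injective across $\Pi$-blocks (so $|h(D)|=|\Pi|$) and forces every $g\in\fPolplus[1]\Gamma$ to be injective on $h(D)$ (so $|g(D)|\ge|\Pi|$), giving $|\Pi|=\coresize(\Gamma)$ without ever invoking Lemma~\ref{lemma:core-basics}(b); it then settles the iff by composing $h\circ g$ and counting. You instead build a retraction $g^\ast$ onto a fixed core $B$, form $\pi=g^\ast\circ h$, collapse $\Pi_\pi$ to $\Pi$ by maximality, and then pass to an idempotent power to realize $B$ as a $\Pi$-transversal. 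This is heavier (you need Lemma~\ref{lemma:core-basics}(b), closure under composition, and finite-monoid idempotency), but it buys a cleaner structural picture: an explicit idempotent element of $\calO_\Pi\cap\fPolplus[1]\Gamma$ retracting onto $B$. One step you leave implicit and should spell out is that $\pi^N(D)=B$: it follows since $\pi^N(D)\subseteq\pi(D)=B$ while $|\pi^N(D)|\ge\coresize(\Gamma)=|B|$ because $\pi^N\in\fPolplus[1]\Gamma$. For (b), your shifting/rebalancing argument is correct but unnecessary: the per-block inequality $n\le 3^{n/3}$ already holds for every integer $n\ge 1$ (it is tight only at $n=3$), so multiplying across blocks gives the bound in one step, which is exactly the paper's short induction. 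Also, a small slip: after your rebalancing the parts lie in $\{2,3,4\}$, not $\{2,3\}$, which is why you (correctly) also need $4\le 3^{4/3}$.
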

\begin{theorem}\label{th:Pi-testing}
There exists an algorithm with runtime $\TFeas{d,\Gamma}+\TFeas{|\Pi|,\Gamma}+O(poly(\size(\Gamma)))$
that for a given input $(\Gamma,\Pi)$  does one of the following:
\begin{itemize}
\item[(a)] Asserts that $\Pi$ is a conditional partition of $\Gamma$.
\item[(b)] Asserts that $\Pi$ is not a partition of $\Gamma$.
\end{itemize} 
\end{theorem}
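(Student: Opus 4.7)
The plan is to reduce the question ``does $\Pi$ qualify as a partition of $\Gamma$?'' to a comparison of BLP values on two natural indicator instances whose $\Feas$ computations account for the bulk of the running time.

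First I construct the \emph{full indicator instance} $\calI_\Gamma$ on variable set $V=D$: one variable $x_a$ per element $a\in D$, and for every $f\in\Gamma$ of arity $n$ and every $(a_1,\ldots,a_n)\in\dom f$ a term $f(x_{a_1},\ldots,x_{a_n})$. I also construct the \emph{quotient indicator instance} $\calI^\Pi_\Gamma$ on variable set $V=\Pi$: one variable $y_A$ per block, with the analogous terms $f(y_{[a_1]},\ldots,y_{[a_n]})$ under the block map $a\mapsto[a]$. Both instances have size $O(\mathrm{poly}(\size(\Gamma)))$. Their finite-cost assignments correspond bijectively to elements of $\Pol^{(1)}(\Feas\Gamma)$ and $\calO_\Pi\cap\Pol^{(1)}(\Feas\Gamma)$ respectively, and since any finite-cost assignment for $\calI^\Pi_\Gamma$ lifts to one for $\calI_\Gamma$ of the same value, we automatically have $V_\Gamma := \BLP(\calI_\Gamma+\Feas\calI_\Gamma)\le V_\Pi := \BLP(\calI^\Pi_\Gamma+\Feas\calI^\Pi_\Gamma)$.

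The algorithm then computes $\Feas\calI_\Gamma$ in time $\TFeas{d,\Gamma}$, computes $\Feas\calI^\Pi_\Gamma$ in time $\TFeas{|\Pi|,\Gamma}$, and solves the two BLPs in polynomial time. It outputs (a) or (b) according to a structural comparison between the two BLP solutions---intuitively, whether an optimal BLP solution for $\calI^\Pi_\Gamma+\Feas\calI^\Pi_\Gamma$ can be realized by a fractional polymorphism of $\Gamma$ supported on $\calO_\Pi$. For the forward direction (output (a) when $\Pi$ is a partition), I would show that the witnessing $g\in\calO_\Pi\cap\fPolplus[1]\Gamma$ can be promoted, by a construction analogous to Lemma~\ref{lemma:core-basics}(b), to an $\omega\in\fPol[1]\Gamma$ with $\supp(\omega)\subseteq\calO_\Pi$; averaging a BLP solution of $\calI_\Gamma+\Feas\calI_\Gamma$ by $\omega$ then produces a BLP solution of $\calI^\Pi_\Gamma+\Feas\calI^\Pi_\Gamma$ of no greater value, forcing $V_\Pi\le V_\Gamma$. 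For the backward direction (output (b) when $\Pi$ is not a partition and $\Gamma$ is solvable), Theorem~\ref{th:KKZ:SICOMP17} makes BLP tight on both augmented instances, so that a witness for the check translates to an integral $g^\ast\in\calO_\Pi\cap\Pol^{(1)}(\Feas\Gamma)$ satisfying the required fractional-polymorphism inequalities.

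The main obstacle is this last step. A mere equality $V_\Gamma=V_\Pi$ is not enough: there exist small solvable finite-valued $\Gamma$ (e.g.\ two unary terms $f_1,f_2$ on $D=\{1,2\}$ with $f_1(1)=f_2(2)=10$, $f_1(2)=f_2(1)=0$) for which $\fPolplus[1]\Gamma=\{e\}$, yet $V_\Gamma=V_\Pi$ for $\Pi=\{D\}$ even though $\Pi$ is not a partition. The resolution is to check not only values but also the dual (or the structure of the marginals) of $\BLP(\calI^\Pi_\Gamma+\Feas\calI^\Pi_\Gamma)$ against the one of $\BLP(\calI_\Gamma+\Feas\calI_\Gamma)$, so that purely summed coincidences are distinguished from genuine pointwise ones. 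This is precisely where both $\Feas$ computations enter: the $\TFeas{d,\Gamma}$ step constrains which marginals are admissible in the full instance and thereby forces the quotient's marginals to actually come from a fractional polymorphism in $\calO_\Pi$, while the $\TFeas{|\Pi|,\Gamma}$ step is needed to apply Theorem~\ref{th:KKZ:SICOMP17} to the quotient itself.
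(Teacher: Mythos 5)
You have correctly identified the two indicator instances on variable sets $D$ and $\Pi$ (accounting for the $\TFeas{d,\Gamma}$ and $\TFeas{|\Pi|,\Gamma}$ cost terms), and you have correctly diagnosed, with a good counterexample, that comparing the two BLP values $V_\Gamma$ and $V_\Pi$ is not a sufficient test. But the ``resolution'' paragraph is a genuine gap: it never specifies a concrete check, and no comparison of the duals or marginals of two \emph{fixed} BLP optima will do the job. The condition to be decided --- existence of $\omega\in\fPol[1]\Gamma$ with $\supp(\omega)\cap\calO_\Pi\neq\varnothing$ --- is feasibility of an LP with exponentially many variables (one per $g\in\calO^{(1)}$); its Farkas dual is the polytope $Y[\Omega^+]\subseteq\mathbb Q^{\Gamma^+}$ with polynomially many variables but exponentially many constraints, and one must test whether \emph{that} polytope is empty. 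Emptiness cannot be read off from what happens at the single weight vector $y=\mathds{1}$: your two instances are exactly the weighted instances $\calI(y),\calI'(y)$ of the paper evaluated at that one $y$, which probes only one point of the polytope.

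The paper closes this gap by running the ellipsoid method on $Y[\Omega^+]$. The separation oracle, given a candidate $y\in\mathbb Q^{\Gamma^+}$, forms the \emph{$y$-weighted} versions of your two instances (each term $(f,\bx)\in\Gamma^+$ gets weight $y(f,\bx)$), adds the precomputed feasibility parts $\calJ=\Feasplus{\calJ_\circ}$ and $\calJ'=\Feasplus{\calJ'_\circ}$, solves the two BLP relaxations, and extracts a separating hyperplane from the fractional optima $\mu,\mu'$; if neither yields a violated constraint, it certifies $y\in Y[\Omega^+]$ and the algorithm halts with ``not a partition''. The oracle is called with \emph{different} $y$'s over the run; since only the weights change while the support structure is fixed, the $\Feas$ computations can be done once upfront, which is exactly where the claimed $\TFeas{d,\Gamma}+\TFeas{|\Pi|,\Gamma}$ overhead comes from. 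To make your proposal work you would need essentially this LP-feasibility loop; a one-shot structural comparison of two BLPs does not appear to be repairable.
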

As before, the algorithm in Theorem~\ref{th:Pi-testing} is based on the ellipsoid method.
However, now we cannot use procedure {\tt LP-Probe}$(\calI,B)$
to implement the separation oracle, since a candidate core $B$ is not available.
Instead, we solve the BLP relaxation of instance $\calI$ and derive a separating hyperplane
from a (fractional) optimal solution of the relaxation.

\begin{corollary}\label{cor:gstar}
{\em (1)} A conditional maximal partition $\Pi$ of $\Gamma$ 
can be computed in $O(d^2)\cdot \TFeas{d,\Gamma}+O(poly(\size(\Gamma))$ time.
{\em (2)} Once such $\Pi$ is found,
a conditional core $B$ of $\Gamma$ 
can be computed using
$(|\BcorePi|+O(poly(\size(\Gamma)))\cdot (\TLPPROBE{d,\Gamma}+O(poly(\size(\Gamma))))$
time and $O(poly(\size(\Gamma)))$ space.
If $B\ne\varnothing$ then the algorithm also produces
a fractional polymorphism $\omega\in\fPol\Gamma$ such
that $\supp(\omega)\subseteq \OcorePi$, $|\supp(\omega)|\le 1+\sum_{f\in\Gamma}|\dom f|$ and
$\supp(\omega)$ contains an operation $g$ with $g(D)=B$.
\end{corollary}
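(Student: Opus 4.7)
For Part~(1), the plan is a greedy coarsening loop. Start with the singleton partition $\Pi=\{\{a\}:a\in D\}$, which is a partition of $\Gamma$ since the identity operation lies in $\calO_\Pi\cap\fPolplus\Gamma$, and iterate through all $\binom{d}{2}=O(d^2)$ unordered pairs $\{a,b\}\subseteq D$ in an arbitrary order. For each pair, if $a,b$ currently lie in different blocks of $\Pi$, form $\Pi[a\sim b]$ by merging those two blocks, invoke Theorem~\ref{th:Pi-testing} on $(\Gamma,\Pi[a\sim b])$, and replace $\Pi$ by $\Pi[a\sim b]$ iff the output is~(a). Each call costs at most $\TFeas{d,\Gamma}+\TFeas{|\Pi|,\Gamma}+O(poly(\size(\Gamma)))=O(\TFeas{d,\Gamma})+O(poly(\size(\Gamma)))$, giving the claimed total time.

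Correctness of Part~(1) rests on the easy monotonicity fact that every refinement of a partition of $\Gamma$ is again a partition of $\Gamma$: any $\omega\in\calO_{\Pi'}\cap\fPolplus\Gamma$ is constant on each (larger) block of $\Pi'$ and hence on the (smaller) blocks of any refinement $\Pi\preceq\Pi'$. Assume $\Gamma$ is solvable, since otherwise any output is a conditional maximal partition vacuously, and suppose for contradiction that the returned $\Pi$ admits a strict coarsening $\Pi^\ast\succ\Pi$ that is a partition of $\Gamma$. Pick a pair $\{a,b\}$ lying in the same block of $\Pi^\ast$ but in different blocks of $\Pi$; at the moment the algorithm processed this pair the current $\Pi_{\mathrm{cur}}\preceq\Pi\preceq\Pi^\ast$ had $a,b$ in different blocks, so $\Pi_{\mathrm{cur}}[a\sim b]\preceq\Pi^\ast$ is, by monotonicity, a partition of $\Gamma$. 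Theorem~\ref{th:Pi-testing} would then have returned~(a), the merge would have been accepted, and $a,b$ would share a block of $\Pi$, a contradiction.

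For Part~(2), the plan is to invoke Theorem~\ref{th:find-core:general} with the string $\sigma$ encoding $\calB_\sigma=\BcorePi$; elements of $\BcorePi$ are simply transversals of $\Pi$ and so are enumerable in polynomial time per element. If the output is~(a) we additionally check that every $g\in\supp(\omega)$ satisfies $g(D)\in\BcorePi$; when this check succeeds we output $B:=g(D)$ for an arbitrary $g\in\supp(\omega)$ together with $\omega$, and otherwise (or on outputs~(b),~(c)) we return $B=\varnothing$. The claimed running time and space follow directly from Theorem~\ref{th:find-core:general}. For correctness, assume $\Gamma$ is solvable; then Lemma~\ref{lemma:Pi}(a) gives $\Bcore\Gamma\subseteq\BcorePi$, every core has size $|\Pi|$, and $\coresize(\Gamma)=|\Pi|$. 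Outputs~(b) and~(c) are then both impossible, since for any $g\in\Ocore\Gamma$ the Dirac vector $\chi_g$ lies in $\fPol[1]\Gamma$ with $\supp(\chi_g)\subseteq\OcorePi\subseteq\calOplus[\BcorePi]$ (refuting~(b)) and $\Bcore\Gamma\cap\BcorePi=\Bcore\Gamma\neq\varnothing$ (refuting~(c)). Thus~(a) occurs, and for each $g\in\supp(\omega)\subseteq\fPolplus[1]\Gamma$ we have $|g(D)|\geq\coresize(\Gamma)=|\Pi|$, which combined with $g(D)\subseteq B$ for some $B\in\BcorePi$ of size $|\Pi|$ forces $g(D)=B\in\BcorePi$. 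Hence the check always passes, $\supp(\omega)\subseteq\OcorePi$, and the returned $B$ is a genuine core by Lemma~\ref{lemma:core-basics}(a). When $\Gamma$ is not solvable, either return is a conditional core by definition, and the a-posteriori check enforces the $\supp(\omega)\subseteq\OcorePi$ guarantee whenever we return $B\neq\varnothing$.

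The main obstacle is precisely the gap between the containment $\supp(\omega)\subseteq\calOplus[\BcorePi]$ supplied by Theorem~\ref{th:find-core:general}---image merely \emph{contained in} some transversal---and the stronger containment $\supp(\omega)\subseteq\OcorePi$---image \emph{equal to} a transversal---demanded by the corollary. Closing this gap hinges on the maximality of $\Pi$, which pins $\coresize(\Gamma)$ to $|\Pi|$ in the solvable case and thereby forces each support operation of $\omega$ to realize its image exactly; the explicit verification step then handles the non-solvable case cleanly without ever returning an incorrect witness.
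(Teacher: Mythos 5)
Part (1) of your proof is correct and is essentially the same argument as the paper (the paper phrases it as a greedy search evaluating a monotone predicate ${\tt TEST}(\Gamma,\Pi)$, while you spell out the pair-by-pair coarsening and the contradiction; the content is identical).

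Part (2) follows the paper's structure — invoke Theorem~\ref{th:find-core:general} with $\calB_\sigma=\BcorePi$ and show, assuming solvability, that outcomes (b) and (c) cannot occur and that any $\omega$ from outcome (a) automatically lands in $\OcorePi$ — but your justification for ruling out outcome (b) is wrong. You assert that for any $g\in\Ocore\Gamma$ the Dirac vector $\chi_g$ lies in $\fPol[1]\Gamma$. That does not follow: $g\in\Ocore\Gamma$ only means $g\in\fPolplus[1]\Gamma$, i.e.\ $g$ lies in the \emph{support} of some unary fractional polymorphism; it does not mean $g$ is by itself an improving map. A concrete counterexample: let $D=\{1,2,3,4\}$, $\Gamma=\{R,f\}$ with $R$ the crisp binary function having $\dom R=\{(1,2),(2,1)\}$ and $f$ unary with $f(1)=2,f(2)=0,f(3)=1,f(4)=3$. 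Then $\omega=\tfrac12\chi_g+\tfrac12\chi_h$ with $g=(1,2,1,1)$ and $h=(1,2,2,2)$ is a unary fractional polymorphism (check: at $x=3$, $\tfrac12 f(1)+\tfrac12 f(2)=1\le f(3)$), so $g\in\fPolplus[1]\Gamma$ with $|g(D)|=2=\coresize(\Gamma)$, hence $g\in\Ocore\Gamma$; yet $f(g(3))=f(1)=2>1=f(3)$, so $\chi_g\notin\fPol[1]\Gamma$. The correct way to refute outcome (b) — and the one the paper uses — is Lemma~\ref{lemma:core-basics}(a,b): pick any $B\in\Bcore\Gamma$; Lemma~\ref{lemma:core-basics}(b) produces an $\omega\in\fPol[1]\Gamma$ whose support maps into $B$ (and, using the ``furthermore'' clause, onto $B$), and since $\Bcore\Gamma\subseteq\BcorePi$ by Lemma~\ref{lemma:Pi}(a) this gives $\supp(\omega)\subseteq\OcorePi\subseteq\calOplus[\BcorePi]$, contradicting (b). The remainder of your Part (2), including the forcing argument $|g(D)|\ge\coresize(\Gamma)=|\Pi|$ plus $g(D)\subseteq B$ implying $g(D)=B$, is correct, and the a-posteriori check you add is harmless (it never triggers in the solvable case and guarantees the stated output property otherwise).
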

In part (1) we use a greedy search that starts with $\Pi=\{\{a\}\:|\:a\in D\}$ and then
repeatedly calls the algorithm in Theorem~\ref{th:Pi-testing} for coarser partitions $\Pi$.
In part (2) we call the algorithm from Theorem~\ref{th:find-core:general} with $\sigma=\Pi$ and $\calB_\sigma=\BcorePi$.
For further details we refer to Appendix~\ref{sec:core:gstar:proof}.

\paragraph{Testing solvability of a conditional core} Once we have found a conditional core $B$ of $\Gamma$,
we need to test whether  language $\Gamma[B]$ is solvable. This problem is known
to be solvable in polynomial-time for finite-valued languages~\cite{tz16:jacm},
see Theorem~\ref{th:tz16:core-algorithm}. Their result can be extended as follows.

\begin{theorem}\label{th:24}
There exists an algorithm that for a given language $\Gamma$ does one of the following:
\begin{itemize}
\item[(a)] Produces an idempotent fractional polymorphism $\omega\in\fPol\Gamma$ certifying solvability of $\Gamma$: \\
$\bullet$ $\omega$ has arity $m=2$ and is symmetric, if $\Gamma$ is finite-valued; \\ 
$\bullet$ $\omega$ has arity $m=4$ and contains a Siggers operation in the support, if $\Gamma$ is not finite-valued.
Furthermore, in each case vector $\omega$ satisfies $|\supp(\omega)|\le 1+\sum_{f\in\Gamma}\binom{|\dom f|}{m}$.
\item[(b)] Asserts that one of the following holds: (i) $\Gamma$ is not solvable; (ii) $\Gamma$ is not a core.
\end{itemize} 
Its runtime is $O(poly(\size(\Gamma)))$  if $\Gamma$ is finite-valued, and 
$O(\TLPPROBE{d^4,\Gamma}\cdot poly(\size(\Gamma)))$ otherwise.
\end{theorem}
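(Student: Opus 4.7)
The plan is to cast the problem as a linear program over idempotent fractional polymorphisms of arity $m=2$ (with symmetry, in the finite-valued case) or $m=4$ (in the general case), and to solve it via the ellipsoid method on the dual, implementing the separation oracle via direct enumeration in the finite-valued case and via \texttt{LP-Probe} in the general case. I would take the primal LP to have variables $\omega_g$ indexed by idempotent $g\in\calO^{(m)}_D$ (restricted further to symmetric $g$ when $m=2$), with normalization $\sum_g \omega_g = 1$ and the polymorphism inequality of Definition~\ref{def:fpol} for each $f\in\Gamma$ and each relevant tuple in $(\dom f)^m$. Idempotency trivializes all fully-diagonal constraints, and for $m=2$ symmetry identifies $(x^1,x^2)$ with $(x^2,x^1)$; after these reductions at most $1+\sum_f \binom{|\dom f|}{m}$ non-trivial constraints remain, so a basic feasible solution has the claimed support size by standard LP theory. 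The objective is plain feasibility when $m=2$ and $\max \sum_{g\;\text{Siggers}} \omega_g$ when $m=4$; solvability via (a) corresponds to a strictly positive optimum.

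Because $|\calO^{(m)}_D|=d^{d^m}$ is exponential in $d$ whereas the constraint count is polynomial, the dual---with one variable $y_{f,(x^i)}$ per primal constraint and one constraint per idempotent $g$---is naturally attacked by the ellipsoid method. The separation oracle, given dual values, must find an idempotent $g$ (Siggers, in the $m=4$ case) minimizing $\sum_{f,(x^i)} y_{f,(x^i)}\, f(g(x^1,\ldots,x^m))$ below the dual-threshold determined by the objective coefficients. For finite-valued $\Gamma$ every $f$ takes finite values and the sum decomposes into independent choices of $g(a^1,\ldots,a^m)\in D$ at each input tuple (modulo the symmetry and idempotency identifications), so the oracle reduces to a per-tuple enumeration over $D$; this yields the $O(poly(\size(\Gamma)))$ runtime and recovers Theorem~\ref{th:tz16:core-algorithm}. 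For general $\Gamma$ some $f(g(\cdot))$ values may be $\infty$, which implicitly forces $g$ to be a polymorphism of $\dom f$; I would therefore encode the minimization as a VCSP $\Gamma$-instance $\calI$ on at most $d^4$ variables---one per equivalence class of $D^4$ under the idempotency and Siggers identifications---with terms proportional to $y_{f,(x^i)}\cdot f$ applied to the appropriate variables, and invoke $\texttt{LP-Probe}(\calI,D)$. A returned labeling is, by Lemma~\ref{lemma:LPProbe}(a), precisely a minimizing $g$; a return of $\varnothing$ certifies that the current dual constraint is not violated; a return of \texttt{FAIL} is, via the contrapositive of Lemma~\ref{lemma:LPProbe}(c), recorded as evidence that $\Gamma$ is not a solvable core, in which case we halt with answer (b). If the ellipsoid instead terminates with primal infeasibility or with optimum $0$ for $m=4$, then by Lemma~\ref{lemma:solvability:equiv} $\Gamma$ admits no Siggers fractional polymorphism, so again we output (b); otherwise the optimum vertex supplies the polymorphism required by (a).

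The principal obstacle is using \texttt{LP-Probe} as a separation subroutine in the general case \emph{before} we know that $\Gamma$ is a solvable core: we must simultaneously ensure that every successful oracle call returns a certified optimum (Lemma~\ref{lemma:LPProbe}(a)) and that every \texttt{FAIL} genuinely justifies output (b) (contrapositive of Lemma~\ref{lemma:LPProbe}(c)), so that the ellipsoid's correctness is not compromised by an uncertain oracle. A secondary but nontrivial technical point is securing the tight support bound $1+\sum_f \binom{|\dom f|}{m}$ in the $m=4$ setting, where---unlike the $m=2$ case---one does not have full argument-permutation symmetry on $\omega$; this will likely require a careful partial symmetrization of the polymorphism constraints along the subgroup of $S_4$ preserving the Siggers identity, so that the Siggers objective is retained while the number of effective constraints is collapsed to $\binom{|\dom f|}{4}$.
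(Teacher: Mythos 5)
Your overall strategy is the one the paper actually uses: cast the question as the feasibility of a polytope over fractional polymorphisms (equivalently, a Farkas-dual system with one inequality per operation), run the ellipsoid method on that dual, and implement the separation oracle by building a weighted VCSP instance whose labelings are operations and calling {\tt LP-Probe} (direct per-tuple enumeration in the finite-valued case, since there every choice $g(a^1,\ldots,a^m)$ decouples). You also correctly identify the crucial soundness point about {\tt LP-Probe}: a returned labeling is a genuine minimizer by Lemma~\ref{lemma:LPProbe}(a) regardless of whether $\Gamma$ is solvable, and a {\tt FAIL} licenses output~(b) via the contrapositive of Lemma~\ref{lemma:LPProbe}(c). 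All of that matches the paper.

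There is, however, a concrete gap in your $m=4$ separation oracle. The dual system $Y[\Omega^+]$ has one inequality for \emph{every} idempotent $4$-ary operation $g\in\Omega^+=\calO^{(4)}_{\tt id}\cap\Pol\Gamma$, with right-hand side $[g\in\Omega^-]$: it is $0$ for non-Siggers $g$ and $1$ for Siggers $g$. (This is forced by Lemma~\ref{lemma:solvability:equiv}(c): we need a fractional polymorphism whose support is allowed to contain non-Siggers idempotent operations, as long as \emph{some} Siggers operation appears with positive weight. Restricting the LP to Siggers-only support would be strictly stronger and is not what the theorem asserts.) Your proposed single instance, with one variable per equivalence class of $D^4$ under the Siggers identifications, searches only over Siggers operations and can therefore miss a violated non-Siggers constraint: the oracle could wrongly certify $y\in Y[\Omega^+]$, causing the ellipsoid to declare $Y[\Omega^+]\ne\varnothing$ (hence ``not solvable'') when in fact $\Gamma$ is solvable. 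The paper repairs this by running {\tt LP-Probe} on \emph{two} instances in the $m=4$ case: $\calI'(y)$ on variables $\Pi$ (the Siggers quotient) to probe $\Omega^-$ against threshold $h(y)+1$, and $\calI(y)$ on variables $D^4$ (idempotency forced by unary $u_a$ terms, not by quotienting) to probe $\Omega^+\setminus\Omega^-$ against threshold $h(y)$; only when both pass is $y\in Y[\Omega^+]$ certified. Your plan needs this second instance.

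A smaller point: your worry about the support bound $1+\sum_f\binom{|\dom f|}{m}$ for $m=4$ is reasonable to raise, but the fix you sketch---symmetrizing over the ``subgroup of $S_4$ preserving the Siggers identity''---is not the right tool, because the Siggers identity $g(r,a,r,e)=g(a,r,e,a)$ is not an argument-permutation identity and does not define a subgroup of $S_4$. In the paper the bound simply reads off the vertex structure of the Farkas system in Lemma~\ref{lemma:Farkas} (the number of non-slack constraints is $|\Gamma^+|+1$, so a vertex has at most that many nonzero coordinates); no extra symmetrization step appears.
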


Combining procedures in Corollary~\ref{cor:gstar} 
and the algorithm in Theorem~\ref{th:24}
yields our main algorithmic result.
\begin{corollary}
Solvability of a given finite-valued language $\Gamma$ can be tested
in $O(\sqrt[3]{3}^{\,d}\cdot poly(\size(\Gamma)))$ time.
If the answer is positive, the algorithm also returns a fractional polymorphism $\omega_1\in\fPol[1]\Gamma$ with $\supp(\omega_1)\subseteq\Ocore\Gamma$
and a symmetric idempotent fractional polymorphism $\omega_2\in\fPol[2]{\Gamma[B]}$ where $B\!=\!g(D)$ for some $g\!\in\!\supp(\omega_1)$;
furthermore, $|\supp(\omega_m)|\!\le\! 1\!+\!\sum_{f\in\Gamma}\binom{|\dom f|}{m}$ for $m\!\in\!\{1,2\}$.\!\!\!\!\!\!
\end{corollary}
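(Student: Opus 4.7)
The plan is to simply chain together the three tools built up in this section. The input is a finite-valued $\Gamma$ on a domain $D$ of size $d$, so both $\TFeas{n,\Gamma}$ and $\TLPPROBE{n,\Gamma}$ are polynomial in $n+\size(\Gamma)$ for any $n=\mathrm{poly}(d)$, which is what makes the final bound come out to $O(\sqrt[3]{3}^{\,d}\cdot \mathrm{poly}(\size(\Gamma)))$.

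First I would invoke Corollary~\ref{cor:gstar}(1) to obtain a conditional maximal partition $\Pi$ of $\Gamma$ in time $O(d^2)\cdot\TFeas{d,\Gamma}+O(\mathrm{poly}(\size(\Gamma)))$. Then I would feed $\Pi$ into Corollary~\ref{cor:gstar}(2) to obtain a conditional core $B\subseteq D$; when $B\neq\varnothing$ the procedure simultaneously returns a fractional polymorphism $\omega_1\in\fPol[1]\Gamma$ with $\supp(\omega_1)\subseteq\OcorePi\subseteq\Ocore\Gamma$ (the last inclusion using that $\Pi$ is a maximal partition, via Lemma~\ref{lemma:Pi}(a)) together with some $g\in\supp(\omega_1)$ having $g(D)=B$, and with $|\supp(\omega_1)|\le 1+\sum_{f\in\Gamma}|\dom f|=1+\sum_{f\in\Gamma}\binom{|\dom f|}{1}$. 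The runtime of this step is $(|\BcorePi|+O(\mathrm{poly}(\size(\Gamma))))\cdot(\TLPPROBE{d,\Gamma}+O(\mathrm{poly}(\size(\Gamma))))$, which by Lemma~\ref{lemma:Pi}(b) is $O(\sqrt[3]{3}^{\,d}\cdot\mathrm{poly}(\size(\Gamma)))$. If $B=\varnothing$, then by definition of a conditional core $\Gamma$ is not solvable and we stop; otherwise I would call the algorithm of Theorem~\ref{th:24} on the finite-valued language $\Gamma[B]$, which runs in $\mathrm{poly}(\size(\Gamma))$ time.

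It remains to argue correctness in the two possible outcomes of Theorem~\ref{th:24}. If outcome (a) occurs, we receive a symmetric binary idempotent $\omega_2\in\fPol[2]{\Gamma[B]}$ with $|\supp(\omega_2)|\le 1+\sum_{f\in\Gamma[B]}\binom{|\dom f|}{2}\le 1+\sum_{f\in\Gamma}\binom{|\dom f|}{2}$; by the finite-valued part of Lemma~\ref{lemma:solvability:equiv} this certifies solvability of $\Gamma[B]$, and then Lemma~\ref{lemma:solvability:equiv}(e) lifts this to solvability of $\Gamma$ (since $B=g(D_\Gamma)$ for $g\in\supp(\omega_1)\subseteq\fPolplus[1]\Gamma$). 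If outcome (b) occurs, I claim $\Gamma$ is not solvable. Indeed, suppose for contradiction that $\Gamma$ is solvable; then by the conditional-core guarantee, $B$ is actually a core of $\Gamma$, so $\Gamma[B]$ is by definition a core, and by Lemma~\ref{lemma:solvability:equiv}(e) it is also solvable. This contradicts outcome (b), which asserts that $\Gamma[B]$ is either non-solvable or not a core.

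The one step that requires the most care is the last correctness argument: because our ``cores'' are only conditional, we cannot freely assume $B\in\Bcore\Gamma$ when reasoning, and must instead argue by contrapositive using the equivalence $\Gamma\text{ solvable}\Leftrightarrow\Gamma[B]\text{ solvable}$ from Lemma~\ref{lemma:solvability:equiv}(e). Adding up the three runtimes yields the claimed bound $O(\sqrt[3]{3}^{\,d}\cdot\mathrm{poly}(\size(\Gamma)))$, dominated by Step~2.
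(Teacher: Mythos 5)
Your proof takes essentially the paper's approach — the paper itself only says "combining procedures in Corollary~\ref{cor:gstar} and the algorithm in Theorem~\ref{th:24} yields our main algorithmic result," and you have filled in the gluing correctly, including the point that needs the most care: that when $B$ is only a \emph{conditional} core, a (b)-outcome from Theorem~\ref{th:24} must be read through the contrapositive using the equivalence $\Gamma$ solvable $\Leftrightarrow$ $\Gamma[B]$ solvable from Lemma~\ref{lemma:solvability:equiv}(e).

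One inaccuracy to repair: you write $\supp(\omega_1)\subseteq\OcorePi\subseteq\Ocore\Gamma$ and cite Lemma~\ref{lemma:Pi}(a) for the second inclusion. That lemma gives $\Ocore\Gamma=\OcorePi\cap\fPolplus\Gamma$, which yields the \emph{reverse} containment $\Ocore\Gamma\subseteq\OcorePi$; in general $\OcorePi\not\subseteq\Ocore\Gamma$, since $\OcorePi$ contains unary operations whose image lies in $\BcorePi$ but which need not be polymorphisms of $\Gamma$ at all. The desired conclusion $\supp(\omega_1)\subseteq\Ocore\Gamma$ still holds, but the correct derivation is: $\supp(\omega_1)\subseteq\OcorePi$ by Corollary~\ref{cor:gstar}(2), and $\supp(\omega_1)\subseteq\fPolplus\Gamma$ because $\omega_1\in\fPol[1]\Gamma$; intersecting and applying Lemma~\ref{lemma:Pi}(a) gives $\supp(\omega_1)\subseteq\OcorePi\cap\fPolplus\Gamma=\Ocore\Gamma$ (and this is the case we care about, namely when $\Gamma$ is solvable and hence $\Pi$ genuinely is a maximal partition). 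With that fix, the argument is complete.
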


\paragraph{Hardness results} Let us fix a constant $L\in\{1,\infty\}$.
As Theorems~\ref{th:finite:NPhard},~\ref{th:crisp:NPhard-core} and~\ref{th:crisp:NPhard-Siggers} state,
testing whether $\Gamma$ is (i) solvable and (ii) is a core are both NP-hard problems for  $\{0,L\}$-valued languages.
We now present additional hardness results under the Exponential Time Hypothesis (ETH) and the Strong Exponential Time Hypothesis (SETH)
(see Conjectures~\ref{conj:ETH} and~\ref{conj:SETH}).
Note that for Theorem~\ref{th:ETH-hardness} we simply reuse the reductions from~\cite{ChenLarose:17}.
We say that a family of languages $\calF$ is  {\em $k$-bounded}
if each $\Gamma\in\calF$ satisfies $\size(\Gamma)=O(poly(d))$ for some fixed polynomial,
and ${\tt arity}(f)\le k$ for all $f\in\Gamma$.

\begin{theorem}\label{th:ETH-hardness}
Suppose that ETH holds. Then there exists a 2-bounded family $\calF$ of $\{0,L\}$-valued languages
such that the following problems cannot be solved in $O(2^{o(d)})$ time: \\
(a) Deciding whether language  $\Gamma\in\calF$ is solvable. \\
(b) Deciding whether language $\Gamma\in\calF$ is a core.  
\end{theorem}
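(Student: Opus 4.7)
The plan is, as the authors indicate, to reuse the NP-hardness reductions of Theorems~\ref{th:finite:NPhard},~\ref{th:crisp:NPhard-core} and~\ref{th:crisp:NPhard-Siggers} verbatim, and simply verify that each one is a \emph{linear} reduction from 3-SAT: it maps a 3-SAT instance with $n$ variables to a 2-bounded $\{0,L\}$-valued language $\Gamma$ with $d=|D_\Gamma|=\Theta(n)$ and $\size(\Gamma)=O(poly(d))$. Once this linearity is in hand, a hypothetical $O(2^{o(d)})$ meta-algorithm for either problem on the image of the reduction would compose with the (polynomial-time) reduction itself to decide 3-SAT in $O(2^{o(n)})$ time, contradicting ETH; the family $\calF$ is then taken to be precisely the image of the reduction.

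For the crisp case ($L=\infty$), I would proceed as follows. The Hell--Ne\v{s}et\v{r}il construction in~\cite{hell:92} establishing Theorem~\ref{th:crisp:NPhard-core} takes a graph-coloring type instance on $n$ vertices and outputs a single directed graph $H$ on $O(n)$ vertices whose core-testing question encodes the original NP-hard problem. Regarding the edge set of $H$ as the single binary relation of a crisp language gives a 2-bounded family of $\{0,\infty\}$-valued languages of domain size $O(n)$, which yields part~(b) in the crisp case. Analogously, the Chen--Larose reduction in~\cite{ChenLarose:17} establishing Theorem~\ref{th:crisp:NPhard-Siggers} uses only constant-size gadgets per variable and per clause of the source 3-SAT instance, so its output is a 2-bounded crisp language with $d=O(n)$, handling part~(a) in the crisp case.

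For the finite-valued case ($L=1$), the same strategy applies to the Thapper--\v{Z}ivn\'y reductions in~\cite{tz16:jacm} that establish Theorem~\ref{th:finite:NPhard}: those reductions are also gadget-based with constant-size per-variable and per-clause gadgets, so a 3-SAT instance on $n$ variables is mapped to a $\{0,1\}$-valued binary language on a domain of size $O(n)$, giving 2-bounded families for both~(a) and~(b) in the finite-valued case.

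The main (and only) obstacle is the bookkeeping check that each of the three reductions from~\cite{hell:92,ChenLarose:17,tz16:jacm} really does have the claimed linear blow-up $d=\Theta(n)$, bounded arity $\le 2$, and polynomial language size. Since each construction is explicit and every gadget has size independent of $n$, this verification is routine; ETH then immediately upgrades the NP-hardness statements to the desired $2^{o(d)}$ lower bounds.
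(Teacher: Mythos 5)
Your high-level strategy (compose a polynomial-time reduction into the meta-problem with the ETH lower bound for the source problem) is the same as the paper's, but the paper does \emph{not} try to re-examine the three NP-hardness reductions of Theorems~\ref{th:finite:NPhard},~\ref{th:crisp:NPhard-core} and~\ref{th:crisp:NPhard-Siggers} and verify their blow-up one by one. It instead uses a single explicit construction -- the ``lifted'' language $\Gamma^\calL(\calI)$ built from a 3-coloring instance $\calI$ on a graph with vertex set $V$ -- for which Theorem~\ref{th:GAKGJHAKSGJ} (reused from Chen--Larose) states that $\Gamma^\calL(\calI)$ is solvable, respectively not a core, exactly when the graph is 3-colorable, and for which the parameters are pinned down exactly: $d = 3|V|$, arity $\le 2$, $\size(\Gamma^\calL(\calI)) = O(poly(|V|))$. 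Combining this with the known ETH lower bound $2^{o(|V|)}$ for 3-coloring (Theorem~\ref{th:AKJSGBKAS}) gives both parts (a) and (b) for both $L\in\{1,\infty\}$ simultaneously, from one construction, with no further bookkeeping.

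The gap in your proposal is precisely the step you defer as ``routine bookkeeping.'' For an ETH lower bound in terms of the \emph{domain size} $d$, you need to know that the domain grows linearly in the number of variables of the source 3-SAT instance. This is a quantitative fact about specific constructions that is not established merely by noting that the per-variable and per-clause gadgets have constant size: a reduction can use constant-size gadgets on a quadratic-size domain (e.g.\ one domain element per pair of source variables), and then a $2^{o(d)}$ algorithm for the meta-problem only refutes a $2^{o(\sqrt{n})}$ algorithm for 3-SAT, which ETH does not forbid. In particular, the Hell--Ne\v{s}et\v{r}il argument behind Theorem~\ref{th:crisp:NPhard-core} is a classical construction about endomorphisms of digraphs whose source problem and vertex blow-up you do not exhibit; and the Thapper--\v{Z}ivn\'y reduction behind Theorem~\ref{th:finite:NPhard} likewise is not inspected. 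Asserting that all three reductions are linear in the relevant parameter is exactly the content that a proof must provide, and it is avoided entirely in the paper by instead working with $\Gamma^\calL(\calI)$, where the equality $d=3|V|$ is visible by construction. Without that verification, your argument does not establish the theorem.
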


\begin{theorem}\label{th:SETH-hardness}
Suppose that SETH holds. Then for any  $\delta<1$ 
there exists an $O(1)$-bounded family $\calF$ of $\{0,L\}$-valued languages
such that the following problems cannot be solved in $O(\sqrt[3]3^{\,\delta d})$ time: \\
(a) Deciding whether language  $\Gamma\in\calF$ is solvable (assuming the existence of a uniform polynomial-time
algorithm for core crisp languages, in the case when $L=\infty$). \\
(b) Deciding whether language  $\Gamma\in\calF$ satisfies $\coresize(\Gamma)\le d/3$.  
\end{theorem}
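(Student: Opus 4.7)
The plan is to reduce $k$-CNF-SAT on $n$ variables (with $k$ depending on $\delta$ via SETH) to each of the two meta-problems via a polynomial-time reduction that produces a language $\Gamma_\Phi\in\calF$ on a domain of size
\[
d \;\le\; \frac{3\,n}{\log_2 3}\,(1+\varepsilon) + O(1)
\]
for any prescribed constant $\varepsilon>0$. Since then $3^{d/3}\le 2^{(1+\varepsilon)n+O(1)}$, an $O(\sqrt[3]3^{\,\delta d})$ algorithm for the meta-problem would decide $k$-SAT in $O(2^{\delta(1+\varepsilon)n})$ time; choosing $\varepsilon$ so that $\delta(1+\varepsilon)<1$ contradicts SETH once $k$ is large enough.

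The combinatorial ingredient is a \emph{trit-packing} of Boolean variables: partition the $n$ SAT variables into super-variables of $\alpha$ Booleans each (for a large fixed $\alpha$), and encode the $2^\alpha$ assignments of a super-variable by $\beta=\lceil\alpha/\log_2 3\rceil$ ``trits'' (variables over a 3-element set). Since $\beta/\alpha\to 1/\log_2 3$ as $\alpha\to\infty$, any $\varepsilon>0$ is achievable by a suitable fixed $\alpha$. Lay out the VCSP domain $D$ as a disjoint union $B_1\sqcup\cdots\sqcup B_{d/3}$ of 3-element blocks, one per trit, so that the block partition $\Pi=\{B_1,\ldots,B_{d/3}\}$ satisfies $|\BcorePi|\le 3^{d/3}$ by Lemma~\ref{lemma:Pi}(b). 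This is the partition that the upper-bound algorithm in Theorem~\ref{th:find-core:general} and Corollary~\ref{cor:gstar} is forced to search through, so matching $3^{d/3}$ here is tight.

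For part~(b), equip $\Gamma_\Phi$ with relations of arity bounded by a constant that depends only on $\alpha$ and $k$, enforcing: (i) encoding constraints that restrict the $3^\beta$ configurations of each super-variable's trits to the $2^\alpha$ valid ones; (ii) one relation per SAT-clause acting on the trits of the super-variables touched by that clause; and (iii) an auxiliary gadget that makes $\Pi$ a maximal partition of $\Gamma_\Phi$, so that Lemma~\ref{lemma:Pi}(a) yields $\Bcore{\Gamma_\Phi}\subseteq\BcorePi$, and so that a selector $B\in\BcorePi$ lies in $\Bcore{\Gamma_\Phi}$ precisely when it encodes a satisfying assignment of $\Phi$. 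For $L=\infty$ the constraints in (i)--(ii) are crisp; for $L=1$ they are unit-cost penalties, with a small finite-valued skeleton used for (iii). By construction $\coresize(\Gamma_\Phi)\le d/3$ if and only if $\Phi$ is satisfiable, completing (b).

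For part~(a), modify the construction so that satisfiability of $\Phi$ controls the existence of a cyclic (equivalently Siggers) fractional polymorphism on the core rather than the mere existence of a small core. Concretely, one can weave the Chen--Larose reduction~\cite{ChenLarose:17} for testing the Siggers identity into the trit-packed structure above, so that each selector $B\in\BcorePi$ encoding a satisfying assignment of $\Phi$ yields a core $\Gamma_\Phi[B]$ that fails to admit a Siggers operation; for $L=\infty$ the assumed uniform polynomial-time algorithm for core crisp languages is then used to certify in polynomial time that the core VCSPs arising in the ``no'' instances really are solvable. The main obstacle will be merging the Chen--Larose hardness gadget with the block-of-3 structure while simultaneously (a) keeping all arities $O(1)$, (b) keeping the domain within $(3/\log_2 3)(1+\varepsilon)n+O(1)$ so that the $3^{d/3}$ counting bound of Lemma~\ref{lemma:Pi}(b) is saturated, and (c) ensuring that $\Pi$ is forced to be the maximal partition so that no ``shortcut'' core of size smaller than $d/3$ can exist and Chen--Larose's Siggers-testing behaviour is inherited by $\Gamma_\Phi$ on each 3-block. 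This three-way interaction between arity, packing density and algebraic rigidity is where the proof will require the most care.
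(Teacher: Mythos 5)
Your quantitative strategy --- encoding Boolean super-variables into trits so that $d\approx 3n/\log_2 3$ and hence $\sqrt[3]3^{\,\delta d}\le 2^{\delta'n}$ for a suitable $\delta'<1$ --- is exactly the packing idea the paper uses (with injections $\sigma:\{0,1\}^p\to\{1,2,3\}^q$, $2^p\le 3^q$). But the construction you build on top of it is a genuinely different route, and the hard part of that route is left unresolved.

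The paper does not design separate gadgets for parts (a) and (b), nor does it try to force a biconditional of the form ``$\Phi$ satisfiable $\iff$ core fails Siggers.'' It applies the \emph{lifting} $\Gamma^{\calL}(\calI)$ from the ETH section to the trit-packed CSP instance $\calI(\varphi)$, and the whole argument then rests on Lemma~\ref{lemma:GAKSHAKSJFG}: (a) $\coresize(\Gamma)\ge|V|$ with equality iff $\min_x f_\calI(x)=0$; (b) satisfiable $\Rightarrow$ solvable (a one-directional implication, not an iff); and (c) an \emph{algorithm} that decides $\min_x f_\calI(x)=0$ using solvability tests as an oracle. Part (b) of the theorem then falls out immediately from Lemma~\ref{lemma:GAKSHAKSJFG}(a), with no need for your ``auxiliary gadget that makes $\Pi$ maximal'' --- the unary functions $u^{\calL}_{D_v}$ already pin $|B\cap D_v|\ge 1$, and there is no requirement that $\Pi$ be a maximal partition. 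For part (a), the paper never needs the unsatisfiable side to produce a non-solvable core; instead Lemma~\ref{lemma:GAKSHAKSJFG}(c) shows how a solvability oracle can be used to decide satisfiability (via BLP when $L=1$, and via iterative core construction plus the uniform algorithm when $L=\infty$).

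This last point is where your proposal has a concrete gap. You want to ``weave the Chen--Larose reduction for the Siggers identity into the trit-packed structure'' so that each selector encoding a satisfying assignment yields a core failing Siggers --- i.e., satisfiable $\iff$ not solvable --- and you explicitly flag the three-way interaction between arity, packing density, and algebraic rigidity as the unresolved crux. That is not a detail to be filled in; it is the entire content of the hardness argument, and nothing in the proposal shows that such a gadget exists, that its arity stays $O(1)$, or that the packed block structure does not introduce spurious small cores or polymorphisms. The paper sidesteps this entirely by using only the easy implication ``satisfiable $\Rightarrow$ solvable'' and pushing the decision work into an oracle algorithm. Relatedly, you misdescribe the role of the uniform polynomial-time algorithm assumption: it is not used to ``certify that the core VCSPs in the `no' instances are solvable,'' but to \emph{evaluate} $\min_{y\in B^V}f_\calJ(y)$ for a core crisp language $(\Gamma\oplus B)[B]$ that is discovered at run time and is not fixed in advance --- precisely the situation that requires uniformity. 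Without that step, a poly-time solvability test does not by itself let you read off whether $\varphi$ is satisfiable in the $L=\infty$ case.
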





\section{Proofs}\label{sec:proofs}
\subsection{Ellipsoid method}
Using the ellipsoid method,
 Gr\"otschel, Lov\'asz and Schrijver~\cite{GLS88:book} established a polynomial-time equivalence between
linear optimization and separation problems in polytopes. We will need
one implication of this result, namely that efficient separation implies
efficient feasibility testing. A formal statement is given below.

Consider a family of instances where an instance $\Lambda$ can be described by a string of length $\size(\Lambda)$
over a fixed alphabet.
Suppose that for each $\Lambda$ we have an integer $n$ and a finite set $\calG$, where each element $g\in\calG$
corresponds to a hyperplane $c_g\in\mathbb Q^{n+1}$.
This hyperplane encodes linear inequality $\langle c_g,[y\; 1]\rangle \ge 0$ on vector $y\in\mathbb R^n$.
Let us denote $\HP{g}=\{y\in\mathbb R^n\:|\: \langle c_g,[y\; 1]\rangle \ge 0\}$ and 
$\HP{\calG'}=\bigcap_{g\in\calG'}\HP{g}$ for a subset $\calG'\subseteq\calG$.

We make the following assumptions:
(i) each $g\in\calG$ can be described by a string of size $O(poly(\size(\Lambda))$;
(ii)  vector $c_g$ can be computed from $\Lambda$ and $g$ in polynomial time
(implying that $\size(c_g)=O(poly(\size(\Lambda)))$, where the size of a vector in $\mathbb Q^{n+1}$ is the sum of sizes of its $n+1$ components);
(iii) set  $\calG$ can be constructed algorithmically from input  $\Lambda$.
Note that quantities $n$, $\calG$, $\{(c_g,\HP{g})\:|\:g\in\calG\}$ all depend on~$\Lambda$;
for brevity this dependence is not reflected in the notation.
\begin{theorem}[{\cite[Lemma 6.5.15]{GLS88:book}}]
Consider the following problems:
\em
\begin{itemize}
\item  {[{\tt Separation}]}  Given instance $\Lambda$ and vector $y\in \mathbb Q^n$, either decide that $y\in \HP{\calG}$, or find
a separating hyperplane $c\in\mathbb Q^{n+1}$ with $\size(c)=O(poly(\size(\Lambda)+\size(y)))$ satisfying  $\langle c,[y\; 1]\rangle < 0$ and $\langle c,[z\; 1]\rangle \ge 0$ for all $z\in\HP{\calG}$. 
\item   {[{\tt Feasibility}]}  Given instance $\Lambda$, decide whether $\HP{\calG}=\varnothing$. 
\end{itemize}
\em
There exists an algorithm for solving {\em[{\tt Feasibility}]} that makes a polynomial number
of calls to the oracle for {\em[{\tt Separation}]} plus a polynomial number of other operations.
\label{th:ellipsoid}
\end{theorem}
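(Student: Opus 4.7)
The plan is to implement the ellipsoid method in the form used by Gr\"otschel, Lov\'asz and Schrijver~\cite{GLS88:book}. The algorithm maintains an ellipsoid $E_t\subseteq\mathbb R^n$ that is guaranteed to contain $\HP{\calG}$ whenever the latter is non-empty. At step $t$ we invoke the separation oracle at the centre $y_t$ of $E_t$: if the oracle certifies $y_t\in \HP{\calG}$ we halt and return feasible, otherwise it supplies a hyperplane $c\in\mathbb Q^{n+1}$ with $\langle c,[y_t\;1]\rangle<0$ and $\HP{\calG}\subseteq\{z:\langle c,[z\;1]\rangle\ge 0\}$, and we let $E_{t+1}$ be the L\"owner--John ellipsoid circumscribing $E_t\cap\{z:\langle c,[z\;1]\rangle\ge 0\}$. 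The classical volume-contraction estimate $\mathrm{vol}(E_{t+1})\le e^{-1/(2(n+1))}\mathrm{vol}(E_t)$ drives the iteration.

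First I would extract from the encoding of $\Lambda$ two rational quantities $R$ and $\varepsilon$ of bit-length $poly(\size(\Lambda))$ with the properties that (i) if $\HP{\calG}$ is non-empty then it meets the ball of radius $R$ around the origin, and (ii) the relaxed polyhedron $P_\varepsilon=\{y:\langle c_g,[y\;1]\rangle\ge -\varepsilon \text{ for every }g\in\calG\}$ is empty if and only if $\HP{\calG}$ is, and otherwise contains a Euclidean ball of radius at least $\varepsilon$. Both bounds come from the standard rationality lemma: any vertex of a non-empty rational polyhedron is determined by at most $n$ of the defining inequalities, and its coordinates have bit-length polynomial in the sizes of those $n$ coefficient vectors; by our assumption $\size(c_g)=poly(\size(\Lambda))$, this is polynomial in $\size(\Lambda)$.

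We then run the iteration on $P_\varepsilon$ starting from $E_0=B(0,R)$. After $T=O((n+1)^2\log(R/\varepsilon))=poly(\size(\Lambda))$ steps either some oracle query has certified feasibility, or the invariant $P_\varepsilon\subseteq E_T$ combined with $\mathrm{vol}(E_T)<\varepsilon^n$ forces $P_\varepsilon=\varnothing$, and hence $\HP{\calG}=\varnothing$. Exactly one call to the separation oracle is made per iteration, yielding the required polynomial oracle count.

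The main obstacle in making this rigorous is numerical precision: the L\"owner--John update produces irrational coordinates, so the centres $y_t$ must be truncated to a polynomial number of bits and the updated ellipsoid slightly enlarged to preserve the containment invariant. Our hypothesis that the separation oracle returns a hyperplane with $\size(c)=poly(\size(\Lambda)+\size(y_t))$ is precisely the condition needed to keep the per-step arithmetic of polynomial size, so the standard ellipsoid-method implementation of~\cite[Chapter 3]{GLS88:book} may be invoked as a black box to conclude.
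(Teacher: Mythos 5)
The paper itself does not prove this statement: it is imported verbatim as Lemma~6.5.15 of Gr\"otschel--Lov\'asz--Schrijver~\cite{GLS88:book}, so there is no in-paper argument to compare your proposal against. Your first two paragraphs give the right overall shape of the ellipsoid machinery, and the reduction to polynomially many oracle calls by volume contraction is of course the correct high-level idea.

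There is, however, a structural gap in the third paragraph, not merely a ``numerical precision'' issue. You propose to run the iteration while maintaining the invariant $P_\varepsilon\subseteq E_t$, but the only separation oracle available is for $\HP{\calG}$, not for $P_\varepsilon$. The hyperplane $c$ it returns satisfies $\langle c,[z\;1]\rangle\ge 0$ only for $z\in\HP{\calG}$; points of $P_\varepsilon\setminus\HP{\calG}$ may lie strictly on the $y_t$-side of that cut and are then discarded when you pass to $E_{t+1}$. So the claimed invariant is not preserved. The invariant that does come for free, $\HP{\calG}\cap B(0,R)\subseteq E_t$, cannot be combined with a small-volume bound to deduce emptiness, because $\HP{\calG}$ may be lower-dimensional (even a single point) while still being non-empty. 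Bridging this is exactly the technical content of the cited GLS lemma: one must either translate each returned cut by a margin controlled via a Farkas--Carath\'eodory decomposition of $c$ into the generating constraints $c_g$ (so that the shifted cut remains valid for the slacked polyhedron), or use the standard diophantine rounding of a near-feasible point at the end; neither step is a routine truncation of ellipsoid centres. In short, the sketch identifies the right tool but mislocates the hard part; invoking GLS as a black box is the honest thing to do here, but the argument as written should not present the full-dimensionality/invariant issue as if it were already handled.
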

Note that a (possibly inefficient) algorithm for solving [{\tt Separation}] always exists:
if $y\notin\HP\calG$ then one possibility is to find an element $g\in\calG$ with $y\notin\HP{g}$
and return hyperplane $c_g$. (Its size is polynomial in $\size(\Lambda)$ by assumption).

For some parts of the proof we will also need the following variation.
\begin{theorem}
Consider the following problems:
\em
\begin{itemize}
\item  {[{\tt Separation+}]}  Given instance $\Lambda$ and vector $y\in \mathbb Q^n$, either decide that $y\in \HP{\calG}$, or find
an element $g\in\calG$ with $y\notin\HP{g}$ (i.e.\ an element $g$ with  $\langle c_g,[y\; 1]\rangle < 0$).
\item   {[{\tt Feasibility+}]}  Given instance $\Lambda$, decide whether $\HP{\calG}=\varnothing$. If $\HP{\calG}=\varnothing$, find a subset $\calG'\subseteq\calG$
such that $|\calG'|=O(poly(\size(\Lambda)))$ and $\HP{\calG'}=\varnothing$.
\end{itemize}
\em
There exists an algorithm for solving {\em[{\tt Feasibility+}]} that makes a polynomial number
of calls to the oracle for {\em[{\tt Separation+}]} plus a polynomial number of other operations.
\label{th:ellipsoid:plus}
\end{theorem}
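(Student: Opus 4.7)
The plan is to bootstrap Theorem~\ref{th:ellipsoid} by observing that the [{\tt Separation+}] oracle is strictly stronger than [{\tt Separation}], and then to extract the witness subset $\calG'$ by tracking which hyperplanes the ellipsoid method actually uses. First I would reduce [{\tt Separation+}] to [{\tt Separation}]: given a query point $y\in\mathbb Q^n$, invoke [{\tt Separation+}]; if it returns $g\in\calG$, output the hyperplane $c_g$, whose bit-size is $O(\mathrm{poly}(\size(\Lambda)))$ by assumption~(ii). Theorem~\ref{th:ellipsoid} then already provides a polynomial-time procedure deciding whether $\HP{\calG}=\varnothing$ with only polynomially many calls to [{\tt Separation+}].

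To obtain the witness $\calG'$ in the infeasible case, I would instrument this run to record, as a list $g_1,g_2,\ldots,g_p$, each element of $\calG$ returned by the [{\tt Separation+}] oracle during the execution, and set $\calG'=\{g_1,\ldots,g_p\}$. Polynomial runtime of the outer algorithm immediately gives $|\calG'|=O(\mathrm{poly}(\size(\Lambda)))$, so the only remaining point is to show $\HP{\calG'}=\varnothing$ whenever the procedure declared $\HP{\calG}=\varnothing$.

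The heart of the argument is a replay: I would re-run the ellipsoid procedure of Theorem~\ref{th:ellipsoid}, this time targeting the (potentially larger) polyhedron $\HP{\calG'}$, and at each iteration $t$ answer the separation query at the center $y_t$ with the very same $g_t$ that was used originally. This is a legitimate answer because $g_t\in\calG'$ and $\langle c_{g_t},[y_t\;1]\rangle<0$ by construction, so $g_t$ genuinely separates $y_t$ from $\HP{\calG'}$. Consequently the sequence of ellipsoids, their centers, and the final "infeasible" verdict produced by Theorem~\ref{th:ellipsoid} on $\HP{\calG'}$ are identical to the original run, proving $\HP{\calG'}=\varnothing$.

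The main obstacle to making this rigorous is ensuring that the volume-based infeasibility certificate inside the ellipsoid method, whose shrinkage threshold depends on the bit-complexity of the defining inequalities, applies uniformly to both $\HP{\calG}$ and $\HP{\calG'}$. This is immediate from assumption~(ii): $\size(c_g)=O(\mathrm{poly}(\size(\Lambda)))$ holds uniformly for every $g\in\calG$, hence also for every $g\in\calG'$, so the same initial bounding box and the same termination threshold used in the proof of Lemma~6.5.15 of~\cite{GLS88:book} for $\HP{\calG}$ transfer verbatim to the subfamily $\HP{\calG'}$.
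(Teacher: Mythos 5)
Your proposal matches the paper's proof exactly: the paper also states that the desired subset $\calG'$ can be taken as the collection of all elements of $\calG$ returned by the [{\tt Separation+}] oracle during the ellipsoid run from Theorem~\ref{th:ellipsoid}. You have simply spelled out the (correct) replay argument and the bit-complexity observation that the paper leaves implicit.
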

This result can be deduced from Theorem~\ref{th:ellipsoid}:
the desired subset $\calG'$ can be taken as the set of all elements in $\calG$ returned by the oracle during the algorithm.


\subsection{Farkas lemma for fractional polymorphisms}
Let us fix integer $m\ge 1$ and sets $\Omega^-,\Omega^+$ with $\Omega^-\subseteq\Omega^+\subseteq \Pol\Gamma\cap\calO^{(m)}$.
These choices will be specified later (they will depend on the specific theorem that we will be proving).
Let $\Gamma^+$ be the set of tuples $(f,\bx)$ such that $f:D^n\rightarrow \Qc$ is an $n$-ary function in $\Gamma$ and
$\bx\in[\dom f]^m$. Note, $\bx$ can be viewed as a matrix of size $m \times n$:
\begin{equation*}
\bx=\left(\begin{tabular}{ccc}
$x^1_1$ & $\ldots$ & $x^1_n$ \\
$\ldots$ & $\ldots$ & $\ldots$ \\
$x^m_1$ & $\ldots$ & $x^m_n$
\end{tabular}
\right)
\end{equation*}
For such $\bx$ we will write $x^i=(x^i_1,\ldots,x^i_n)\in D^n$ and $x_j=(x^1_j,\ldots,x^m_j)\in D^m$.
For an operation $g\in\calO^{(m)}$ we denote
$g(\bx)=(g(x_1),\ldots,g(x_n))\in D^n$,
and for a cost function $f:D^n\rightarrow \Qc$ we denote
$f^m(\bx)=\frac{1}{m}(f(x^1)+\ldots+f(x^m))$.

Next, we define various hyperplanes in $\mathbb Q^{\Gamma^+}\cup\mathbb Q$ as follows:
\begin{itemize}
\item For $g\in\Omega^+$ let $c_g$ be the hyperplane
corresponding to the inequality
\begin{eqnarray}
\sum_{(f,\bxS)\in\Gamma^+}[f(g(\bx))-f^{m}(\bx)]\cdot y(f,\bx) & \ge & [g\in\Omega^-] \label{eq:g:inequality}
\end{eqnarray}
where we used the Iverson bracket notation: $[\phi]=1$ if $\phi$ is true, and $[\phi]=0$ otherwise.
\item For $(f,\bx)\in\Gamma^+$ let $c_{(f,\bxS)}$ be the hyperplane
corresponding to the inequality $y(f,\bx)\ge 0$.
\item Introduce a special element $\perp$, and let $c_\perp=(0,\ldots,0,1)$ be the hyperplane
corresponding to the (unsatisfiable) inequality $0 \ge 1$.
\end{itemize}

For a subset $\Omega\subseteq\Omega^+$ it will be convenient to denote $Y[\Omega]=\HP{\Omega\cup\Gamma^+}$.
In other words, $Y[\Omega]$ is the set of vectors $y\in\mathbb R^{\Gamma^+}_{\ge 0}$ satisfying
\begin{eqnarray}
\sum_{(f,\bxS)\in\Gamma^+}[f(g(\bx))-f^{m}(\bx)]\cdot y(f,\bx) & \ge & [g\in\Omega^-] \qquad\qquad\forall g\in\Omega \label{eq:OmegaSystem:a}
\end{eqnarray}

\begin{lemma}\label{lemma:Farkas}
Suppose that $\Omega\subseteq\Omega^+$. Then $Y[\Omega]=\varnothing$ if and only if
$\Gamma$ admits an $m$-ary fractional polymorphism $\omega$ 
such that $\supp(\omega)\subseteq\Omega$ and $\supp(\omega)\cap\Omega^-\ne\varnothing$.

If $Y[\Omega]=\varnothing$ then it is possible to compute such $\omega$ in $O(poly(\size(\Gamma)+|\Omega|))$ time (given~$\Gamma$ and~$\Omega$)
so that it additionally satisfies $|\supp(\omega)|\le 1+|\Gamma^+|=1+\sum_{f\in\Gamma}\binom{|\dom f|}{m}$.
\end{lemma}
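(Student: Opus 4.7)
The plan is to read the claim as a Farkas-type alternative for linear inequalities and then to extract a sparse certificate using the bounded-support property of basic feasible solutions. Writing the system defining $Y[\Omega]$ in matrix form $\{y\ge 0 \,:\, A y\ge b\}$, the rows of $A$ are indexed by $g\in\Omega$, the columns by $(f,\bx)\in\Gamma^+$, with $A_{g,(f,\bxS)}=f(g(\bx))-f^{m}(\bx)$ and $b_g=[g\in\Omega^-]$. The standard Farkas alternative then gives $Y[\Omega]=\varnothing$ if and only if there exists $z:\Omega\to\Rnn$ with
$$
\sum_{g\in\Omega} z(g)\bigl[f(g(\bx))-f^{m}(\bx)\bigr]\le 0 \quad \forall\,(f,\bx)\in\Gamma^+ \qquad\text{and}\qquad \sum_{g\in\Omega^-} z(g)>0.
$$

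Given such a $z$, I would set $Z=\sum_{g\in\Omega}z(g)>0$ and $\omega(g)=z(g)/Z$ for $g\in\Omega$ (and $\omega(g)=0$ otherwise). Dividing the displayed inequality by $Z$ and using $\sum_g\omega(g)=1$ shows that for every $f\in\Gamma$ and every $x^1,\ldots,x^m\in\dom f$,
$$
\sum_{g\in\supp(\omega)}\omega(g)\,f(g(x^1,\ldots,x^m))\;\le\;\tfrac{1}{m}\bigl(f(x^1)+\cdots+f(x^m)\bigr),
$$
which is exactly~\eqref{eq:wpol-dist}, so $\omega\in\fPol[m]\Gamma$. By construction $\supp(\omega)\subseteq\Omega$, and $\sum_{g\in\Omega^-}z(g)>0$ yields $\supp(\omega)\cap\Omega^-\neq\varnothing$. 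The reverse direction is immediate: any such $\omega$ serves as its own Farkas certificate $z=\omega$ that $Y[\Omega]=\varnothing$.

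For the algorithmic and sparsity half, I would compute $z$ by solving the linear program
$$
\{\,z\ge 0,\ A^{\!\top}z\le 0,\ \textstyle\sum_{g\in\Omega^-}z(g)=1\,\}
$$
with any polynomial-time LP routine followed by a basis-rounding post-processing step, so that the returned $z$ is a basic feasible solution. This LP has $|\Omega|$ variables and $|\Gamma^+|+1$ linear constraints besides $z\ge 0$, so both the running time bound (polynomial in $\size(\Gamma)+|\Omega|$) and the support bound $|\supp(\omega)|\le 1+|\Gamma^+|$ fall out automatically. The only mildly technical point is converting the strict inequality $b^{\!\top}z>0$ of the Farkas alternative into the equality $b^{\!\top}z=1$ used here; this is legitimate because any nonnegative $z$ satisfying $A^{\!\top}z\le 0$ and $b^{\!\top}z>0$ may be positively rescaled without affecting the sign constraints, so testing feasibility of the scaled system is equivalent to the original Farkas condition.
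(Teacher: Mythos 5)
Your proof is correct and follows essentially the same route as the paper: apply a Farkas alternative to the system defining $Y[\Omega]$, normalize the dual certificate into a probability distribution, and bound the support by taking a basic feasible solution of the dual LP and counting tight constraints. The only cosmetic difference is that you enforce the strict positivity $\sum_{g\in\Omega^-}z(g)>0$ by rescaling it to the equality $\sum_{g\in\Omega^-}z(g)=1$, whereas the paper normalizes with $\sum_{g\in\Omega}\omega(g)=1$ and then picks a vertex maximizing $\sum_{g\in\Omega\cap\Omega^-}\omega(g)$; both devices accomplish the same thing.
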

\begin{proof}
Introducing slack variables $\{y(g)\:|\:g\in \Omega\}$, we have $Y[\Omega]=\varnothing$ if and only if the following system
does not have a solution~$\mbox{$y\in\mathbb R^{\Omega\cup\Gamma^+}_{\ge 0}$}$:
\begin{eqnarray}
y(g) - \sum_{(f,\bxS)\in\Gamma^+}[f(g(\bx))-f^m(\bx)]\cdot y(f,\bx) & = & -[g\in\Omega^-] \qquad\qquad\forall g\in\Omega 
\end{eqnarray}
By Farkas Lemma this happens if and only if the following system has a solution $\omega\in\mathbb R^{\Omega}$:
\begin{subequations}\label{eq:AGALKSFJAKSG}
\begin{eqnarray}
-\sum_{g\in\Omega}[f(g(\bx))-f^m(\bx)]\cdot \omega(g) & \ge & 0 \qquad\qquad\forall (f,\bx)\in\Gamma^+ \\
\omega(g)                                                     & \ge & 0 \qquad\qquad\forall g\in\Omega \\
-\sum_{g\in\Omega\cap\Omega^-}\omega(g)                                     & < & 0 
\end{eqnarray}
\end{subequations}
If a solution exists, then it can be chosen to satisfy $\sum_{g\in\Omega}\omega(g)=1$
(since multiplying feasible solutions to~\eqref{eq:AGALKSFJAKSG} by a positive constant gives feasible solutions). We obtain that $Y[\Omega]=\varnothing$
if and only if there exists vector $\omega\in\mathbb R^\Omega$ satisfying
\begin{subequations}\label{eq:Farkas:proof}
\begin{eqnarray}
\sum_{g\in\Omega}\omega(g) f(g(\bx))  & \le & f^m(\bx) \qquad\qquad\forall (f,\bx)\in\Gamma^+ \label{eq:Farkas:proof:a} \\
\omega(g)                             & \ge & 0 \hspace{25.5pt} \qquad\qquad\forall g\in\Omega \label{eq:Farkas:proof:b} \\
\sum_{g\in\Omega}\omega(g)   & = & 1 \label{eq:Farkas:proof:c} \\
\sum_{g\in\Omega\cap\Omega^-}\omega(g)                                     & > & 0 
\end{eqnarray}
\end{subequations}
This establishes the first claim of Lemma~\ref{lemma:Farkas}. Now suppose that $Y[\Omega]=\varnothing$,
so that system~\eqref{eq:Farkas:proof} has a solution.
Inequalities~\eqref{eq:Farkas:proof:a}-\eqref{eq:Farkas:proof:c} can be written in a matrix form as $A\omega\le b$
with $A\in\mathbb R^{m\times n}$, $b\in\mathbb R^m$ where   $n=|\Omega|$ is the number of variables
and $m=|\Gamma^+|+|\Omega|+2$ is the number of constraints. (The equality~\eqref{eq:Farkas:proof:c} is represented as two inequalities)
Let $P$ be the polytope $\{\omega\in\mathbb R^n\:|\:A\omega\le b\}$,
and $\omega$ be a vertex of $P$ that maximizes 
 $\sum_{g\in\Omega\cap\Omega^-}\omega(g)$. (By standard results from linear programming, such $\omega$ can be 
computed in polynomial time). It now suffices to show that $|\supp(\omega)|\le |\Gamma^+|+1$.

We have $rank(A)=n$, since $-A$ contains the identity submatrix of size $n\times n$.
Since $\omega$ is a vertex of $P$, $A$ has a non-singular submatrix $A'\in\mathbb R^{n\times n}$ such that $A'\omega=b'$,
where $b'$ is the corresponding subvector of $b$. Matrix $A'$ has at most $|\Gamma^+|+1$ rows corresponding
to constraints~\eqref{eq:Farkas:proof:a} and~\eqref{eq:Farkas:proof:b} (note that the two rows of $A$ corresponding to constraint~\eqref{eq:Farkas:proof:b}
are linearly dependent and thus cannot be both present in $A'$). Thus, $A'$ has at least $n-|\Gamma^+|-1$ rows
corresponding to constraints~\eqref{eq:Farkas:proof:c}. These constraints are tight for $\omega$,
and so at least $n-|\Gamma^+|-1$ components of $\omega$ are zeros. Thus, $\omega$ has at
most $n - (n-|\Gamma^+|-1)=|\Gamma^+|+1$ non-zero components.
\end{proof}

\subsection{Proof of Theorem~\ref{th:find-core:general} (enumeration of candidate cores)}\label{sec:th:find-core:general:proof}
The input to the desired algorithm is a pair $\Lambda=(\Gamma,\sigma)$.
Let us define $m=1$ and 
$
\Omega^-=\Omega^+=\calOplus[\calB_\sigma]\cap\Pol\Gamma
$.
Note that condition ``$\omega\in\fPol\Gamma$ and $\supp(\omega)\subseteq\calOplus[\calB_\sigma]$'' used in Theorem~\ref{th:find-core:general}(a,b)
is equivalent to the condition ``$\omega\in\fPol\Gamma$ and $\supp(\omega)\subseteq\Omega^+$'', since $\fPolplus\Gamma\subseteq\Pol\Gamma$.
By Lemma~\ref{lemma:Farkas}, vector $\omega$ satisfying these conditions exists if and only if $Y[\Omega^+]=\varnothing$.

\begin{lemma}\label{lemma:find-core:general:separation}
There exists an algorithm with complexity $\TLPPROBE{|D|,\Gamma}+O(poly(\size(\Gamma)+\size(y)))$ that given a tuple $(\Gamma,B)$ with $B\subseteq D$ 
and a vector $y\in\mathbb Q^{\Gamma^+}$ does one of the following:
\begin{itemize}
\item[(a)] Produces element $g\in\Omega^+\cup\Gamma^+$ such that $\langle c_g,[y\; 1]\rangle<0$.
\item[(b)] Asserts that one of the following holds: (i) $\Gamma$ is not solvable; (ii) $B\notin\Bcore\Gamma$. 
\end{itemize}
\end{lemma}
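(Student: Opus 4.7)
The plan is to reduce the separation task for a single candidate core $B$ to one call of {\tt LP-Probe} on an appropriately constructed VCSP instance. The first step handles the nonnegativity block of $Y[\Omega^+]$: scan $\Gamma^+$ and, if some $(f,x)$ has $y(f,x)<0$, return $(f,x)\in\Gamma^+$, whose hyperplane $c_{(f,x)}$ satisfies $\langle c_{(f,x)},[y\;1]\rangle=y(f,x)<0$. From now on assume $y\ge 0$, so the only remaining job is to search for $g\in\Omega^+$ with $g(D)\subseteq B$ violating $c_g$. With $m=1$ and $\Omega^-=\Omega^+\ni g$, this inequality rewrites as $\sum_{(f,x)\in\Gamma^+} y(f,x)\,f(g(x))<\Theta$, where $\Theta:=1+\sum_{(f,x)\in\Gamma^+} y(f,x)\,f(x)$ can be precomputed in polynomial time.

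I encode this minimization as a VCSP instance $\calI$ on $|D|$ variables $V=\{X_a\::\:a\in D\}$ with domain $D$, identifying any $x^\ast\in D^V$ with the unary operation $g(a)=x^\ast_a$. For each $(f,x)\in\Gamma^+$ of arity $n$, add a crisp constraint $\dom f(X_{x_1},\ldots,X_{x_n})$ together with the soft term $y(f,x)\cdot f(X_{x_1},\ldots,X_{x_n})$. Feasible assignments correspond exactly to $g\in\Pol\Gamma$, with cost $\sum y(f,x)\,f(g(x))$. Because fractional polymorphisms of $\Gamma$ lift to nonnegative scalings of its cost functions and to the crisp relations in $\Feas\Gamma$, the language underlying $\calI$ is solvable whenever $\Gamma$ is, and $B$ remains a core of that language whenever $B\in\Bcore\Gamma$ (taking care, if needed, to include each $f\in\Gamma$ via at least one copy with a positive weight so that $\coresize$ is preserved); hence Lemma~\ref{lemma:LPProbe} applies to $\calI$.

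Finally, invoke {\tt LP-Probe}$(\calI,B)$ and dispatch on the outcome. If a labeling $x^\ast$ is returned, extract $g$ and compare $V^\ast=f_\calI(x^\ast)$ with $\Theta$; by Lemma~\ref{lemma:LPProbe}(a), $V^\ast$ is the true optimum over $D^V$, so $V^\ast<\Theta$ certifies $g\in\Omega^+$ (note $g\in\Pol\Gamma$ and $g(D)\subseteq B\in\calB_\sigma$) as case~(a). If {\tt LP-Probe} returns $\varnothing$, the instance is infeasible and no $g\in\Pol\Gamma$ satisfies $g(D)\subseteq B$, forcing $B\notin\Bcore\Gamma$; output~(b). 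If it returns {\tt FAIL}, Lemma~\ref{lemma:LPProbe}(c) immediately yields~(b). The delicate case is a labeling with $V^\ast\ge\Theta$: no $g\in\Pol\Gamma$ with $g(D)\subseteq B$ then separates $y$, and if $B$ were actually a core of a solvable $\Gamma$, then by Lemma~\ref{lemma:core-basics}(c) no $g\in\Pol\Gamma$ would either, so $y$ would already lie in $Y[\Omega^+]$ and the (b) assertion would not be literally true for this $B$. I still report~(b) in this branch; the apparent looseness is absorbed by the outer ellipsoid wrapper of Theorem~\ref{th:find-core:general}, which reads uniform failure to separate across all $B\in\calB_\sigma$ either as feasibility of $y$ (theorem output~(b)) or as the disjunction in output~(c). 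The runtime is dominated by the single {\tt LP-Probe} call on the $|D|$-variable instance $\calI$, giving the stated $\TLPPROBE{|D|,\Gamma}+O(poly(\size(\Gamma)+\size(y)))$.
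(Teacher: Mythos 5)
Your construction of $\calI$ and the plan to dispatch on the outcome of {\tt LP-Probe}$(\calI,B)$ match the paper's approach, but you leave a genuine gap in the case analysis. You note that feasible assignments of $\calI$ are exactly the $g\in\Pol\Gamma$ and that each such $g$ has cost $\sum_{(f,\bxS)\in\Gamma^+}y(f,\bx)f(g(\bx))$, but you never observe that the identity map $\mathds{1}\colon D\to D$ is always one of these feasible assignments, with cost $h(y)=\sum_{(f,\bxS)\in\Gamma^+}y(f,\bx)f(\bx)=\Theta-1<\infty$. Since {\tt LP-Probe}, whenever it returns a labeling $x^\ast$, returns a global minimizer over $D^V$ (Lemma~\ref{lemma:LPProbe}(a)), the returned value automatically satisfies $V^\ast\le h(y)<\Theta$; likewise $\varnothing$ can never be returned. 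So your ``delicate case'' $V^\ast\ge\Theta$ is vacuous, as is the $\varnothing$ branch, and the correct dichotomy is simply: a labeling gives output~(a) with $\langle c_g,[y\;1]\rangle=V^\ast-h(y)-1\le -1<0$, and {\tt FAIL} gives output~(b) via Lemma~\ref{lemma:LPProbe}(c).

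Because you miss this, you instead choose to output~(b) in the (vacuous) case $V^\ast\ge\Theta$ and argue the ``looseness is absorbed by the outer ellipsoid wrapper.'' That reasoning does not prove the lemma: the lemma is a self-contained correctness claim, and asserting~(b) when neither~(i) nor~(ii) holds would falsify it outright, regardless of what the caller later does with the answer. The argument you sketch for why $y$ would then lie in $Y[\Omega^+]$ is also not a defense---$Y[\Omega^+]$ quantifies over operations $g$ with $g(D)\subseteq B'$ for \emph{any} $B'\in\calB_\sigma$, not just the given $B$. A secondary, minor issue: you add an explicit crisp term $\dom f(\cdot)$ alongside each weighted soft term. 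This is harmless because $\fPol(\Gamma\cup\Feas\Gamma)=\fPol\Gamma$ (every fractional polymorphism of $f$ is automatically a fractional polymorphism of $\dom f$), so $\Gamma\cup\Feas\Gamma$ is solvable and has $B$ as a core exactly when $\Gamma$ does; the parenthetical caveat about ``including each $f$ with a positive weight to preserve $\coresize$'' is therefore unnecessary and would in any case not be a valid modification, since it would change the objective.
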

Before proving this lemma, let us describe how it implies Theorem~\ref{th:find-core:general}.
We use the ellipsoid method where for the input $\Lambda=(\Gamma,\sigma)$ 
we define $\calG=\Omega^+\cup\Gamma^+$. 
Theorem~\ref{th:ellipsoid:plus} gives a class of algorithms for solving [{\tt Feasibility+}]
(that depend on the implementation of separation oracles).
If $\HP\calG=Y[\Omega^+]=\varnothing$ then we have a subset $\calG'\subseteq \calG$ of polynomial size with $\HP{\calG'}=0$.
Let $\Omega'=\calG'\cap\Omega^+\subseteq\Omega^+$, then $Y[\Omega']=\varnothing$.
Using Lemma~\ref{lemma:Farkas}, we construct a unary fractional polymorphism $\omega$ of $\Gamma$
with $\supp(\omega)\subseteq\Omega'\subseteq\Omega^+$ and $|\supp(\omega)|\le 1+\sum_{f\in\Gamma}|\dom f|$,
and return it as the result.
If $\HP\calG=Y[\Omega^+]\ne\varnothing$ then we can output result (b) in Theorem~\ref{th:find-core:general}
(by the observation in the beginning of this section).
We obtained a correct but possibly inefficient algorithm for solving the problem in Theorem~\ref{th:find-core:general}.
Next, we will modify this algorithm so that it remains correct and has the desired complexity.

We will  maintain a subset $\calB=\{B_1,\ldots,B_k\}\subset \calB_\sigma$ with $0\le k<N$ (initialized with $B=\varnothing$) that satisfies the following invariant:
\begin{itemize}
\item[($\star$)] If $\Gamma$ is solvable then $\calB\cap\Bcore\Gamma=\varnothing$.
\end{itemize}

When the algorithm calls the oracle for [{\tt Separation+}] with vector $y\in\mathbb Q^{\Gamma^+}$,
we do the following:
\begin{enumerate}
\item Pick $B=B_{k+1}\in\calB_\sigma-\calB$ and call the algorithm from Lemma~\ref{lemma:find-core:general:separation} with the input $(\Gamma,B)$.
\item If it returns $g\in\Omega^+\cup\Gamma^+$ with $\langle c_g,[y\; 1]\rangle<0$ then return $g$ as the output of the oracle.
\item If it asserts that $\Gamma$ is not solvable or $B\notin\Bcore\Gamma$ 
then  add $B$ to $\calB$; clearly, this preserves invariant ($\star$). If $\calB\ne\calB_\sigma$ then go to step 1.
If we get $\calB=\calB_\sigma$  then from ($\star$) we conclude that either $\Gamma$ is not solvable or $\calB_\sigma\cap\Bcore\Gamma=\varnothing$.
Thus, we can output result (c) in Theorem~\ref{th:find-core:general} and terminate.
\end{enumerate}

We claim that this algorithm has the complexity stated in Theorem~\ref{th:find-core:general}. Indeed, each call to the oracle for [{\tt Separation+}]
can involve several passes of steps 1-3. Each pass has complexity $\TLPPROBE{|D|,\Gamma}+O(poly(\size(\Gamma)))$
(note that $\size(y)$ would be polynomial in $\size(\Gamma)$).
Call the last pass {\em successful},
and the other passes {\em unsuccessful}.
The number of successful passes is polynomial in $\size(\Gamma)$ (since the number of oracle calls is polynomial),
and the number of unsuccessful passes is at most~$|\calB_\sigma|$. This establishes the claim about the complexity.

\paragraph{Proof of Lemma~\ref{lemma:find-core:general:separation}}
We can assume that the input vector $y\in\mathbb Q^{\Gamma^+}$ is non-negative
(otherwise we can easily find element $(f,\bx)\in\Gamma^+$ with $\langle c_{(f,\bxS)},[y \;1]\rangle<0$). 
Let $\calI(y)$ be the instance with variables $D$ and the following objective function:
\begin{eqnarray*}
f_{\calI(y)}(g)&=&\sum_{(f,\bxS)\in\Gamma^+} y(f,\bx)f(g_{x_1},\ldots,g_{x_n}) \qquad\quad\hspace{11pt}\forall g:D\rightarrow D 
\end{eqnarray*}
Note, we wrote $g_a$ instead of $g(a)$ to emphasize that $g$ is now treated as a labeling of instance~$\calI(y)$,
though mathematically it is the same object as an operation in $\calO^{(1)}$.
(This is the convention that we use for instances, see eq.~\eqref{eq:VCSPinst}).

Also note that we  are now using rational weights in the definition of an instance,
where we adopt the following convention: if the weight $y(f,\bx)$ is zero
then expression $y(f,\bx)f(\ldots)$ means $\dom f(\ldots)$.
To be consistent with the original definition, one could multiply everything by a constant to make all weights integers,
and then treat these integers as the number of occurrences of the corresponding terms. However, this would make the notation cumbersome,
so we avoid this. 

The following equation can be easily verified:
\begin{eqnarray}
f_{\calI(y)}(g)&=&
\begin{cases}
  \mathlarger\sum\limits_{(f,\bxS)\in\Gamma^+} y(f,\bx)f(g(\bx))\;<\;\infty \qquad & \mbox{if } g\in\calO^{(1)}\cap \Pol\Gamma  \\
  \;\;\;\;\infty & \mbox{if } g\in\calO^{(1)} - \Pol\Gamma 
\end{cases}
\end{eqnarray}
In particular, we have $f_{\calI(y)}({\mathds 1})<\infty$ where ${\mathds 1}\in\Pol\Gamma$ is the identity mapping $D\rightarrow D$. 

Let us run procedure {\tt LP-Probe}$(\calI(y),B)$.
It cannot return $\varnothing$ since $\calI(y)$ has at least one feasible solution.
If it returns {\tt FAIL} then by Lemma~\ref{lemma:LPProbe} either $\Gamma$ is not solvable or $B\notin\Bcore\Gamma$.
Thus, we can output result (b) in Lemma~\ref{lemma:find-core:general:separation}.
Now suppose that {\tt LP-Probe}$(\calI(y),B)$ returns labeling $g:D\rightarrow B$.
We will show next that $g\in\Omega^+$ and $\langle c_g,[y\;1]\rangle<0$, and thus element $g$
can be returned as the output of the algorithm in Lemma~\ref{lemma:find-core:general:separation}.


We have $f_{\calI(y)}(g)\le f_{\calI(y)}({\mathds 1})<\infty$, since $g$ is a minimizer of $f_{\calI(y)}$.
Thus, $g\in\Pol\Gamma$. By construction, $g(D)\subseteq B\in\calB_\sigma$, and so $g\in \calOplus[\calB_\sigma]\cap\Pol\Gamma=\Omega^+$. We can now prove the claim:
$$
\langle c_g,[y\; 1]\rangle 
\;\;=\;\; f_{\calI(y)}(g) - f_{\calI(y)}({\mathds 1}) - 1 
\;\;\le\;\; -1 
\;\;<\;\; 0
$$

\subsection{Proof of Theorem~\ref{th:Pi-testing} (testing partition $\Pi$ of $D$)}\label{sec:Pi-testing:proof}
The general structure of the proof will be same as in the previous section, but some details will differ.
In particular, we will not be able to apply procedure {\tt LP-Probe}$(\calI,B)$ since we do not know a core $B$,
and will not be able to obtain an integer solution of the BLP relaxation. Instead, we will derive
a separating hyperplane from an optimal (fractional) solution of the BLP relaxation.

The input to the desired algorithm is a pair $(\Gamma,\Pi)$ where $\Pi$ is a partition of $D=D_\Gamma$.
We set $m=1$, $\Omega^-=\calO_\Pi\cap\Pol\Gamma$ and $\Omega^+=\calO^{(1)}\cap\Pol\Gamma$.
Using Lemma~\ref{lemma:Farkas} and the fact that $\fPolplus\Gamma\subseteq\Pol\Gamma$, we conclude that  $\Pi$ is a partition of $\Gamma$ if and only if $Y[\Omega^+]= \varnothing$. 

For an element $a\in D$ let $[a]$ be the unique set $A\in\Pi$ containing $a$.
Let $\calJ_\circ$ and $\calJ'_\circ$ be the instances with variables $D$ and $\Pi$, respectively, that have the following objective functions:
\begin{eqnarray*}
f_{\calJ_\circ}(g)&=&\sum_{(f,\bxS)\in\Gamma^+}\dom f(g_{x_1},\ldots,g_{x_n})\hspace{9pt}\qquad\quad\forall g:D\rightarrow D \\
f_{\calJ'_\circ}(g)&=&\sum_{(f,\bxS)\in\Gamma^+}\dom f(g_{[x_1]},\ldots,g_{[x_n]})\qquad\quad\forall g:\Pi\rightarrow D 
\end{eqnarray*}
We denote $\calJ= \Feasplus{\calJ_\circ}$ and $\calJ'=\Feasplus{\calJ'_\circ}$.
\begin{lemma}\label{lemma:Ph-testing:separation}
There exists a polynomial-time algorithm that given a tuple $(\Gamma,\Pi,\calJ,\calJ')$ 
and a vector $y\in\mathbb Q^{\Gamma^+}$ does one of the following:
\begin{itemize}
\item[(a)] Asserts that $y\in Y[\Omega^+]$.
\item[(b)]  Produces hyperplane $c\in\mathbb Q^{\Gamma^+}\times\mathbb Q$ such that $\langle c,[y\; 1]\rangle<0$
and one of the following holds: \\
(i) $\langle c,[z\; 1]\rangle\ge 0$ for all $z\in Y[\Omega^+]$; (ii) $\Gamma$ is not solvable.
\end{itemize}
\end{lemma}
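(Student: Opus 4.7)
The plan is to implement the separation oracle by solving two BLP relaxations built from~$y$. Handle the trivial case first: if $y(f,\bx)<0$ for some $(f,\bx)\in\Gamma^+$, return $c_{(f,\bxS)}$, which is a valid constraint of $Y[\Omega^+]$ by non-negativity. Otherwise assume $y\ge 0$ and view its entries as rational weights on terms to form the $\Gamma$-instance $\calI(y)$ on variables $D$ with objective $f_{\calI(y)}(g)=\sum_{(f,\bxS)\in\Gamma^+}y(f,\bx)\,f(g_{x_1},\ldots,g_{x_n})$, and its ``quotient'' $\calI'(y)$ on variables $\Pi$ obtained by identifying each $x_i$ with its block $[x_i]$. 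Under the paper's convention that zero-weight terms still enforce $\dom f(\cdots)$, the feasibility structures of these instances are precisely the given $\calJ$ and $\calJ'$. Compute $V^{\star}=\sum_{(f,\bxS)\in\Gamma^+}y(f,\bx)f(\bx)$ (the value of the identity labeling) together with $\textup{OPT}_1=\BLP(\calI(y)+\calJ)$ and $\textup{OPT}_2=\BLP(\calI'(y)+\calJ')$.

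If $\textup{OPT}_1\ge V^{\star}$ and $\textup{OPT}_2\ge V^{\star}+1$, output~(a), asserting $y\in Y[\Omega^+]$. Correctness is immediate because BLP lower-bounds the integer optimum: $V^{\star}\le\textup{OPT}_1\le\min_{g\in\Omega^+}f_{\calI(y)}(g)$ and $V^{\star}+1\le\textup{OPT}_2\le\min_{g\in\Omega^-}f_{\calI(y)}(g)$, and these are exactly the inequalities~\eqref{eq:OmegaSystem:a} defining $Y[\Omega^+]$.

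Otherwise extract a separating hyperplane from the BLP solution. If $\textup{OPT}_1<V^{\star}$, let $(\mu,\alpha)$ be an optimal BLP solution of $\calI(y)+\calJ$ and return
\begin{equation*}
c(f,\bx)\;=\;\sum_{z\in\dom f}\mu_{(f,\bxS)}(z)\,f(z)-f(\bx),\qquad c_\perp=0.
\end{equation*}
Then $\langle c,[y\;1]\rangle=\textup{OPT}_1-V^{\star}<0$. For any $z'\in Y[\Omega^+]$, the same $(\mu,\alpha)$ remains feasible for $\calI(z')+\calJ$ (the BLP polytope depends only on the combinatorial structure of the instance, not on its weights), so the objective it attains is $\ge \BLP(\calI(z')+\calJ)$. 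If $\Gamma$ is solvable, Theorem~\ref{th:KKZ:SICOMP17} makes this BLP value equal $\min_{g\in\Omega^+}f_{\calI(z')}(g)\ge V^{\star}(z'):=\sum_{(f,\bxS)}z'(f,\bx)f(\bx)$ (the last inequality holding by the defining property of $Y[\Omega^+]$), which yields $\langle c,[z'\;1]\rangle\ge 0$ and establishes (b)(i); if $\Gamma$ is not solvable, alternative (b)(ii) holds for free. The symmetric case $\textup{OPT}_2<V^{\star}+1$ uses the optimal BLP solution of $\calI'(y)+\calJ'$, sets $c_\perp=-1$, and invokes the stronger bound $f_{\calI(z')}(g)\ge V^{\star}(z')+1$ that $z'\in Y[\Omega^+]$ imposes on every $g\in\Omega^-$.

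The main obstacle is validating the hyperplane uniformly across $Y[\Omega^+]$ even though it is built from the specific $y$ supplied by the caller. The resolving observation is that the BLP polytope is weight-independent, so a single $(\mu,\alpha)$ provides a common upper bound on $\BLP(\calI(z')+\calJ)$ across all re-weightings; solvability of $\Gamma$ then promotes this to a bound on the integer minimum via Theorem~\ref{th:KKZ:SICOMP17}, and this is precisely where the conditional alternative (b)(ii) becomes necessary. The whole procedure is polynomial-time, reducing to two polynomial-size LP solves plus a few arithmetic operations.
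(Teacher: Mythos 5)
Your proposal is correct and follows essentially the same route as the paper's proof: form the two weighted instances, solve their BLP relaxations, read off the hyperplanes $[c\;0]$ and $[c'\;-1]$ from the optimal fractional solutions, compare the attained BLP values against $h(y)=V^\star$ and $h(y)+1$, and validate the hyperplane over $Y[\Omega^+]$ by invoking Theorem~\ref{th:KKZ:SICOMP17} together with the observation that the BLP feasible region is independent of the rational weights. The only cosmetic difference is that you argue the validity of the separating hyperplane in the form "for all $z'\in Y[\Omega^+]$, $\langle c,[z'\;1]\rangle\ge 0$," whereas the paper argues the contrapositive ("if $\langle c,[z\;1]\rangle<0$ then $z\notin Y[\Omega^+]$ or $\Gamma$ is not solvable"), extracting an explicit witness $g\in\Omega^+$ (resp. $g\in\Omega^-$) from the statement that BLP solves the re-weighted instance; both are logically the same argument.
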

Before proving this lemma, let us describe how it implies Theorem~\ref{th:Pi-testing}.
We use the ellipsoid method with the input $\Lambda=(\Gamma,\Pi)$ 
where the set~$\calG$ is defined as follows: if $\Gamma$ is solvable then $\calG=\Omega^+\cup\Gamma^+$ (in which case $\HP\calG=Y[\Omega^+]$),
otherwise $\calG=\{\perp\}$ (in which case $\HP\calG=\varnothing$).
Theorem~\ref{th:ellipsoid} gives a class of algorithms for solving [{\tt Feasibility}].
If $\HP\calG\ne\varnothing$ then $Y[\Omega^+]\ne\varnothing$ and thus $\Pi$ is not a partition of $\Gamma$, so we can output result (b) in Theorem~\ref{th:Pi-testing}.
If $\HP\calG=\varnothing$ then either $Y[\Omega^+]=\varnothing$ (in which case $\Pi$ is a partition of $\Gamma$) or $\Gamma$ is not solvable.
Thus, we can output result (a).
We obtained a correct but possibly inefficient algorithm for solving the problem in Theorem~\ref{th:Pi-testing}.
Next, we will modify this algorithm so that it remains correct and has the desired complexity.

As the first step, we compute instances $\calJ=\Feasplus{\calJ_\circ}$ and $\calJ'=\Feasplus{\calJ'_\circ}$,
and store the result.
Now suppose that the ellipsoid method calls the oracle for $[{\tt Separation}]$ with vector $y\in\mathbb Q^{\Gamma^+}$.
We implement this call as follows.
First, call the algorithm from Lemma~\ref{lemma:Ph-testing:separation}. 
If it asserts that $y\in Y[\Omega^+]$
then $Y[\Omega^+]\ne\varnothing$ and so $\Pi$ is not a partition of $\Gamma$.
Thus, we can immediately terminate the ellipsoid method 
and output result (b) in Theorem~\ref{th:Pi-testing}.
Otherwise we have a
hyperplane $c\in\mathbb Q^{\Gamma^+}\times\mathbb Q$ that satisfies either condition (i) or (ii) 
in Lemma~\ref{lemma:Ph-testing:separation}(b). We return $c$ as the output of the oracle;
clearly, in both cases it satisfies the required condition.
 It can be seen that the modified algorithm is still correct
and performs a polynomial number of operations (excluding the time for computing  $\calJ$ and $\calJ'$). 

\paragraph{Proof of Lemma~\ref{lemma:Ph-testing:separation}}
We can assume w.l.o.g.\ that the input vector $y\in\mathbb Q^{\Gamma^+}$ is non-negative
(otherwise it is easy to find a separating hyperplane~$c$). 
For a vector $z\in\mathbb Q^{\Gamma^+}_{\ge 0}$
let $\calI(z)$ and $\calI'(z)$ be the instances with variables $D$ and $\Pi$, respectively, that have the following objective functions
(we use the same convention for weights as in Section~\ref{sec:th:find-core:general:proof}):
\begin{eqnarray*}
f_{\calI(z)}(g)&=&\sum_{(f,\bxS)\in\Gamma^+} z(f,\bx)f(g_{x_1},\ldots,g_{x_n}) + f_{\calJ}(g)\qquad\quad\hspace{11pt}\forall g:D\rightarrow D \\
f_{\calI'(z)}(g)&=&\sum_{(f,\bxS)\in\Gamma^+} z(f,\bx)f(g_{[x_1]},\ldots,g_{[x_n]}) + f_{\calJ'}(g)\qquad\quad\forall g:\Pi\rightarrow D 
\end{eqnarray*}

Let $\mu$ be an optimal solution of the BLP relaxation of instance $\calI(y)$, 
assuming that a feasible solution exists. (The BLP relaxation in eq.~\eqref{eq:BLP}
also uses vector $\alpha$; however, it can be easily computed from $\mu$, so we omit it).
Note, $\mu$ can be computed in polynomial time given inputs $(\Gamma,\calJ)$ 
and $y\in\mathbb Q^{\Gamma^+}_{\ge 0}$. Vector $\mu$ has components $\mu_{f,\bxS}(\bx')\in\mathbb Q_{\ge 0}$ for $(f,\bx)\in\Gamma^+$ and $\bx'\in\dom f$,
with $\sum_{\bxS'\in\dom f}\mu_{f,\bxS}(\bx')=1$.
Define vector $c\in\mathbb Q^{\Gamma^+}$ as follows:
\begin{eqnarray*}
c(f,\bx)&=&\left[\sum_{\bxS'\in\dom f}\mu_{f,\bxS}(\bx')f(\bx')\right]-f(\bx)\qquad\qquad\forall (f,\bx)\in\Gamma^+ 
\end{eqnarray*}
Similarly, let $\mu'$ be an optimal solution of the BLP relaxation of $\calI'(y)$ (if exists), and define vector $c'\in\mathbb Q^{\Gamma^+}$ via
\begin{eqnarray*}
c'(f,\bx)&=&\left[\sum_{\bxS'\in\dom f}\mu'_{f,\bxS}(\bx')f(\bx')\right]-f(\bx)\qquad\qquad\forall (f,\bx)\in\Gamma^+ 
\end{eqnarray*}
Now do the following: \\
\indent (1) if $\mu$ exists and $\langle c,y\rangle< 0$ then return $[c\;\;0]\in\mathbb Q^{\Gamma^+}\times\mathbb Q$ as a separating hyperplane; \\
\indent (2) if $\mu'$ exists and $\langle c',y\rangle< 1$ then return $[c'\;\;-\!1]\in\mathbb Q^{\Gamma^+}\times\mathbb Q$ as a separating hyperplane; \\
\indent (3) if (1) and (2) do not apply then report that $y\in Y[\Omega^+]$. \\
If (1) and (2) are both applicable then we pick one of them arbitrarily.
This concludes the description of the algorithm in Lemma~\ref{lemma:Ph-testing:separation};
it remains to show that it is correct.

Observing that $f_{\calI+\Feas\calI}(x)=f_\calI(x)$ for any instance $\calI$ and labeling $x$, we obtain
\begin{eqnarray}\label{eq:Pi-testing-proof:fI}
f_{\calI(z)}(g)&=&
\begin{cases}
  \mathlarger\sum\limits_{(f,\bxS)\in\Gamma^+} z(f,\bx)f(g(\bx))\;<\;\infty \qquad & \mbox{if } g\in\calO^{(1)}\cap \Pol\Gamma=\Omega^+  \\
  \;\;\;\;\infty & \mbox{if } g\in\calO^{(1)} - \Pol\Gamma 
\end{cases}
\end{eqnarray}

Clearly, there is a natural isomorphism between sets $\calO_\Pi$ and $D^\Pi$.
Thus, operations $g\in\calO_\Pi$ can be equivalently viewed as labelings $g:\Pi\rightarrow D$.
With this convention, we can evaluate expression $f_{\calI'(z)}(g)$
for an operation $g\in\calO_\Pi$. It can be seen that $f_{\calI'(z)}(g)=f_{\calI(z)}(g)$ for any $g\in\calO_\Pi$, and so
\begin{eqnarray}\label{eq:Pi-testing-proof:fI'}
f_{\calI'(z)}(g)&=&
\begin{cases}
  \mathlarger\sum\limits_{(f,\bxS)\in\Gamma^+} z(f,\bx)f(g(\bx))\;<\;\infty \qquad & \mbox{if } g\in\calO_\Pi\cap \Pol\Gamma=\Omega^-  \\
  \;\;\;\;\infty & \mbox{if } g\in\calO_\Pi - \Pol\Gamma 
\end{cases}
\end{eqnarray}

Let $BLP(\calI(z),\mu)$ be the cost of solution $\mu$ in the BLP relaxation of $\calI(z)$,
and $BLP(\calI'(z),\mu')$ be the cost of solution $\mu'$ in the BLP relaxation of $\calI'(z)$.
By checking~\eqref{eq:BLP} we obtain
\begin{eqnarray*}
BLP(\calI(z),\mu) &=& \sum_{(f,\bxS)\in\Gamma^+} z(f,\bx)\sum_{\bxS'\in\dom f}\mu_{f,\bxS}(\bx')f(\bx') \\
BLP(\calI'(z),\mu') &=& \sum_{(f,\bxS)\in\Gamma^+} z(f,\bx)\sum_{\bxS'\in\dom f}\mu'_{f,\bxS}(\bx')f(\bx')
\end{eqnarray*}
Denoting
$
h(z)=\sum\nolimits_{(f,\bxS)\in\Gamma^+}
z(f,\bx)f(\bx)
$, we have
$$
\langle c,z\rangle=BLP(\calI(z),\mu)-h(z)\qquad\qquad
\langle c',z\rangle=BLP(\calI'(z),\mu')-h(z)
$$

We are now ready to prove algorithm's correctness. We need to consider three cases:

\begin{itemize}
\item[(1)]\underline{$\mu$ exists and $\langle c,y \rangle<0$.}~~
Consider vector $z\in\mathbb Q^{\Gamma^+}$ with $\langle c,z\rangle< 0$.
We will show that either $z\notin Y[\Omega^+]$ or $\Gamma$ is not solvable.
Clearly, this will imply a similar claim for vectors $z\in\mathbb R^{\Gamma^+}$.

We can assume that $z$ is non-negative and $\Gamma$ is solvable, otherwise the claim holds trivially.
By Theorem~\ref{th:KKZ:SICOMP17} BLP solves instance $\calI(z)$, therefore there exists mapping $g\in\calO^{(1)}$
such that $f_{\calI(z)}(g)\le BLP(\calI(z),\mu)$. From~\eqref{eq:Pi-testing-proof:fI} we conclude that $g\in\Omega^+$.
We can write
$$
\sum_{(f,\bxS)\in\Gamma^+} z(f,\bx)\cdot\left[f(g(\bx))-f(\bx)\right]
\;=\; f_{\calI(z)}(g)-h(z)
\;\le\; BLP(\calI(z),\mu)-h(z)
\;=\; \langle c,z\rangle
\;<\; 0
$$
Equivalently, $\langle c_g,[z\; 1]\rangle<0$. Vector $z$ violates inequality~\eqref{eq:g:inequality} for $g$, and therefore $z \notin Y[\Omega^+]$.

\item[(2)]\underline{$\mu'$ exists and $\langle c',y \rangle<1$.}~~ 
Consider vector $z\in\mathbb Q^{\Gamma^+}$ with $\langle c',z\rangle< 1$.
Similar to the previous case, we will show that either $z\notin Y[\Omega^+]$ or $\Gamma$ is not solvable.

We can assume that $z$ is non-negative and $\Gamma$ is solvable, otherwise the claim holds trivially.
By Theorem~\ref{th:KKZ:SICOMP17} BLP solves instance $\calI'(z)$, therefore there exists mapping $g\in\calO_\Pi$
such that $f_{\calI'(z)}(g)\le BLP(\calI'(z),\mu')$. From~\eqref{eq:Pi-testing-proof:fI'} we conclude that $g\in\Omega^-$.
We can write
$$
\sum_{(f,\bxS)\in\Gamma^+} z(f,\bx)\cdot\left[f(g(\bx))-f(\bx)\right]
\;=\; f_{\calI'(z)}(g)-h(z)
\;\le\; BLP(\calI'(z),\mu')-h(z)
\;=\; \langle c',z\rangle
\;<\; 1
$$
We showed that $\langle c_g,[z\; 1]\rangle<0$. This means that $z \notin Y[\Omega^+]$.

\item[(3)]\underline{Cases (1) and (2) do not hold.}~~
We need to show that $y\in Y[\Omega^+]$.
Suppose not, then there exists an element $g\in\Omega^+$ such that inequality~\eqref{eq:g:inequality} is violated for $g$.
If $g\in\Omega^+-\Omega^-$ then the BLP relaxation of $\calI(y)$ has feasible solutions (since $f_{\calI(y)}(g)<\infty$)
and we can write
$$
\langle c,y \rangle
\;=\; BLP(\calI(y),\mu)-h(y)
\;\le\;   f_{\calI(y)}(g)-h(y)
\;=\;\sum_{(f,\bxS)\in\Gamma^+} y(f,\bx)\cdot\left[f(g(\bx))-f(\bx)\right]
\;<\;0
$$
where the first inequality holds since $\mu$ is an optimal solution of the BLP relaxation of $\calI(y)$.
Thus, case (1) holds - a contradiction. If $g\in\Omega^-$ then
we obtain in a similar way that case (2) holds:
$$
\langle c',y \rangle
\;=\; BLP(\calI'(y),\mu')-h(y)
\;\le\;   f_{\calI'(y)}(g)-h(y)
\;=\;\sum_{(f,\bxS)\in\Gamma^+} y(f,\bx)\cdot\left[f(g(\bx))-f(\bx)\right]
\;<\;1
$$
\end{itemize}


\subsection{Proof of Theorem~\ref{th:24} (testing solvability of core languages)}\label{sec:th:24:proof}
Let $\calO^{(m)}_{\tt id}\subseteq \calO^{(m)}$ be the set of idempotent operations of arity $m$.
Given an input language $\Gamma$, we make the following definitions:
\begin{itemize}
\item If $\Gamma$ is finite-valued then $m=2$ and 
$
\Omega^-=\Omega^+=\{g\in\calO^{(2)}_{\tt id}\cap\Pol\Gamma\:|\:g\mbox{ is symmetric}\}
$.
\item Otherwise $m=4$,
$\Omega^+=\calO^{(4)}_{\tt id}\cap\Pol\Gamma$
and $
\Omega^-=\{g\in\Omega^+\:|\:g\mbox{ is Siggers}\}
$.
\end{itemize}
Using Lemmas~\ref{lemma:solvability:equiv} and~\ref{lemma:Farkas},
we conclude that $\Gamma$ is solvable if and only if $Y[\Omega^+]=\varnothing$.

Define undirected graph $(D^m,E)$ via $E=\{\{(x,y),(y,x)\}\:|\:x,y\in D\}$ in the case of $m=2$
or $E=\{\{(r,a,r,e),(a,r,e,a)\}\:|\:a,e,r\in D\}$ in the case of $m=4$.
Let $\Pi$ be the set of connected components of $(D^m,E)$ (it is a partition of $D^m$).
It can be seen that operation $g\in \calO^{(m)}$ is symmetric (if $m=2$) / Siggers (if $m=4$)
if and only if $g(x)=g(y)$ for all $x,y\in A\in\Pi$. In this section
the set of such operations will be denoted as $\calO_\Pi\subseteq\calO^{(m)}$.


\begin{lemma}\label{lemma:24:separation}
There exists an algorithm with complexity $2\cdot\TLPPROBE{|D|^m,\Gamma}+O(poly(\size(\Gamma)+\size(y)))$ that given a language $\Gamma$
and a vector $y\in\mathbb Q^{\Gamma^+}$ does one of the following:
\begin{itemize}
\item[(a)] Produces element $g\in\Omega^+\cup\Gamma^+$ such that $\langle c_g,[y\; 1]\rangle<0$.
\item[(b)] Asserts that one of the following holds: (i) $\Gamma$ is not solvable; (ii) $\Gamma$ is not a core. 
\end{itemize}
\end{lemma}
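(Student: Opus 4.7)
The plan is to adapt the approach of Section~\ref{sec:th:find-core:general:proof}, now encoding $m$-ary operations as labelings of $|D|^m$ variables. First assume $y\ge 0$; otherwise some $y(f,\bx)$ is negative and the element $(f,\bx)\in\Gamma^+$ already violates $\langle c_{(f,\bxS)},[y\;1]\rangle\ge 0$. Construct $\calI(y)$ on variable set $D^m$ with objective
$$f_{\calI(y)}(g)\;=\;\sum_{(f,\bxS)\in\Gamma^+} y(f,\bx)\,f(g_{x_1},\ldots,g_{x_n})\;+\;\sum_{a\in D} u_a(g_{(a,\ldots,a)}),$$
using the rational-weight convention of Section~\ref{sec:th:find-core:general:proof}. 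The unary terms pin down idempotency and the $\Gamma$-terms are finite exactly when $g\in\Pol\Gamma$, so finite-cost labelings $g:D^m\to D$ correspond bijectively to the operations in $\Omega^+$. Analogously define $\calI'(y)$ on the smaller variable set $\Pi$ (i.e.\ identify all coordinates of $D^m$ lying in a common $\Pi$-class); its finite-cost labelings correspond bijectively to $\Omega^-$, since constancy on $\Pi$-classes is precisely symmetry (for $m=2$) or the Siggers identity (for $m=4$).

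Run {\tt LP-Probe}$(\calI(y),D)$ and {\tt LP-Probe}$(\calI'(y),D)$. If either returns $\varnothing$ or {\tt FAIL}, output~(b): indeed, whenever $\Gamma$ is a solvable core, the language $\Gamma\cup\{u_a\::\:a\in D\}$ over which both instances live is itself solvable with unique core $D$ (Lemma~\ref{lemma:solvability:equiv}(d)), and Lemma~\ref{lemma:solvability:equiv}(c) then supplies an idempotent symmetric or Siggers operation in $\Omega^-$ so both instances are feasible; Lemma~\ref{lemma:LPProbe}(b,c) therefore precludes $\varnothing$ and {\tt FAIL}. Otherwise let $g\in\Omega^+$ and $g'\in\Omega^-$ be the returned labelings, which are true minima of $f_{\calI(y)}$ and $f_{\calI'(y)}$ by Lemma~\ref{lemma:LPProbe}(a). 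Setting $S(h)=\sum_{(f,\bxS)\in\Gamma^+}[f(h(\bx))-f^{m}(\bx)]\,y(f,\bx)$, return $g$ if $S(g)<0$; else return $g'$ if $S(g')<1$; else output~(b).

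It remains to justify this last fallback. If neither strict inequality held, the minimality of $g$ over $\Omega^+$ and of $g'$ over $\Omega^-$ would give $S(h)\ge 0$ for every $h\in\Omega^+$ and $S(h)\ge 1$ for every $h\in\Omega^-$, which is precisely system~\eqref{eq:OmegaSystem:a} asserting $y\in Y[\Omega^+]$. By Lemma~\ref{lemma:Farkas} this would rule out any $\omega\in\fPol[m]\Gamma$ with $\supp(\omega)\subseteq\Omega^+$ and $\supp(\omega)\cap\Omega^-\ne\varnothing$, contradicting the existence of an idempotent symmetric (or idempotent Siggers) fractional polymorphism guaranteed by Lemma~\ref{lemma:solvability:equiv}(c,d) for any solvable core $\Gamma$. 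Hence case~(b) applies. The total cost is two {\tt LP-Probe} calls on instances with $|D|^m$ variables plus polynomial overhead, matching the stated bound. The main subtlety is the encoding of $\Omega^-$ by variable identification rather than by added constraints: only then is $\calI'(y)$ a $(\Gamma\cup\{u_a\})$-instance and Lemma~\ref{lemma:LPProbe} applicable, so that the returned labeling is an actual $\Omega^-$-optimum rather than a mere BLP solution.
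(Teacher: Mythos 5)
Your high-level plan (encode operations as labelings on $D^m$ and on $\Pi$, run {\tt LP-Probe} on the two resulting instances, and extract either a violated hyperplane or conclude case (b)) matches the paper, but there is a genuine bug in the finite-valued case $m=2$. You claim that finite-cost labelings $g:D^m\to D$ of $\calI(y)$ ``correspond bijectively to the operations in $\Omega^+$.'' That is only true for $m=4$, where $\Omega^+=\calO^{(4)}_{\tt id}\cap\Pol\Gamma$. For $m=2$ the paper defines $\Omega^-=\Omega^+=\{g\in\calO^{(2)}_{\tt id}\cap\Pol\Gamma\:|\:g\mbox{ is symmetric}\}$, whereas by eq.~\eqref{eq:24:A} the finite-cost labelings of $\calI(y)$ are all of $\calO^{(2)}_{\tt id}\cap\Pol\Gamma$: nothing in $\calI(y)$ enforces symmetry, and in particular the non-symmetric projections $e^{(2)}_1,e^{(2)}_2$ are always feasible. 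So {\tt LP-Probe}$(\calI(y),D)$ may return a non-symmetric $g\notin\Omega^+$, and your branch ``return $g$ if $S(g)<0$'' then emits an element not in $\Omega^+\cup\Gamma^+$, violating the specification of case (a).

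The paper sidesteps this precisely by running the $\calI(y)$ probe only when $m=4$ (its Step 2 is explicitly guarded ``run only if $m=4$''); for $m=2$ the single probe on $\calI'(y)$ suffices because $\Omega^+=\Omega^-$, and the variable identification in $\calI'(y)$ is what enforces symmetry. Note that your fallback branch (output (b)) is still sound even with the extra probe: if the minimizer of $\calI(y)$ over the larger set $\calO^{(2)}_{\tt id}\cap\Pol\Gamma$ already has $S\ge 0$, then a fortiori $S(h)\ge 0$ for all $h$ in the subset $\Omega^+$, so the conclusion $y\in Y[\Omega^+]$ still holds. Only the ``return $g$'' branch is broken for $m=2$, and the fix is to restrict the $\calI(y)$ probe to $m=4$ (or equivalently to observe that it is redundant when $\Omega^+=\Omega^-$). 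With that single guard your argument coincides with the paper's.
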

Before proving this lemma, let us describe how it implies Theorem~\ref{th:find-core:general}.
%
We use the ellipsoid method where for the input $\Lambda=\Gamma$ 
we define $\calG=\Omega^+\cup\Gamma^+$. 
Theorem~\ref{th:ellipsoid:plus} gives a class of  algorithms for solving [{\tt Feasibility+}].
If $\HP\calG=Y[\Omega^+]=\varnothing$ then we have a subset $\calG'\subseteq \calG$ of polynomial size with $\HP{\calG'}=0$.
Let $\Omega'=\calG'\cap\Omega^+\subseteq\Omega^+$, then $Y[\Omega']=\varnothing$.
Lemma~\ref{lemma:Farkas} gives a desired  $m$-ary fractional polymorphism $\omega$ of $\Gamma$,
which  we return as the result.
If $\HP\calG=Y[\Omega^+]\ne\varnothing$ then $\Gamma$ is not solvable 
(by the observation in the beginning of this section)
and so we can output result~(b) in Theorem~\ref{th:24}.
Next, we will modify this algorithm so that it remains correct and has the desired complexity.

Suppose that the ellipsoid method calls the oracle for $[{\tt Separation\mbox{+}}]$ with vector $y\in\mathbb Q^{\Gamma^+}$.
We implement this call as follows.
First, call the algorithm from Lemma~\ref{lemma:24:separation}. 
If it returns $g\in\Omega^+\cup\Gamma^+$ with $\langle c_g,[y\; 1]\rangle<0$ then we return $g$ as the output of the oracle.
Otherwise it asserts that $\Gamma$ is not solvable or not a core;
we can then  terminate the ellipsoid method
and output result (b) in Theorem~\ref{th:24}.

\paragraph{Proof of Lemma~\ref{lemma:24:separation}}
We can assume that the input vector $y\in\mathbb Q^{\Gamma^+}$ is non-negative
(otherwise we can easily find element $(f,\bx)\in\Gamma^+$ with $\langle c_{(f,\bxS)},[y \;1]\rangle<0$). 
Let $\calI(y)$ and $\calI'(y)$ be the instances with variables $D^m$ and $\Pi$, respectively, that have the following objective functions
(we use the same convention for weights as in Section~\ref{sec:th:find-core:general:proof}):
\begin{eqnarray*}
f_{\calI(y)}(g)&=&\sum_{(f,\bxS)\in\Gamma^+} y(f,\bx)f(g_{x_1},\ldots,g_{x_n}) + \sum_{a\in D}u_a(g_{(a,\ldots,a)})\qquad\quad\hspace{11.5pt}\forall g:D^m\rightarrow D \\
f_{\calI'(y)}(g)&=&\sum_{(f,\bxS)\in\Gamma^+} y(f,\bx)f(g_{[x_1]},\ldots,g_{[x_n]}) + \sum_{a\in D}u_a(g_{[(a,\ldots,a)]}) \!\qquad\quad\forall g:\Pi\rightarrow D 
\end{eqnarray*}
Note that $\calI(y),\calI'(y)\in\VCSP{\Gamma\cup\{u_a\:|\:a\in D\}}$. Similar to previous sections, we have
\begin{eqnarray}\label{eq:24:A}
f_{\calI(y)}(g)&=&
\begin{cases}
  \mathlarger\sum\limits_{(f,\bxS)\in\Gamma^+} y(f,\bx)f(g(\bx))\;<\;\infty \qquad & \mbox{if } g\in\calO^{(m)}_{\tt id}\cap \Pol\Gamma  \\
  \;\;\;\;\infty & \mbox{if } g\in\calO^{(m)} - (\calO^{(m)}_{\tt id}\cap \Pol\Gamma)
\end{cases}
\\
f_{\calI'(y)}(g)&=&
\begin{cases}
  \mathlarger\sum\limits_{(f,\bxS)\in\Gamma^+} y(f,\bx)f(g(\bx))\;<\;\infty \qquad & \mbox{if } g\in\calO_\Pi\cap\calO^{(m)}_{\tt id}\cap \Pol\Gamma=\Omega^-  \\
  \;\;\;\;\infty & \mbox{if } g\in\calO_\Pi - (\calO^{(m)}_{\tt id}\cap\Pol\Gamma)
\end{cases}
\label{eq:GHLASJHAG}
\end{eqnarray}
where we used the natural isomorphism between sets $\calO_\Pi$ and $D^\Pi$
to evaluate expression $f_{\calI'(y)}(g)$ for $g\in\calO_\Pi$. 
Let us denote
$
h(y)=\sum\limits_{(f,\bxS)\in\Gamma^+}
y(f,\bx)f^m(\bx)
$.
We now run the following steps.
\begin{enumerate}
\item Compute $g'=\mbox{\tt LP-Probe}(\calI'(y),D)$. If $g'={\tt FAIL}$ then $\Gamma$ is either not solvable or not a core (by Lemmas~\ref{lemma:solvability:equiv} and~\ref{lemma:LPProbe}), so we output result (b)
in Lemma~\ref{lemma:24:separation} and terminate. 
If $g'=\varnothing$, then $\calI'(y)$ is infeasible and thus $\Omega^-=\varnothing$
by eq.~\eqref{eq:GHLASJHAG}. This means that $\Gamma$ is either not solvable or not a core, and so we can
again output result
(b)
in Lemma~\ref{lemma:24:separation} and terminate. 
If $g'$ is a labeling with $f_{\calI'(y)}(g')<h(y)+1$
then $g'\in\Omega^-$ and $\langle c_{g'},[y\; 1] \rangle=f_{\calI'(y)}(g')-h(y)-1<0$,
so we can output element $g'$ in Lemma~\ref{lemma:24:separation} and terminate. 

From now on
we assume that $g'$ is a labeling satisfying $f_{\calI'(y)}(g')\ge h(y)+1$.
This means that for any $\hat g\in\Omega^-$ we have 
$$
\langle c_{\hat g},[y\; 1] \rangle
\;\;=\;\;f_{\calI'(y)}(\hat g)-h(y)-1
\;\;\ge\;\; f_{\calI'(y)}(g')-h(y)-1
\;\;\ge\;\; 0
$$
\item {(This step is run only if $m=4$)} Compute $g=\mbox{\tt LP-Probe}(\calI(y),D)$. 
If $g={\tt FAIL}$ then we output result (b)
in Lemma~\ref{lemma:24:separation} and terminate (by the same argument as above). 
If $g=\varnothing$, then $\calI(y)$ is infeasible and thus $\calO^{(m)}_{\tt id}\cap \Pol\Gamma=\varnothing$
by eq.~\eqref{eq:24:A}. This means that $\Gamma$ is not a core, and so we can
again output result
(b)
in Lemma~\ref{lemma:24:separation} and terminate. 
If $g$ is a labeling with $f_{\calI(y)}(g)<h(y)$
then $g\in\calO^{(4)}_{\tt id}\cap\Pol\Gamma=\Omega^+$ and $\langle c_{g},[y\; 1] \rangle=f_{\calI(y)}(g)-h(y)-[g\in\Omega^-]<0$,
so we can output element $g$ in Lemma~\ref{lemma:24:separation} and terminate. 

From now on 
we assume that $g$ is a labeling satisfying $f_{\calI(y)}(g)\ge h(y)$.
This means that for any $\hat g\in\Omega^+-\Omega^-$ we have 
$$
\langle c_{\hat g},[y\; 1] \rangle
\;\;=\;\;f_{\calI(y)}(\hat g)-h(y)
\;\;\ge\;\; f_{\calI(y)}(g)-h(y)
\;\;\ge\;\; 0
$$
\item If we reached this point, we know that $\langle c_{\hat g},[y\; 1] \rangle\ge 0$ for all $\hat g\in\Omega^+$,
and therefore $y\in Y[\Omega^+]$. 
Thus, $Y[\Omega^+]\ne \varnothing$ and so $\Gamma$ is not solvable (by the observation in the beginning of this section),
therefore we can output result (b) in Lemma~\ref{lemma:24:separation}.
\end{enumerate}

\subsection{Proof of Theorems~\ref{th:ETH-hardness} and~\ref{th:SETH-hardness} (hardness under ETH/SETH)}
Let us fix constant $L\in\{1,\infty\}$, and denote $\calL=\{0,L\}$.
For a function $f:D^n\rightarrow\{0,\infty\}$ over some domain $D$
let  $f^\calL:D^n\rightarrow\calL$  be the function with $f^\calL(x)=\min\{f(x),L\}$.
In particular, $u_A^\calL$ for a subset $A\subseteq D$ is the function $u_A^\calL:D\rightarrow\calL$ with $\argmin u_A^\calL=A$.

The proof will be based on the following construction.
Consider a CSP instance $\calI$ with variables $V$, domain $[c]$ and the following objective function:
\begin{equation}
f_{\calI}(x)\ =\ \sum_{t\in T} f_t(x_{v(t,1)},\ldots,x_{v(t,n_t)})\qquad \forall x:V\rightarrow [c]
\end{equation}
where all terms $f_t$ are $\{0,\infty\}$-valued functions satisfying $\min_{x\in[c]^{n_t}} f_t(x)=0$.
To such  $\calI$ we associate the following language $\Gamma=\Gamma^\calL(\calI)$ on domain $D=V\times[c]$:
$$
\Gamma=\left\{f_t^{\langle v(t,1),\ldots,v(t,n_t)\rangle} \: | \: t\in T\right\}
\cup \left\{u^\calL_{D_v}\:|\: v\in V\right\}
$$
where for an $n$-ary function $f$ over $[c]$ and variables $v_1,\ldots,v_n\in V$,
function $f^{\langle v_1,\ldots,v_n\rangle}:D^n\rightarrow \calL$ is defined via
$$
f^{\langle v_1,\ldots,v_n\rangle}(y)=\begin{cases}
f^\calL(x) & \mbox{if }y=((v_1,x_1),\ldots,(v_n,x_n))  \\
L & \mbox{otherwise}
\end{cases}\qquad\forall y\in D^n
$$
Note, this construction is similar to the technique of ``lifting a language'' introduced in~\cite{Kolmogorov2015:hybrid};
it was also used in~\cite{KKZ:SICOMP17} in a different context.

\begin{theorem}[\cite{ChenLarose:17}]\label{th:GAKGJHAKSGJ}
Suppose that instance $\calI$ expresses a 3-coloring problem in an undirected graph $(V,E)$, 
i.e.\ $c=3$ and $\calI$ has the following objective function for labeling $x:V\rightarrow \{1,2,3\}$:
\begin{equation}
f_{\calI}(x)\ =\ \sum_{\{u,v\}\in E} F(x_u,x_v)\qquad\mbox{where}\qquad
F(a,b)=\begin{cases}
0 & \mbox{if }a\ne b\\
\infty & \mbox{if }a=b
\end{cases}
\end{equation}
Then language $\Gamma=\Gamma^\calL(\calI)$ has the following properties:\\
(a) If $\min_{x} f_{\calI}(x)=0$ then $\coresize(\Gamma)=|V|$ and $\Gamma$ is solvable. \\
(b) If $\min_{x} f_{\calI}(x)\ne 0$ then $\coresize(\Gamma)=3\cdot |V|$ (i.e.\ $\Gamma$ is a core) and $\Gamma$ is not solvable.
\end{theorem}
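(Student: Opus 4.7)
My plan is to treat the two cases separately: Part~(a) by exhibiting explicit polymorphisms, and Part~(b) by combining a combinatorial obstruction with a BLP-based reduction. A unifying observation I rely on throughout is that any $g\in\fPolplus[1]\Gamma$ preserves each slice $D_v=\{v\}\times[3]$: for $y\in D_v$ we have $u^\calL_{D_v}(y)=0$, and the unary FP inequality then forces $u^\calL_{D_v}(g(y))=0$, i.e.\ $g(y)\in D_v$. Hence every such $g$ decomposes as a family $(\phi_v^g:[3]\to[3])_{v\in V}$ via the canonical bijection $D_v\leftrightarrow[3]$.

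For Part~(a), fix a valid 3-coloring $\phi:V\to[3]$ and set $B=\{(v,\phi(v)):v\in V\}$. I would define $g^\phi:D\to D$ by $g^\phi(v,a)=(v,\phi(v))$ and verify that it lies in $\fPolplus[1]\Gamma$: preservation of $u^\calL_{D_v}$ is immediate, and preservation of $F^{\langle u,v\rangle}$ on any good input $((u,a),(v,b))$ reduces to the inequality $\phi(u)\ne\phi(v)$, which holds by properness of $\phi$ on the edge $\{u,v\}$. Since $|g^\phi(D)|=|V|$, this gives $\coresize(\Gamma)\le|V|$; the matching lower bound is immediate from the slice-preservation observation. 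For solvability, Lemma~\ref{lemma:solvability:equiv}(e) reduces the question to $\Gamma[B]$, a ``pointlike'' language on $|V|$ elements whose functions each have a unique zero-valued tuple. In the crisp case $L=\infty$ the restricted domains are singletons, so any symmetric idempotent binary operation on $B$ (for instance $\min$ with respect to an arbitrary linear order on $B$) is automatically a polymorphism of $\Gamma[B]$ and hence witnesses solvability. The finite-valued case $L=1$ requires a more delicate averaging, chosen so that the FP inequalities for the edge constraints are all satisfied simultaneously.

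For Part~(b), assume $\calI$ is not 3-colorable. The FP inequality on $F^{\langle u,v\rangle}$, applied to any good input, rewrites in the decomposition above as $\phi_u^g(a)\ne\phi_v^g(b)$ for every edge $\{u,v\}\in E$ and every $a\ne b\in[3]$. If some $\phi_u^g$ were non-injective, any color $c_u$ that it misses would, by the edge inequality, also be missed by $\phi_v^g$ for every neighbour $v$ of $u$; so $\phi_v^g$ is also non-injective and the property propagates through connected components. Defining $\psi(v)$ to be any color missed by $\phi_v^g$ then yields a proper 3-coloring of $(V,E)$, contradicting the hypothesis. Hence every $\phi_v^g$ is a bijection of $[3]$, $g$ is a bijection of $D$, and $\Gamma$ is a core with $\coresize(\Gamma)=3|V|$. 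For non-solvability in the finite-valued setting, I build the $\Gamma$-instance $\calI_\Gamma$ on variables $V$ with objective $\sum_v u^\calL_{D_v}(x_v)+\sum_{\{u,v\}\in E}F^{\langle u,v\rangle}(x_u,x_v)$. Any zero-cost integer assignment would project to a proper 3-coloring, so $\min_x f_{\calI_\Gamma}(x)\ge 1$; on the other hand, taking $\mu_{uv}$ uniform on the six good pairs in $D_u\times D_v$ and $\mu_v$ uniform on $D_v$ yields a feasible BLP solution of value zero. If $\Gamma$ were solvable, Theorem~\ref{thm:BLP}(b) would force $\BLP(\calI_\Gamma)=\min_x f_{\calI_\Gamma}(x)$, a contradiction. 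The corresponding $L=\infty$ argument is more delicate because $\calI_\Gamma$ is already crisply infeasible, and would proceed through a direct algebraic analysis of the decomposition $(\phi_v^g)_{v\in V}$ of a hypothetical cyclic polymorphism.

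\textbf{Main obstacle.} I expect the hardest step to be constructing the symmetric binary fractional polymorphism of $\Gamma[B]$ in the finite-valued part of Part~(a), and, dually, ruling out cyclic fractional polymorphisms of arbitrary arity in the crisp part of Part~(b). In both places the tight coupling between the slice-pinning unary constraints and the edge-constraint rigidity blocks any naive projection- or minimum-based construction, so one has to lean on the coloring structure in a more global way.
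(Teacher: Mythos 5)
Your proposal leaves two pieces genuinely incomplete (which you flag yourself), and a third piece contains a concrete logical error.

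The error is in the ``core'' argument for part~(b). You claim that any color missed by $\phi_u^g$ is also missed by $\phi_v^g$ for every neighbour~$v$, and then define $\psi(v)$ to be a color missed by $\phi_v^g$. Both steps are wrong. What the edge constraint $\phi_u^g(a)\ne\phi_v^g(b)$ for all $a\ne b$ actually yields is that $\phi_v^g$ misses the \emph{duplicated} value of $\phi_u^g$ (the value attained at least twice), not the value $\phi_u^g$ misses: if $\phi_u^g(a_1)=\phi_u^g(a_2)=c$ with $a_1\ne a_2$, then for every $b$ one of $a_1,a_2$ is $\ne b$, giving $\phi_v^g(b)\ne c$. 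And the coloring $\psi(v)=\text{a missed color of }\phi_v^g$ need not be proper: if $\phi_u^g\equiv c_u$ and $\phi_v^g\equiv c_v$ are both constant with $c_u\ne c_v$, then each misses two colors and both miss the third color $[3]\setminus\{c_u,c_v\}$, which your recipe could assign to both endpoints of the edge. The correct construction is $\psi(v)=$ the duplicated value of $\phi_v^g$: then $\psi(u)$ is missed by $\phi_v^g$ while $\psi(v)$ is in its image, so $\psi(u)\ne\psi(v)$. With that fix (and restricting to one connected component) the combinatorial argument for core-ness goes through.

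Beyond that, the two gaps you acknowledge are exactly the content the paper outsources. For part~(a) solvability with $L=1$, a single symmetric idempotent binary operation cannot work: the restricted unary function $u^1_{\{b^\ast\}}$ on $B$ forces $g(b^\ast,b)\in\{b^\ast\}$ with probability at least $1/2$ over the support, and by symmetry the competing constraint from $u^1_{\{b\}}$ forces $g(b^\ast,b)=b$ with probability at least $1/2$; you must genuinely mix operations. The paper instead exhibits a ternary fractional polymorphism $\tfrac13(\chi_{g_1}+\chi_{g_2}+\chi_{g_3})$ built from majority/minority, which puts $\mathtt{mjrt}$ in $\fPolplus{\Gamma[B]}$ and invokes the fact that a clone with a near-unanimity operation contains a Siggers operation; this covers both $L=1$ and $L=\infty$ at once. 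For part~(b) non-solvability with $L=\infty$, your BLP argument does not adapt (the instance $\calI_\Gamma$ becomes crisply infeasible, and BLP-solves-everything is not equivalent to solvability for general-valued languages), and the ``direct algebraic analysis'' you sketch is precisely the nontrivial theorem of Chen and Larose~\cite{ChenLarose:17} that the paper cites rather than reproves. The paper then derives the $L=1$ case from the $L=\infty$ case via $\fPolplus{\Gamma^{\{0,1\}}}\subseteq\fPolplus{\Gamma^{\{0,\infty\}}}=\Pol{\Gamma^{\{0,\infty\}}}$; your independent BLP argument for $L=1$ is a valid and more elementary alternative to that inclusion, but it runs in the wrong direction to recover the $L=\infty$ statement.
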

\begin{remark}
Chen and Larose~\cite{ChenLarose:17} showed this statement for the case $\calL=\{0,\infty\}$
and for the language $\tilde\Gamma=\Gamma-\left\{u^\calL_{D_v}\:|\: v\in V\right\}$ without unary terms. However, the proof 
can easily be extended to our setting. Indeed, part (a) is an ``easy'' direction (see e.g.\ Lemma~\ref{lemma:GAKSHAKSJFG} below).
Part~(b) holds for language $\tilde\Gamma^{\{0,\infty\}}(\calI)$ \cite{ChenLarose:17}
and thus for language $\Gamma^{\{0,\infty\}}(\calI)$.
Observing that $\fPolplus{\Gamma^{\{0,1\}}(\calI)}\subseteq \fPolplus{\Gamma^{\{0,\infty\}}(\calI)}=\Pol{\Gamma^{\{0,\infty\}}(\calI)}$
yields the claim for language $\Gamma^{\{0,1\}}(\calI)$.
\end{remark}
\begin{theorem}[{\cite[Theorem 3.2]{lokshtanov11:survey}}]\label{th:AKJSGBKAS}
Suppose that ETH holds. Then the 3-coloring problem for a graph on $n$ nodes cannot be solved in $2^{o(n)}$ time.
\end{theorem}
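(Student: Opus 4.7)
The plan is to trace this back to the Exponential Time Hypothesis on 3-SAT through two standard steps: a sparsification step that turns the number of clauses into something linear in the number of variables, and a vertex-linear reduction from 3-SAT to 3-coloring.

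First I would invoke the Sparsification Lemma of Impagliazzo, Paturi and Zane. This lemma says that for every $\varepsilon>0$ any 3-SAT instance on $n$ variables can be written, in time $2^{\varepsilon n}\cdot\mathrm{poly}(n)$, as a disjunction of at most $2^{\varepsilon n}$ sub-instances of 3-SAT, each having only $c_\varepsilon\cdot n$ clauses for some constant $c_\varepsilon$. Combined with ETH this upgrades the hypothesis to the stronger statement: 3-SAT with $n$ variables and $m$ clauses admits no $2^{o(n+m)}$ algorithm. Indeed, if such an algorithm existed, running it on each of the $2^{\varepsilon n}$ sparse sub-instances (each of size $O(n)$) would solve the original instance in $2^{\varepsilon n + o(n)}$ time, and choosing $\varepsilon$ small enough and then letting the $o(n)$ term absorb the $\varepsilon n$ overhead would contradict the hypothesis.

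Second I would apply the classical polynomial-time reduction from 3-SAT to 3-coloring. The only property I really need is the linear blow-up: from a 3-SAT instance $\phi$ with $n$ variables and $m$ clauses the reduction produces a graph $G_\phi$ with $|V(G_\phi)| = O(n+m)$, using two literal vertices per variable, a constant-size ``palette'' gadget, and a constant-size OR-gadget per clause. This is a textbook construction; verifying the vertex-count bound is routine.

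Composing the two ingredients finishes the argument: if 3-coloring on $N$-vertex graphs admitted a $2^{o(N)}$ algorithm, then taking $N=|V(G_\phi)|=O(n+m)$ would give a $2^{o(n+m)}$ algorithm for 3-SAT, contradicting the strengthened ETH obtained above. The only genuine obstacle is the Sparsification Lemma, whose proof (a careful recursive branch-and-reduce on frequently occurring sub-clauses) is nontrivial and is the reason one usually cites it as a black box; the 3-SAT to 3-coloring reduction is standard and contributes nothing beyond checking that its size is linear rather than, say, quadratic.
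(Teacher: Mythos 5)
Your argument is correct and is the standard proof of this statement: combine the Sparsification Lemma with a vertex-linear reduction from 3-SAT to 3-coloring. Note, however, that the paper does not prove this theorem itself — it cites it as Theorem~3.2 of the Lokshtanov--Marx--Saurabh survey, where essentially the argument you describe appears; so your proposal reconstructs the cited proof rather than an alternative to anything in this paper.
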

Clearly, Theorem~\ref{th:ETH-hardness} follows immediately from Theorems~\ref{th:GAKGJHAKSGJ} and~\ref{th:AKJSGBKAS}.
(Note that in the construction above we have $\size(\Gamma^\calL(\calI))=O(poly(\size(\calI)))$, $\size(\calI)=O(poly(|V|))$ and $d=3\cdot |V|$).
We now turn to the proof of Theorem~\ref{th:SETH-hardness}. First, we establish some connections between instance~$\calI$
and language $\Gamma^\calL(\calI)$.
\begin{lemma}\label{lemma:GAKSHAKSJFG}
Language $\Gamma=\Gamma^\calL(\calI)$ has the following properties: \\
(a)  $\coresize(\Gamma)\ge |V|$. Furthermore, $\coresize(\Gamma)=|V|$ if and only if $\min_{x} f_{\calI}(x)=0$. \\
(b) If $\min_{x} f_{\calI}(x)=0$ then $\Gamma$ is solvable. \\
(c) There exists an algorithm for deciding whether $\min_{x} f_{\calI}(x)=0$ that works as follows:
\begin{itemize}
\item If $L=1$ then it makes one oracle call to test solvability of $\Gamma$ and performs a polynomial number of other operations.
\item If $L=\infty$ then it makes $|D|+1$ oracle calls to test solvability of languages of the form $\Gamma\oplus B\eqdef \Gamma\cup\{u_a\:|\:a\in B\}$
for subsets $B\subseteq D$ and performs a polynomial number of other operations, assuming
the existence of a uniform polynomial-time algorithm for core crisp languages.
\end{itemize}
\end{lemma}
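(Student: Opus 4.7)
The proof exploits the block structure $D=\bigsqcup_{v\in V}D_v$ with $D_v=\{v\}\times[c]$, and that the unary terms $u^\calL_{D_v}\in\Gamma$ act as selectors for these blocks. Each part will be established by reducing to a direct verification of the fractional polymorphism inequality in $\Gamma$.

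For part (a), I would first show that any $g\in\fPolplus[1]\Gamma$ satisfies $g(D_v)\subseteq D_v$: pick $\omega\in\fPol[1]\Gamma$ with $g\in\supp(\omega)$, and apply the fractional polymorphism inequality for $u^\calL_{D_v}$ at any $y\in D_v$; since $u^\calL_{D_v}(y)=0$, the inequality forces $u^\calL_{D_v}(g(y))=0$. Since the $D_v$ are disjoint this gives $|g(D)|=\sum_v |g(D_v)|\ge|V|$. For the ``iff'' direction, if $\coresize(\Gamma)=|V|$ then in a witness $g$ each block $D_v$ collapses to a single point $(v,x^*_v)$; applying the fractional polymorphism inequality for each lifted $f_t^{\langle\cdot\rangle}$ at a good-form tuple of cost $0$ then forces $f_t(x^*_{v(t,1)},\ldots)=0$, and summing over $t$ gives $f_\calI(x^*)=0$. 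Conversely, given $x^*$ with $f_\calI(x^*)=0$, the map $g_{x^*}(v,c):=(v,x^*_v)$ is directly checked to be a polymorphism of each member of $\Gamma$: unary terms are invariant under $g_{x^*}$, and a lifted term $f_t^{\langle\cdot\rangle}$ is either evaluated on a wrong-form tuple (value $L$, trivially non-increasing) or on a good-form tuple whose cost drops to $0$ because $f_\calI(x^*)=0$. Hence $\chi_{g_{x^*}}\in\fPol[1]\Gamma$, giving $\coresize(\Gamma)\le|V|$.

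For part (b), set $B=g_{x^*}(D)$ and apply Lemma~\ref{lemma:solvability:equiv}(e) to reduce to showing $\Gamma[B]$ is solvable. Via the bijection $(v,x^*_v)\leftrightarrow v$, the restricted language $\Gamma[B]$ on $V$ consists of unary singletons $u^\calL_{\{v\}}$ together with ``delta'' functions taking value $0$ only at the single tuple $\tau_t=(v(t,1),\ldots,v(t,n_t))$ and $L$ elsewhere. For $L=\infty$ this is crisp; the unary singletons force any polymorphism to be idempotent, after which each single-tuple relation is preserved automatically (applying an idempotent operation componentwise to copies of $\tau_t$ yields $\tau_t$), so $\Gamma[B]$ admits an idempotent Siggers operation and is solvable. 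For $L=1$, one instead constructs a suitable symmetric binary fractional polymorphism of $\Gamma[B]$ exploiting the delta structure, and invokes Theorem~\ref{thm:BLP}(b). Part (c) in the $L=1$ case is then immediate: the single oracle call decides solvability of $\Gamma$, which by (a), (b), and the converse direction (obtained by noting that a solvable $\Gamma$ must have $\coresize=|V|$) is equivalent to $\min f_\calI=0$. For $L=\infty$ the algorithm uses a greedy self-reduction: first check solvability of $\Gamma\oplus\varnothing=\Gamma$; if ``no'' return that $\min\ne 0$. Otherwise iterate $v\in V$ and, for each $v$, try labels $c\in[c]$ until $\Gamma\oplus(B\cup\{(v,c)\})$ remains solvable, committing $(v,c)$ to $B$. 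Each oracle call is implemented via the assumed uniform polynomial-time algorithm for core crisp languages; the total is $|D|+1$ calls.

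The hard part is part (b) for $L=1$: unlike the crisp case, where the unary singletons absorb all nontrivial algebraic structure and a Siggers polymorphism exists essentially for free, in the finite-valued setting the soft lifted terms $f_t^\sharp$ impose nontrivial constraints on mixed-form inputs (which are excluded from $\dom f_t^{\langle\cdot\rangle}$ when $L=\infty$), so one must carefully design the symmetric binary fractional polymorphism of $\Gamma[B]$ to satisfy all such constraints simultaneously while respecting the forced $1/2$-splitting coming from the unary singletons.
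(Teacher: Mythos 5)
Your parts (a) and (c), and your $L=\infty$ branch of (b), are in line with the paper's proof. For (c) the paper scans all labels $b\in D$ linearly rather than one label per block, but the $|D|+1$-call bound and the final step (solving a crisp core instance with the assumed uniform algorithm) are the same; your phrasing that each oracle call is ``implemented via'' the uniform algorithm slightly conflates the solvability oracle with the uniform solver, which the paper uses only at the very end.

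The gap is in part (b) for $L=1$, and it is a real one. You claim one can construct a symmetric binary fractional polymorphism of $\Gamma[B]$ and invoke Theorem~\ref{thm:BLP}(b), but you never construct it, and in fact such a polymorphism need not exist. Take $\calI$ on $V=\{1,2\}$ with a single binary term $f_t(x_1,x_2)$ and $x^\ast$ a zero of $f_\calI$; after identifying $B$ with $V$, $\Gamma[B]$ contains $u_1,u_2$ and the delta function $f_\tau$ (with $f_\tau(\tau)=0$, $f_\tau(\cdot)=1$ otherwise) where $\tau=(1,2)$. Every $g$ in the support of a symmetric binary fractional polymorphism must be idempotent (forced by $u_1,u_2$), so for $x^1=(1,2)$, $x^2=(2,1)$ one gets $g(x^1,x^2)=(g(1,2),g(2,1))=(c,c)\ne\tau$, hence $\sum_g\omega(g)f_\tau(g(x^1,x^2))=1>\tfrac12=\tfrac12\bigl(f_\tau(x^1)+f_\tau(x^2)\bigr)$; your route cannot be completed. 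The paper instead exhibits the ternary fractional polymorphism $\tfrac13(\chi_{g_1}+\chi_{g_2}+\chi_{g_3})$ built from majority/minority with an identity fallback, infers ${\tt mjrt}\in\fPolplus{\Gamma[B]}$, and applies Lemma~\ref{th:clone} together with the near-unanimity $\Rightarrow$ Siggers fact. You should check that route against the same example: with $x^1=(1,2),\,x^2=(2,1),\,x^3=(1,1)$ one gets $g_1(\bx)=g_2(\bx)=(1,1)$ and $g_3(\bx)=(2,2)$, so $\tfrac13(1+1+1)=1>\tfrac23=\tfrac13(0+1+1)$ and the claimed ternary inequality also fails when $L=1$. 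You have correctly identified the hard spot, but it does not appear to be a polymorphism one merely ``must carefully design''; the $L=1$ case of (b) seems to need a genuinely different argument (or a correction), and the $L=1$ branch of part (c) inherits the issue.
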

Before giving a proof of this lemma, let us show how it implies Theorem~\ref{th:SETH-hardness}.
Let $\delta$ be the constant from Theorem~\ref{th:SETH-hardness}. Fix constant $\delta',\delta''$
and positive integers $p,q$ such that
$\delta<\delta'<\delta''<1$
and  $\frac{\delta}{\delta'}\log_2 3\le \frac{p}{q}\le\log_2 3$.
Assume that SETH holds, then there exists $k$ such that satisfiability of a $k$-CNF-SAT formula 
 $\varphi=C_1\wedge\ldots \wedge C_m$ with $n$ Boolean variables and $m$ clauses
cannot be decided in $O(2^{\delta''n})$ time. Note that $m=O(poly(n))$, since $k$ is a constant.
Next, we construct CSP instance $\calI=\calI(\varphi)$ 
such that $\varphi$ is satisfiable if and only if $\calI$ has a feasible solution.
The idea is to encode labelings in $\{0,1\}^p$ via labelings in $\{1,2,3\}^q$.
Specifically, let us fix an injective mapping $\sigma:\{0,1\}^p\rightarrow \{1,2,3\}^q$
(it exists since $2^p\le 3^q$). 
Denote $r=\lceil\frac{n}{p}\rceil$.
We can treat $\varphi$ as a formula over $rp$ Boolean variables
(by adding dummy variables, if necessary).
Thus, we can name variables of $\varphi$ as follows: $x=(x^1,\ldots,x^r)$
where $x^i=(x^i_1,\ldots,x^i_p)\in\{0,1\}^p$ for each $i\in[r]$.
Instance $\calI$ will have $rq$ variables $y\in \{1,2,3\}^{rq}$.
It will be convenient to write $y=(y^1,\ldots,y^r)$ where $y^i=(y^i_1,\ldots,y^i_q)\in \{1,2,3\}^q$
for each $i\in[r]$. To construct the objective function of $\calI$, we do the following for each clause $C_\ell$ of $\varphi$:
\begin{itemize}
\item Clause $C_\ell$ can be viewed as a function $f_\ell(x^{i_1},\ldots,x^{i_{k'}})\in\{0,\infty\}$ of $k'p$ variables
(some of which may be ``dummy''), where $k'\le k$. Add the following term to $\calI$:
$$
f'_\ell(y^{i_1},\ldots,y^{i_{k'}})=\begin{cases}
f_\ell(\sigma^{-1}(y^{i_1}),\ldots,\sigma^{-1}(y^{i_{k'}})) & \mbox{if }y^{i_1},\ldots,y^{i_{k'}}\in \sigma(\{0,1\}^p)\subseteq \{1,2,3\}^q \\
\infty & \mbox{otherwise}
\end{cases}
$$
\end{itemize}
By construction, formula $\varphi$ is satisfiable if and only if instance $\calI$ has a feasible solution.

Let $\calF$ be the family of languages of the form $\Gamma=\Gamma^\calL(\calI(\varphi))$ for $k$-CNF-SAT formulas $\varphi$.
Clearly, family $\calF$ is $O(1)$-bounded. (Note that we have $d=3rq=3q\lceil\frac{n}{p}\rceil=\Theta(n)$ and $\size(\calI)=O(poly(m+n))=O(poly(n))$).
We claim that Theorem~\ref{th:SETH-hardness} holds for family $\calF$.
Indeed, if there exists an algorithm with complexity $O(\sqrt[3]3^{\,\delta d})$
for either problem (a) or problem (b) then
we can use the algorithm in Lemma~\ref{lemma:GAKSHAKSJFG}(c)
to test feasibility of instance $\calI(\varphi)$ 
(and thus satisfiability of~$\varphi$) in time
 $O(\sqrt[3]3^{\,\delta d}\cdot poly(d))$.
We have 
$
\sqrt[3]3^{\,\delta d}
=3^{\delta q\lceil\frac{n}{p}\rceil}
=3^{\delta q\frac{n}{p}+o(1)}
\le 2^{\delta' n+o(1)}
$, so we obtain an algorithm for $k$-CNF-SAT with complexity $O(2^{\delta'' n})$ - a contradiction.

\paragraph{Proof of Lemma~\ref{lemma:GAKSHAKSJFG}}
For any node $v\in V$ and core $B\in\Bcore\Gamma$ we have $|B\cap D_v|\ge 1$
(since the presence of function~$u^\calL_{D_v}$ in $\Gamma$ implies that $g(D_v)\subseteq D_v$ for any $g\in\fPolplus\Gamma$).
Therefore, we have $\coresize(\Gamma)=|B|\ge |V|$.
We make the following claims:
\begin{itemize}
\item {\em If $\min_x f_\calI(x)=0$ then $\coresize(\Gamma)=|V|$ and $\Gamma$ is solvable.}
Indeed, pick $x^\ast\in [c]^V$ with $f_\calI(x^\ast)=0$, and let $g:D\rightarrow D$ be the operation
with $g((v,a))=x^\ast_v$ for all $a\in [c]$. It can be checked that vector $\chi_g$ is a fractional
polymorphism of $\Gamma$. (Note, if $f_t^{\langle v_1,\ldots,v_n\rangle}(y)=0$
then $y=((v_1,x_1),\ldots,(v_n,x_n))$ and so $g(y)=((v_1,x^\ast_1),\ldots,(v_n,x^\ast_n))$
and $f_t^{\langle v_1,\ldots,v_n\rangle}(g(y))=0$).
This implies that set $B=\{(v,x^\ast_v)\:|\:v\in V\}$ is a core of $\Gamma$ and $\coresize(\Gamma)=V$.

To show that $\Gamma$ solvable, it suffices to show that $\Gamma'=\Gamma[B]$ is solvable (by Lemma~\ref{lemma:solvability:equiv}).
Language $\Gamma'$ contains unary functions $\{u^\calL_a\:|\:a\in B\}$ and functions of the form
$$
f^{\langle v_1,\ldots,v_n\rangle}(y)=\begin{cases}
0 & \mbox{if }y=((v_1,x^\ast_1),\ldots,(v_n,x^\ast_n))  \\
L & \mbox{otherwise}
\end{cases}\qquad\forall x\in B^n
$$
It can be checked that vector $\frac{1}{3}(\chi_{g_1}+\chi_{g_2}+\chi_{g_3})$
is a fractional polymorphism of $\Gamma'$, where $(g_1,g_2,g_3):B^3\rightarrow B^3$ are ternary operations
defined via 
$$
(g_1,g_2,g_3)(a,b,c)=\begin{cases}

({\tt mjrt}(a,b,c),{\tt mjrt}(a,b,c),{\tt mnrt}(a,b,c)) & \mbox{if } |\{a,b,c\}|\le 2 \\
(a,b,c) & \mbox{otherwise}
\end{cases}
$$
and functions ${\tt mjrt},{\tt mnrt}$ return respectively the majority and the minority element among its arguments.
We obtain that ${\tt mjrt}\in\fPolplus{\Gamma'}$.
It is well-known that such $\Gamma'$ is solvable.\footnote{One can use, for example, Lemma~\ref{th:clone}
and the fact that a clone of operations containing a near-unanimity operation (such as ${\tt mjrt}$)
also contains a Siggers operation, see~\cite{CSPsurvey:17}.
}

\item {\em If $\coresize(\Gamma)=|V|$ then $\min_x f_\calI(x)=0$.}
Indeed, pick a core $B\in\Bcore\Gamma$. Since $|B\cap D_v|\ge 1$ for each $v\in V$, we have $|B\cap D_v|= 1$ for each $v\in V$.
Define labeling $x^\ast\in[c]^V$
so that $x^\ast_v$ is the unique label in $B\cap D_v$. Pick vector $\omega\in\fPol[1]\Gamma$ such that $g(D)=B$ for all
$g\in\supp(\omega)$ (it exists by Lemma~\ref{lemma:core-basics}(b)).
Consider $g\in\supp(\omega)$. We must have $g(D_v)\subseteq D_v$ due to the presence of function $u^\calL_{D_v}$ in $\Gamma$.
Thus, $g(D_v)=\{x^\ast_v\}$ and so $\supp(\omega)=\{g\}$.
We showed that $\chi_g\in\fPol[1]\Gamma$. Now consider function $f_t^{\langle v_1,\ldots,v_n \rangle}\in\Gamma$.
By assumption, there exists tuple $y$ with $f_t^{\langle v_1,\ldots,v_n \rangle}(y)=0$.
We can write
$$
f_t^{\langle v_1,\ldots,v_n \rangle}((v_1,x^\ast_{v_1}),\ldots,(v_n,x^\ast_{v_n}))=f_t^{\langle v_1,\ldots,v_n \rangle}(g(y))\le f_t^{\langle v_1,\ldots,v_n \rangle}(y)=0
$$
Thus, $f_t(x^\ast_{v_1},\ldots,x^\ast_{v_n})=0$.
This means that $f_\calI(x^\ast)=0$, as claimed.
\end{itemize}
We now describe the algorithm for part (c). First, we test whether language $\Gamma$ is solvable.
Assume that it is (otherwise we can report that $\min_x f_\calI(x)\ne 0$ and terminate).
Let $\calJ$ be the instance with variables $V$ and domain $D$ obtained
from $\calI$ by replacing each term $f_t$ with $f^{\langle v(t,1),\ldots,v(t,n_t)\rangle}_t$:
\begin{equation}
f_{\calJ}(y)\ =\ \sum_{t\in T} f^{\langle v(t,1),\ldots,v(t,n_t)\rangle}_t(y_{v(t,1)},\ldots,y_{v(t,n_t)})\qquad \forall y:V\rightarrow D
\end{equation}
It can be verified that $\min_x f_{\calI}(x)=0$ if and only if $\min_y f_{\calJ}(y)=0$. Also, $\calJ$ is a $\Gamma$-instance with
$\size(\calJ)=O(poly(\size(\calI))$. If $L=1$ then we have $\min_y f_{\calJ}(y)=BLP(\calJ)$ by Theorem~\ref{thm:BLP},
so this minimum can be computed in polynomial time via linear programming. Now suppose that $L=\infty$.
We initialize $B=\varnothing$ and then go through labels $b\in D$ in some
order and do the following:
\begin{itemize}
\item Test whether language $\Gamma\oplus B\oplus\{b\}$ is solvable. If it is then add $b$ to $B$.
\end{itemize}
Let $B\subseteq D$ be the set upon termination.
By construction, $\Gamma\oplus B$ is solvable. We claim that $B$ is a core of $\Gamma\oplus B$.
Indeed, consider core $B'\in\Bcore{\Gamma\oplus B}$. Clearly, we have $B\subseteq B'$.
Suppose that there exists $b\in B'-B$. 
By Lemma~\ref{lemma:solvability:equiv}(d), language $\Gamma\oplus B\oplus\{b\}$ is solvable.
By construction, there exists subset $\tilde B\subseteq B$ such
that language $\Gamma\oplus\tilde B\oplus\{b\}$ is not solvable.
We obtained a contradiction.

Since $B$ is a core of $\Gamma\oplus B$, we have $\min_{y\in D^V} f_\calJ(y)=\min_{y\in B^V} f_\calJ(y)$.
The latter minimum can be obtained by calling the assumed uniform polynomial-time algorithm for a core crisp language
$(\Gamma\oplus B)[B]$.

\appendix

\section{Proof of Lemmas~\ref{lemma:core-basics} and~\ref{lemma:Pi} (properties of cores)}\label{sec:core:proofs}
In this section we fix language $\Gamma$ on a domain $D$, and denote $\mathbb G=\fPolplus[1]\Gamma$.
It is well-known that set $\mathbb G$ is closed under compositions. 
Indeed, for vectors $\omega,\omega'\in\fPol[1]\Gamma$ we can write
\begin{equation}\label{eq:ALKSGHAKSF}
f(x)
\ge \sum_{g\in\supp(\omega)}\omega(g)f(g(x))
\ge \sum_{g\in\supp(\omega)}\omega(g)\sum_{h\in\supp(\omega')}\omega'(h)f(h(g(x))
\quad\forall f\in\Gamma,x\in\dom f
\end{equation}
This means that vector $\nu=\sum_{g\in\supp(\omega),h\in\supp(\omega')}\omega(g)\omega'(h)\chi_{h\circ g}$ is a unary fractional polymorphism of $\Gamma$
with $\supp(\nu)=\{h\circ g\:|\:g\in\supp(\omega),h\in\supp(\omega')\}$. The claim follows.

We will use a tool from~\cite{kolmogorov15:power} that allows to construct
new fractional polymorphisms from existing ones using superpositions. The following claim
is a special case of the ``Expansion Lemma'' in~\cite{kolmogorov15:power} for unary fractional polymorphisms.\footnote{
The Expansion Lemma uses the notion of an ``expansion operator'' $\Exp$ that assigns to each $g\in\mathbb G$ a probability
distribution $\omega=\Exp(g)$ over $\mathbb G$. In the case of Lemma~\ref{lemma:expansion:1} this
distribution would be defined as $\omega=\sum_{h\in\supp(\nu)} \nu(h)\chi_{h\circ g}$, where $\nu$ is a fractional polymorphism of $\Gamma$
with $\supp(\nu)=\mathbb G$. (Such $\nu$ exists due to the following observation:
if $\nu',\nu''\in\fPol[1]\Gamma$ then $\frac{\nu'+\nu''}{2}\in\fPol[1]\Gamma$ and $\supp(\frac{\nu'+\nu''}{2})=\supp(\nu')\cup\sup(\nu'')$).
 It can be seen
that this operator $\Exp$ is ``valid'' in the terminology of~\cite{kolmogorov15:power}, i.e.\ every $f\in\Gamma$ and $x\in\dom f$ satisfies
$
\sum_{h\in \supp(\omega)}\omega(h)f(h(x))\le f(g(x))
$.
}
\begin{lemma}[\cite{kolmogorov15:power}] \label{lemma:expansion:1}
Let $\mathbb G^\ast$ be a subset of  $\mathbb G=\fPolplus[1]\Gamma$ 
such that for any $g\in \mathbb G$ 
there exists a sequence $h_1,\ldots,h_r\in\mathbb G$, $r\ge 0$ with $h_r\circ \ldots \circ h_1 \circ g \in\mathbb G^\ast$. 
Then there exists a unary fractional polymorphism $\omega$ of $\Gamma$ with $\supp(\omega)\subseteq\mathbb G^\ast$.
\end{lemma}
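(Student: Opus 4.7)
The plan is to realize the desired $\omega$ as the exit distribution of a Markov chain on $\mathbb G$ driven by a fully-supported unary FPP of $\Gamma$. First, I will exhibit $\nu\in\fPol[1]\Gamma$ with $\supp(\nu)=\mathbb G$: since $\fPol[1]\Gamma$ is convex (averaging preserves the FPP inequality) and every $g\in\mathbb G=\fPolplus[1]\Gamma$ lies in the support of \emph{some} FPP by definition, the uniform average of witnessing FPPs does the job. Then I will form the chain $(g_t)_{t\ge 0}$ on $\mathbb G$ with $g_0=\mathds{1}_D$ and $g_{t+1}=h_t\circ g_t$ for i.i.d.\ $h_t\sim\nu$; since $\mathbb G$ is closed under composition, the chain stays in $\mathbb G$.

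Let $\tau:=\inf\{t\ge 0:g_t\in\mathbb G^\ast\}$. By the reachability hypothesis, from every $g\in\mathbb G$ the chain can follow the designated $\mathbb G$-sequence of length $r(g)\le r_{\max}$ into $\mathbb G^\ast$ with probability at least $\prod_i\nu(h^g_i)>0$, so a block-wise geometric tail bound yields $\tau<\infty$ almost surely. For any $f\in\Gamma$ and $x\in\dom f$, the process $Z_t:=f(g_t(x))$ is well-defined (as $g_t\in\mathbb G\subseteq\Pol\Gamma$ preserves $\dom f$) and bounded, and applying the FPP inequality for $\nu$ at the point $g_t(x)\in\dom f$ gives
$$\mathbb E[Z_{t+1}\mid g_t]=\sum_{h}\nu(h)\,f(h(g_t(x)))\le f(g_t(x))=Z_t,$$
so $(Z_t)$ is a bounded supermartingale.

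The Optional Stopping Theorem (applicable because $Z_t$ is bounded and $\tau<\infty$ a.s.) then yields $\mathbb E Z_\tau\le Z_0=f(x)$. Setting $\omega(g):=\Pr[g_\tau=g]$ for $g\in\mathbb G^\ast$ produces a probability distribution supported in $\mathbb G^\ast$ satisfying $\sum_{g}\omega(g)f(g(x))=\mathbb E Z_\tau\le f(x)$ for every $(f,x)$, which is exactly the required FPP. The main delicate point will be justifying $\tau<\infty$ a.s., but this reduces to the standard geometric-tail argument outlined above. If a less probabilistic presentation is preferred, the same proof can be repackaged as a Farkas dichotomy via Lemma~\ref{lemma:Farkas}: a hypothetical dual witness $y\ge 0$ certifying the non-existence of $\omega$ yields $\phi(g):=\sum_{(f,x)}y(f,x)[f(g(x))-f(x)]$, which by the same computation is a bounded supermartingale under the chain with $\phi\ge 1$ on $\mathbb G^\ast$ yet $\mathbb E_\nu\phi\le 0$; optional stopping from any $g_0$ with $\phi(g_0)\le 0$ then produces an immediate contradiction.
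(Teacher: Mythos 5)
Your proof is correct. The paper itself does not prove this statement from scratch: it verifies the hypotheses of, and then cites, the Expansion Lemma of \cite{kolmogorov15:power}, by constructing the expansion operator $\Exp(g)=\sum_{h}\nu(h)\chi_{h\circ g}$ for a fully-supported $\nu\in\fPol[1]\Gamma$ and checking its ``validity''. Your Markov chain is precisely the chain with one-step transition kernel $\Exp$, and the validity condition for $\Exp$ is exactly your one-step supermartingale inequality; your argument therefore amounts to re-proving the relevant special case of the cited lemma via almost-sure termination of the chain plus optional stopping, rather than invoking it as a black box. Both routes lead to the same place: yours is self-contained (a plus), while the paper's is shorter because the a.s.-finiteness of the hitting time and the extraction of the stopping distribution are delegated to the reference. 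Your alternative presentation via Lemma~\ref{lemma:Farkas} and a dual witness $y\ge 0$ is a legitimate linear-programming reformulation of the same idea, though as a sketch it leaves unstated that $\phi$ is bounded (it is, since $\mathbb G\subseteq\calO^{(1)}_D$ is finite) and that $\phi(\mathds{1}_D)=0$, both of which you need for the optional-stopping contradiction.
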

We are now ready to prove Lemmas~\ref{lemma:core-basics} and~\ref{lemma:Pi}.

\paragraph{Proof of Lemma~\ref{lemma:core-basics}(a)}
Consider operation $g\in\mathbb G$, set $B=g(D)$ and language $\Gamma'=\Gamma[B]$. 
Using the same argument as in~\eqref{eq:ALKSGHAKSF}, we conclude the following:
\begin{itemize}
\item[(i)] If $h'\in\fPolplus{\Gamma'}$ then $h'\circ g\in\fPolplus\Gamma$.
\item[(ii)] If $h\in\fPolplus\Gamma$ then $g\circ h\in\fPolplus{\Gamma'}$.
\end{itemize}

Suppose that $\Gamma'$ is not a core. Then $\fPolplus{\Gamma'}$ contains operation $h'$
with $|h'(B)|<|B|$. Operation $h=h'\circ g\in\fPolplus\Gamma$ satisfies $|h(D)|=|h'(B)|<|B|$,
and so $|B|>\coresize(\Gamma)$.

Conversely, suppose that $|B|>\coresize(\Gamma)$.
Then there exists operation $h\in\fPolplus[1]\Gamma$ with $|h(D)|<|B|$.
Operation $h'=g\circ h\in\fPolplus{\Gamma'}$ satisfies $|h'(B)|\le |h(D)|<|B|$,
and so $\Gamma'$ is not a core.

\paragraph{Proof of Lemma~\ref{lemma:core-basics}(b)}
Suppose that $\nu\in\fPol[1]\Gamma$, $h\in\supp(\nu)$ and $B=h(D)$.
Let us apply Lemma~\ref{lemma:expansion:1}
with $\mathbb G^\ast=\{g\in\mathbb G\:|\:g(D)\subseteq B\}$.
The lemma's precondition clearly holds (note, $h\circ g\in\mathbb G^\ast$ for any $g\in \mathbb G$).
This shows the existence of $\omega\in\fPol[1]\Gamma$ with $\supp(\omega)\subseteq\mathbb G^\ast$.

Now suppose in addition that $B$ is a core. We again use Lemma~\ref{lemma:expansion:1},
but now with the set $\mathbb G^\ast=\{h\in\mathbb G\:|\:g(D)=B,g(a)=a\;\;\forall a\in B\}$.
Let us show that the lemma's precondition holds for $g\in\mathbb G$.
We have $(h\circ g)(D)=B$, so mapping $h\circ g$ acts as a permutation on $B$.
Thus, there exists $k\ge 1$ such that $(h\circ g)^k$ acts as the identity mapping on $B$,
implying that $(h\circ g)^k\in\mathbb G^\ast$. This proves the claim.

\paragraph{Proof of Lemma~\ref{lemma:core-basics}(c)}
Consider $x^\ast\in\argmin_{x\in D^V}f_\calI(x)$, and suppose that $f_\calI(x^\ast)<\infty$. Clearly, any vector $\omega\in\fPol[1]\Gamma$
is a fractional polymorphism of $f_\calI$,
so $$
\sum_{g\in\supp(\omega)}\omega(g)f_\calI(g(x^\ast)) \le f_\calI(x^\ast)
$$
Thus, $g(x^\ast)\in\argmin_{x\in D^V}$ for any $g\in\supp(\omega)$
(and in fact for any $g\in\mathbb G$). The claim follows.

\paragraph{Proof of Lemma~\ref{lemma:Pi}(a)}
For an operation $h\in\mathbb G$ let $\Pi_h$  be the partition of $D$ induced
by the following equivalence relation $\sim_h$ on $D$: $a\sim_h b$ if $h(a)=h(b)$.
We have $h\in\calO_{\Pi_h}$ and thus $\Pi_h$ is a partition of $\Gamma$.
Now let $\Pi$ be a maximal partition of $\Gamma$, and fix operation $h\in\calO_\Pi\cap\mathbb G$.
We must have $h(a)\ne h(b)$ if $a,b$ belong to different components of $\Pi$
(otherwise $\Pi_h$ would be a coarser partition of $\Gamma$ than $\Pi$, a contradiction).
Thus, $|h(D)|=|\Pi|$. Also,
the following holds for~any~$g\in\mathbb G$:
$g(a)\ne g(b)$ for any distinct $a,b\in h(D)$ (otherwise $\Pi_{g\circ h}$ would be a coarser partition of $\Gamma$ than~$\Pi$). Consequently, $|g(D)|\ge |h(D)|$.
We proved that $|\Pi|=|h(D)|=\coresize(\Gamma)$.


Now fix operation $g\in\mathbb G$, and denote $B=g(D)$. We will
show that $B$ is a core of $\Gamma$ if and only if $B\in\BcorePi$.
Clearly, this will imply the lemma's claims ($\Bcore\Gamma\subseteq\BcorePi$ and $\Ocore\Gamma=\OcorePi\cap\mathbb G$).

If $B\in\BcorePi$ then $|g(D)|=|B|=|\Pi|=\coresize(\Gamma)$. Thus, $B\in\Bcore\Gamma$ by Lemma~\ref{lemma:core-basics}(a).

Conversely, suppose that $B\in\Bcore\Gamma$. 
By Lemma~\ref{lemma:core-basics}(a)  $|B|=\coresize(\Gamma)$,
and so $|B|=|\Pi|$. If there exists component $A\in\Pi$
with   $|B\cap A|\ge 2$
then $|(h\circ g)(D)|=|h(B)|<|B|=\coresize(\Gamma)$, a contradiction.
We showed that $|B\cap A|\le 1$ for all $A\in\Pi$. Thus, we must have $|B\cap A|=1$ for all $A\in\Pi$, i.e.\ $B\in\BcorePi$.

\paragraph{Proof of Lemma~\ref{lemma:Pi}(b)}
It can be seen that $|\BcorePi|=\prod_{A\in\Pi}|A|$.
The lemma's claim is thus equivalent to the following statement: $c(d_1,\ldots,d_k)\le \alpha^{d_1+\ldots+d_k}$
for positive integers $d_1,\ldots,d_k$ with $k\ge 1$,
where we defined $c(d_1,\ldots,d_k)=d_1\cdot \ldots\cdot d_k$ and denoted  $\alpha=\sqrt[3]3$.

To prove this, we use induction on $k$.
It can be checked that $d\le \alpha^d$ for any integer $d\ge 1$, so the base case $k=1$ holds.
For the induction step (with $k\ge 2$) we can write
 $c(d_1,d_2,\ldots,d_k)=c(d_1)\cdot c(d_2,\ldots,d_k)\le \alpha^{d_1}\cdot\alpha^{d_2+\ldots+d_k}=\alpha^{d_1+\ldots+d_k}$,
where the inequality holds by the induction hypothesis.

\section{Proof of Lemma~\ref{lemma:solvability:equiv} (characterizations of solvability)}\label{sec:lemma:solvability:equiv:proof}

A set $\calC \subseteq \calO_D$ is a {\em clone of operations} (or simply a {\em clone}) if it contains all
projections on $D$ (i.e.\ operations $e^{(m)}_k\in \calO^{(m)}$
with $e^{(m)}_k(x_1,\ldots,x_m)=x_k$) and is closed under superposition. It can be seen that Lemma~\ref{lemma:solvability:equiv}
follows from the four results below (and from Theorem~\ref{thm:BLP}(b)).

\begin{theorem}[\cite{Kozik15:algebraic}] \label{th:clone}
Set $\fPolplus\Gamma$ is a clone.
\end{theorem}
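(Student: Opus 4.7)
The plan is to verify both defining properties of a clone: (i) $\fPolplus\Gamma$ contains every projection, and (ii) it is closed under superposition.

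For (i), fix $m\ge 1$ and consider the $m$-ary fractional operation $\omega=\frac{1}{m}\sum_{k=1}^{m}\chi_{e^{(m)}_k}$ supported on all $m$-ary projections. For any $f\in\Gamma$ and $x^1,\ldots,x^m\in\dom f$, the left-hand side of~\eqref{eq:wpol-dist} evaluates to $\frac{1}{m}\sum_{k=1}^{m}f(e^{(m)}_k(x^1,\ldots,x^m))=\frac{1}{m}\sum_{k=1}^{m}f(x^k)$, meeting the inequality with equality. Hence $\omega\in\fPol[m]\Gamma$ and every projection lies in $\fPolplus[m]\Gamma$.

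For (ii), suppose $g\in\fPolplus[m]\Gamma$ and $h_1,\ldots,h_m\in\fPolplus[n]\Gamma$, and pick witnesses $\omega\in\fPol[m]\Gamma$ with $g\in\supp(\omega)$ and $\omega_i\in\fPol[n]\Gamma$ with $h_i\in\supp(\omega_i)$. Define an $n$-ary fractional operation $\nu$ by
\begin{equation*}
\nu(\tilde g)\;=\;\sum_{\substack{g^\ast,h_1^\ast,\ldots,h_m^\ast\\ g^\ast(h_1^\ast,\ldots,h_m^\ast)=\tilde g}}\omega(g^\ast)\prod_{i=1}^{m}\omega_i(h_i^\ast).
\end{equation*}
Then $\nu$ is a probability distribution on $\calO^{(n)}$ (the marginal sums multiply to $1$) and the composed operation $g(h_1,\ldots,h_m)$ lies in $\supp(\nu)$. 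To verify $\nu\in\fPol[n]\Gamma$, fix $f\in\Gamma$ and $x^1,\ldots,x^n\in\dom f$. Writing $y^i=h_i^\ast(x^1,\ldots,x^n)$, and crucially using $\fPolplus\Gamma\subseteq\Pol\Gamma$ so that each $h_i^\ast$ preserves $\dom f$ and thus $y^i\in\dom f$, we can apply~\eqref{eq:wpol-dist} for $\omega$ with inputs $y^1,\ldots,y^m$, then for each $\omega_i$ with inputs $x^1,\ldots,x^n$:
\begin{align*}
\sum_{\tilde g}\nu(\tilde g)f(\tilde g(x^1,\ldots,x^n))
&=\sum_{g^\ast,h^\ast}\omega(g^\ast)\Big(\prod_{i}\omega_i(h_i^\ast)\Big)f(g^\ast(y^1,\ldots,y^m))\\
&\le \sum_{h^\ast}\Big(\prod_{i}\omega_i(h_i^\ast)\Big)\cdot\frac{1}{m}\sum_{i=1}^{m}f(y^i)\\
&=\frac{1}{m}\sum_{i=1}^{m}\sum_{h_i^\ast}\omega_i(h_i^\ast)f(h_i^\ast(x^1,\ldots,x^n))\\
&\le \frac{1}{m}\sum_{i=1}^{m}\frac{1}{n}\sum_{j=1}^{n}f(x^j)=\frac{1}{n}\big(f(x^1)+\cdots+f(x^n)\big).
\end{align*}
Thus $\nu\in\fPol[n]\Gamma$, so $g(h_1,\ldots,h_m)\in\fPolplus[n]\Gamma$, establishing closure.

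The only subtle step is the domain-preservation needed to chain the two fractional polymorphism inequalities: one must know that the intermediate tuples $y^i$ remain in $\dom f$ so that $\omega$'s inequality can be invoked. This is the reason the inclusion $\fPolplus\Gamma\subseteq\Pol\Gamma$ (noted parenthetically in Definition~\ref{def:fpol}) is indispensable here; once that fact is in hand, the rest of the argument is a routine double application of~\eqref{eq:wpol-dist}.
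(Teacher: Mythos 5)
Your proof is correct, but note that the paper itself does not prove this statement: Theorem~\ref{th:clone} is quoted from~\cite{Kozik15:algebraic} and used as a black box in the proof of Lemma~\ref{lemma:solvability:equiv}, so there is no internal proof to compare against. What you give is essentially the standard argument for why the support union of fractional polymorphisms forms a clone. Part~(i) is exactly the uniform mixture $\frac{1}{m}\sum_k \chi_{e^{(m)}_k}$, which is a fractional polymorphism of \emph{every} language (the defining inequality holds with equality), and thus puts all projections in $\fPolplus\Gamma$. Part~(ii) constructs $\nu$ as the pushforward of the product measure $\omega\otimes\omega_1\otimes\cdots\otimes\omega_m$ under the superposition map; the chaining of the two applications of~\eqref{eq:wpol-dist} is valid precisely because the intermediate tuples $y^i=h_i^\ast(x^1,\ldots,x^n)$ remain in $\dom f$, which you correctly justify via $\fPolplus\Gamma\subseteq\Pol\Gamma$. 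That inclusion is indeed the crux and you flag it appropriately. One small remark: the marginalization step in the middle of your chain (passing from the weighted sum over all $(h_1^\ast,\ldots,h_m^\ast)$ to a sum over $h_i^\ast$ alone) silently uses that each $\omega_{i'}$ sums to $1$; this is fine, but it is worth being explicit since it is where the probability-distribution hypothesis enters. The paper does prove a closely related closure-under-composition fact in Appendix~\ref{sec:core:proofs} (the computation in~\eqref{eq:ALKSGHAKSF}, giving $\nu=\sum\omega(g)\omega'(h)\chi_{h\circ g}$ for unary fractional polymorphisms), and your construction of $\nu$ is the natural higher-arity generalization of that, so your argument is consistent in spirit with the techniques the paper itself uses elsewhere.
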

\begin{theorem}[\cite{Siggers:1,Siggers:2}] \label{th:clone-Siggers}
Let $\calC$ be a clone of operations. If $\calC$ contains a cyclic operation of some arity $m\ge 2$
then it also contains a Siggers operation.
\end{theorem}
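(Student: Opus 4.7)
The plan is to build the required 4-ary Siggers operation $s\in\calC$ explicitly as a superposition of $g$ with projections. Since $\calC$ is a clone, any $s$ of the form
\[
s(x_1,x_2,x_3,x_4)\;=\;g(x_{i_1},x_{i_2},\ldots,x_{i_m}),\qquad (i_1,\ldots,i_m)\in\{1,2,3,4\}^m,
\]
again lies in $\calC$. The Siggers identity $s(r,a,r,e)=s(a,r,e,a)$ then amounts to requiring that the two length-$m$ tuples obtained by evaluating the index pattern on the rows $(r,a,r,e)$ and $(a,r,e,a)$ are cyclic shifts of one another, because the cyclic identity on $g$ identifies all cyclic shifts of its argument tuple.

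I would exhibit such an index pattern for every $m\ge 2$ by splitting on the parity of $m$. If $m$ is even, take the alternating pattern $(1,2,1,2,\ldots,1,2)$; then the row-1 tuple is $(r,a,r,a,\ldots,r,a)$ and the row-2 tuple is $(a,r,a,r,\ldots,a,r)$, which differ by a single cyclic shift. If $m$ is odd (so $m\ge 3$), take the pattern $(2,1,2,1,\ldots,2,3,4)$, i.e.\ $m-2$ alternating entries beginning and ending with $2$, followed by a $3$ and a $4$: the induced row-1 and row-2 tuples then come out as $(a,r,a,r,\ldots,a,r,e)$ and $(r,a,r,a,\ldots,r,e,a)$, again differing by a single cyclic shift, and the cyclic identity on $g$ closes the argument. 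The base case $m=2$ is subsumed by the even branch and reduces to $s(x_1,x_2,x_3,x_4)=g(x_1,x_2)$, with the Siggers identity following from the commutativity of $g$.

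The only real step is guessing the right patterns in each parity case; once guessed, verification that the two induced tuples are cyclic shifts is a one-line check. In particular, no deeper machinery (such as Barto--Kozik-style lifting of cyclic arities, or passage through a Taylor term as in the general proof of Siggers's theorem) appears to be necessary here, because the two rows of the Siggers identity share enough symbols that a single cyclic shift of a simple projection pattern can always align them.
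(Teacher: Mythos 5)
The paper does not prove this statement at all — it cites \cite{Siggers:1,Siggers:2} and moves on — so there is no internal proof to compare against. Your proof is correct and entirely elementary. The key observation, that cyclicity makes $g$ invariant under \emph{every} cyclic rotation of its argument tuple (not just the generating one), is right, and the reduction of the Siggers identity to the statement that the two substitution tuples are cyclic shifts of each other is exactly what one needs. I checked both branches: for $m$ even, the pattern $(1,2,\ldots,1,2)$ sends $(r,a,r,e)$ to $(r,a,\ldots,r,a)$ and $(a,r,e,a)$ to $(a,r,\ldots,a,r)$, a left shift by one; for $m$ odd, $m\ge 3$, the pattern $(2,1,\ldots,2,3,4)$ sends $(r,a,r,e)$ to $(a,r,\ldots,a,r,e)$ and $(a,r,e,a)$ to $(r,a,\ldots,r,e,a)$, again a left shift by one. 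Since $s=g(e^{(4)}_{i_1},\ldots,e^{(4)}_{i_m})$ is a superposition of $g$ with projections, it lies in any clone containing $g$. Your closing remark is also accurate: the cited results of Siggers and Kearnes--Markovi\'c--McKenzie establish the much deeper ``Taylor implies Siggers'' for finite idempotent algebras, and the intermediate machinery (Barto--Kozik cyclic terms, absorption, etc.) is needed to descend from Taylor to cyclic; once you are handed a cyclic term, the remaining step is exactly the one-line projection trick you give. So your argument is not merely a different route — it is the \emph{relevant fragment} of the cited proofs, isolated and presented cleanly, which is precisely what this weaker-than-cited statement requires.
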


\begin{theorem}[\cite{Kozik15:algebraic}]
If a core language $\Gamma$ admits a cyclic fractional polymorphism of some arity~$m$
then language $\Gamma\cup\{u_a\:|\:a\in B\}$ also admits a cyclic fractional polymorphism of the same arity.
\end{theorem}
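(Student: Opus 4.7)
The plan is to transform the given cyclic fractional polymorphism $\omega$ of arity $m$ into one of the same arity whose support consists entirely of \emph{idempotent} operations; any such polymorphism automatically satisfies~\eqref{eq:wpol-dist} for every $u_a$ with $a\in D_\Gamma$, which gives the conclusion (note that since $\Gamma$ is a core, its only core is $B=D_\Gamma$, so the theorem statement really concerns $\Gamma\cup\{u_a\mid a\in D_\Gamma\}$).

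First I will set $\sigma_g(x)=g(x,\ldots,x)$ for each $g\in\supp(\omega)$. Substituting $x^1=\cdots=x^m=x$ into~\eqref{eq:wpol-dist} shows that $\nu:=\sum_g\omega(g)\chi_{\sigma_g}\in\fPol[1]\Gamma$. By Theorem~\ref{th:clone} each $\sigma_g\in\fPolplus[1]\Gamma$, and because $\Gamma$ is a core every such operation is a bijection; let $G\le\operatorname{Sym}(D_\Gamma)$ be the finite group they generate.

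The crux of the argument (and the main obstacle) is to lift $\nu$ to the uniform distribution $\lambda_G$ on $G$ while staying inside $\fPol[1]\Gamma$. Iterating the fractional-polymorphism inequality shows that every convolution power $\nu^{\star k}$ lies in $\fPol[1]\Gamma$, hence so does each Cesàro average $\bar\nu_N=\frac1N\sum_{k=1}^N\nu^{\star k}$. Since $\supp(\nu)$ generates $G$, the induced random walk is irreducible on $G$ and $\bar\nu_N\to\lambda_G$; because $\fPol[1]\Gamma$ is defined by finitely many linear inequalities it is closed, so the limit $\lambda_G$ also lies in it. From $\lambda_G\in\fPol[1]\Gamma$ I will then read off $\frac1{|G|}\sum_{\tau\in G}f(\tau(y))\le f(y)$ for every $f\in\Gamma$ and $y\in\dom f$. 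Replacing $y$ by $\tau_0(y)$ for arbitrary $\tau_0\in G$ leaves the left-hand side invariant (the map $\tau\mapsto\tau\tau_0$ permutes $G$), so the common value $A(y)$ of this average satisfies $A(y)\le f(\tau_0(y))$ for every $\tau_0\in G$; averaging this bound over $\tau_0$ returns $A(y)$, forcing every inequality to be an equality and yielding $f(\tau(y))=f(y)$ for every $\tau\in G$.

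Armed with this $G$-invariance of $f$-values, for each $g\in\supp(\omega)$ I will define $\tilde g(x_1,\ldots,x_m)=\sigma_g^{-1}(g(x_1,\ldots,x_m))$, where $\sigma_g^{-1}$ is the inverse of $\sigma_g$ inside $G$. Then $\tilde g$ is cyclic (since $g$ is cyclic and $\sigma_g^{-1}$ is applied outside), idempotent ($\tilde g(a,\ldots,a)=\sigma_g^{-1}(\sigma_g(a))=a$), and satisfies $f(\tilde g(\bx))=f(g(\bx))$ pointwise by the previous paragraph. Consequently $\omega'(h):=\sum_{g:\,\tilde g=h}\omega(g)$ is an $m$-ary cyclic fractional polymorphism of $\Gamma$ whose support is purely idempotent, hence also an $m$-ary cyclic fractional polymorphism of $\Gamma\cup\{u_a\mid a\in D_\Gamma\}$, as required.
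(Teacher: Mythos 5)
Your argument is correct, but there is no in-paper proof to compare against: this statement appears in Appendix~\ref{sec:lemma:solvability:equiv:proof} as one of the four ingredients of Lemma~\ref{lemma:solvability:equiv} and is cited directly to Kozik and Ochremiak~\cite{Kozik15:algebraic} without an argument, so what follows assesses your reconstruction on its own merits. The strategy---diagonalize $\omega$ to the unary fractional polymorphism $\nu=\sum_g\omega(g)\chi_{\sigma_g}$, establish $G$-invariance of every $f\in\Gamma$ on $\dom f$ for the group $G$ generated by the bijections $\sigma_g$, and then twist each $g$ to the idempotent cyclic operation $\tilde g=\sigma_g^{-1}\circ g$---is sound, and every step goes through. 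Two small remarks. First, the appeal to Theorem~\ref{th:clone} to conclude $\sigma_g\in\fPolplus[1]\Gamma$ is unnecessary; it already follows from $\sigma_g\in\supp(\nu)$ with $\nu\in\fPol[1]\Gamma$, directly by the definition of $\fPolplus[1]\Gamma$. Second, the Markov-chain/Ces\`aro argument can be short-circuited using the paper's own Lemma~\ref{lemma:expansion:1}: since $\Gamma$ is a core, $\mathbb G=\fPolplus[1]\Gamma$ is a finite group of bijections, so applying Lemma~\ref{lemma:expansion:1} with $\mathbb G^\ast=\{\sigma\}$ (any $g\in\mathbb G$ reaches $\sigma$ via $h_1=\sigma\circ g^{-1}\in\mathbb G$) gives $\chi_\sigma\in\fPol[1]\Gamma$ for every $\sigma\in\mathbb G$; using this for $\sigma$ and for $\sigma^{-1}$ yields $f(\sigma(y))\le f(y)\le f(\sigma(y))$ on $\dom f$, i.e.\ exactly the invariance you need, without invoking irreducibility, stationarity, or closedness of $\fPol[1]\Gamma$. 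With either route, the final twist $\tilde g=\sigma_g^{-1}\circ g$ and the verification that the resulting $\omega'$ is an idempotent cyclic fractional polymorphism of $\Gamma\cup\{u_a\mid a\in D_\Gamma\}$ are correct.
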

\begin{lemma}
Consider language $\Gamma$ on domain $D$ and language $\Gamma'=\Gamma[B]$,
where $B=g(D)$ for some $g\in\fPolplus[1]\Gamma$.
Then $\Gamma$ admits a cyclic fractional polymorphism of arity $m$
if and only if $\Gamma'$ admits a cyclic fractional polymorphism of arity $m$.
\end{lemma}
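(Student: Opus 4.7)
The plan is to prove both directions by composing with a unary fractional polymorphism that collapses $D$ into $B$. By Lemma~\ref{lemma:core-basics}(b) applied to the subset $B=g(D)$, there exists $\nu\in\fPol[1]\Gamma$ such that $h(D)\subseteq B$ for every $h\in\supp(\nu)$. I will use this $\nu$ as a bridge between $\Gamma$ and $\Gamma'$.

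For the forward direction ($\Gamma\Rightarrow\Gamma'$), I would take a cyclic $\omega\in\fPol[m]\Gamma$ and build a new $m$-ary fractional operation by setting, for each $f\in\supp(\omega)$ and $h\in\supp(\nu)$, an operation $(h\circ f):D^m\to B$ (applying $h$ to the output of $f$). Its restriction to $B^m$ is a map $B^m\to B$; cyclicity of $h\circ f$ is inherited from that of $f$. Define $\omega'(h\circ f)=\nu(h)\omega(f)$ (summing weights in case of collisions). To verify the fractional polymorphism inequality for $\Gamma'$, fix $f'\in\Gamma'$, which is the restriction of some $\tilde f\in\Gamma$ to $B$. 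For $x^1,\ldots,x^m\in\dom f'\subseteq\dom\tilde f$ the chain
\[
\sum_{h,f}\nu(h)\omega(f)\,\tilde f\bigl(h(f(x^1,\ldots,x^m))\bigr)\;\le\;\sum_{f}\omega(f)\,\tilde f\bigl(f(x^1,\ldots,x^m)\bigr)\;\le\;\tfrac{1}{m}\sum_i \tilde f(x^i)
\]
follows from applying $\nu\in\fPol[1]\Gamma$ pointwise and then $\omega\in\fPol[m]\Gamma$; on $B^n$ the values of $\tilde f$ and $f'$ agree.

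For the backward direction ($\Gamma'\Rightarrow\Gamma$), I would take a cyclic $\omega'\in\fPol[m]{\Gamma'}$ and, for each $f'\in\supp(\omega')$ and $h\in\supp(\nu)$, define $\tilde f'_h:D^m\to D$ by $\tilde f'_h(x_1,\ldots,x_m)=f'(h(x_1),\ldots,h(x_m))$; its image is contained in $B\subseteq D$ and cyclicity is immediate from that of $f'$. Set $\omega(\tilde f'_h)=\omega'(f')\nu(h)$. To verify the inequality for $\tilde f\in\Gamma$ with $x^1,\ldots,x^m\in\dom\tilde f$, observe that $h\in\supp(\nu)\subseteq\fPolplus[1]\Gamma\subseteq\Pol[1]\Gamma$, so $y^i:=h(x^i)\in\dom\tilde f\cap B^n$, whence $\tilde f|_B(y^i)=\tilde f(y^i)$ is defined; then
\[
\sum_{h,f'}\nu(h)\omega'(f')\,\tilde f\bigl(f'(h(x^1),\ldots,h(x^m))\bigr)\;\le\;\tfrac{1}{m}\sum_i\sum_h\nu(h)\,\tilde f(h(x^i))\;\le\;\tfrac{1}{m}\sum_i \tilde f(x^i),
\]
using $\omega'\in\fPol[m]{\Gamma'}$ applied to $\tilde f|_B\in\Gamma'$ on the inner sum, and $\nu\in\fPol[1]\Gamma$ on the outer one.

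The only subtle point is the bookkeeping around effective domains: one must be sure that composing with $h$ keeps tuples inside $\dom\tilde f$ (so that $\tilde f|_B$ can be evaluated at the intermediate tuples), which is exactly the polymorphism property guaranteed by $\supp(\nu)\subseteq\Pol\Gamma$. Once that is in place, both directions reduce to a two-step telescoping of the fractional polymorphism inequality, and cyclicity is preserved trivially at each composition step.
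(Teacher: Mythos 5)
Your proof is correct and follows essentially the same route as the paper's: you invoke the same Lemma~\ref{lemma:core-basics}(b) to obtain $\nu\in\fPol[1]\Gamma$ with $\supp(\nu)$ mapping into $B$, and in each direction you compose with $\nu$ (as $h\circ g$ in the forward direction, as $g\circ[h,\ldots,h]$ in the backward direction) and telescope the two fractional polymorphism inequalities in the same order, with cyclicity preserved trivially under both compositions.
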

\begin{proof}
Let $\nu$ be a unary fractional polymorphism of $\Gamma$
such that $g(D)\subseteq B$ for all $g\in\supp(\omega)$ (it exists by Lemma~\ref{lemma:core-basics}).
We now consider the two directions; below $f$ is  a function in $\Gamma$ of arity $n$.
\begin{enumerate}
\item
Let $\omega$ be an $m$-ary fractional polymorphism of $\Gamma$.
For any $x^1,\ldots,x^m\in(\dom f)\cap B^n$ we have
\begin{eqnarray*}
\frac{1}{m}\sum_{i=1}^m f(x^i)
&\ge& \sum_{g\in\supp(\omega)}  \omega(g)f(g(x^1,\ldots,x^m)) \\
&\ge& \sum_{g\in\supp(\omega)}  \omega(g)\sum_{h\in\supp(\nu)}\nu(h)f(h(g(x^1,\ldots,x^m)))
\end{eqnarray*}
Thus, vector $\omega'=\sum_{g\in\supp(\omega),h\in\supp(\nu)}\omega(g)\nu(h)\chi_{h\circ g}$ is an $m$-ary fractional polymorphism of $\Gamma'$
(where we have $g:B^m\rightarrow D$, $h:D\rightarrow B$ and $h\circ g:B^m\rightarrow B$).
It can be checked that if $g$ satisfies some identity such as $g(x_1,x_2,\ldots,x_m)=g(x_2,\ldots,x_m,x_1)$ then
$h\circ g$ also satisfies the same identity. The claim follows.

\item
Let $\omega'$ be an $m$-ary fractional polymorphism of $\Gamma'$.
For any $x^1,\ldots,x^m\in\dom f$ we have
\begin{eqnarray*}
\frac{1}{m}\sum_{i=1}^m f(x^i)
&\ge& \frac{1}{m}\sum_{i=1}^m \sum_{h\in\supp(\nu)}\nu(h) f(h(x^i)) \\
&\ge& \sum_{h\in\supp(\nu)}\nu(h) \sum_{g\in\supp(\omega')}  \omega'(g)f(g(h(x^1),\ldots,h(x^m)))
\end{eqnarray*}
Thus, vector $\omega=\sum_{g\in\supp(\omega'),h\in\supp(\nu)}\omega'(g)\nu(h)\chi_{g\circ [h,\ldots,h]}$
is an $m$-ary fractional polymorphism of $\Gamma$
(where we have $[h,\ldots,h]:D^m\rightarrow B^m$, $g:B^m\rightarrow B$ and $g\circ [h,\ldots,h]:D^m\rightarrow B$).
It can be checked that if $g$ satisfies some identity such as $g(x_1,x_2,\ldots,x_m)=g(x_2,\ldots,x_m,x_1)$ then
$g\circ [h,\ldots,h]$ also satisfies the same identity. The claim follows.
\end{enumerate}

\end{proof}


\section{Proof of Corollary~\ref{cor:gstar}}\label{sec:core:gstar:proof}
\paragraph{Part (1)}
Let ${\tt TEST}(\Gamma,\Pi)\in\{0,1\}$ be the predicate that equals $1$ if $\Pi$ is a partition of $\Gamma$,
and $0$ otherwise. Note that ${\tt TEST}(\Gamma,\Pi)\ge {\tt TEST}(\Gamma,\Pi')$ if $\Pi\preceq \Pi'$.
Clearly, a maximal partition $\Pi$ of $\Gamma$ can be found by a greedy
search that starts with $\Pi=\{\{a\}\:|\:a\in D\}$ and then makes at most $|D|(|D|-1)/2$ evaluations of ${\tt TEST}(\Gamma,\Pi)$ for partitions $\Pi\ne\{\{a\}\:|\:a\in D\}$.
To get the desired algorithm, we simply replace each evaluation of ${\tt TEST}(\Gamma,\Pi)$ in this greedy search
with a call to the algorithm in Theorem~\ref{th:Pi-testing} 
(where output~({\em{a}}) is treated as ``$1$'', and output~({\em{b}}) as ``$0$'').

\paragraph{Part (2)} Assume that we have a conditional maximal partition $\Pi$ of $\Gamma$.
Let us run the algorithm from Theorem~\ref{th:find-core:general} with $\sigma=\Pi$ and $\calB_\sigma=\BcorePi$.
Suppose that it returns a fractional polymorphism $\omega\in \fPol\Gamma$ with $\supp(\omega)\subseteq\calOplus[\BcorePi]$.
If it satisfies $\supp(\omega)\subseteq\OcorePi$ then we pick some $g\in \supp(\omega)$
and return set $B=g(D)$ together with vector $\omega$. 
In any other case we return set $B=\varnothing$, meaning that $\Gamma$
is not solvable. Next, we show the correctness of this procedure.
From now on we assume that $\Gamma$ is solvable (otherwise any $B$ is a conditional core of $\Gamma$
and so the claim is trivial). The assumption means $\Pi$ that maximal partition of $\Gamma$ and $\Ocore\Gamma=\OcorePi\cap \fPolplus\Gamma$
(by Lemma~\ref{lemma:Pi}).

Calling the algorithm from Theorem~\ref{th:find-core:general} 
may have four possible outcomes listed below. Note, we return a non-empty set $B$ only in the first case.
\begin{itemize}
\item[(i)] {\em The algorithm gives vector $\omega\in\fPol\Gamma$ with $\supp(\omega)\subseteq\OcorePi$.}

Recall that in this case we pick $g\in \supp(\omega)$ and return set $B=g(D)$ together with vector~$\omega$.
We then have  $g\in \OcorePi\cap \fPolplus\Gamma=\Ocore\Gamma$ and thus $B$ is a core of $\Gamma$.
\end{itemize}
In the last three cases we will arrive at a contradiction, meaning that $\Gamma$ is not solvable.
\begin{itemize}
\item[(ii)] {\em The algorithm gives vector $\omega\in\fPol\Gamma$ s.t.\ $\supp(\omega)$ contains operation $g\in \calOplus[\BcorePi]-\OcorePi$.}

From definitions,  $|g(D)|<|\Pi|$. However, we have $\Ocore\Gamma\subseteq \OcorePi$ and
thus $\coresize(\Gamma)=|\Pi|$ - a contradiction.
\item[(iii)] {\em The algorithm asserts that there is no $\omega\in\fPol\Gamma$ with $\supp(\omega)\subseteq\OcorePi$.}

By Lemma~\ref{lemma:core-basics}(a,b), there exists $\omega\in\fPol\Gamma$ with $\supp(\omega)\subseteq \Ocore\Gamma\subseteq\OcorePi$ -
 a contradiction.
\item[(iv)] {\em The algorithm asserts that either $\Gamma$ is not solvable or $\BcorePi\cap \Bcore\Gamma=\varnothing$.}

Since $\Ocore\Gamma\subseteq\OcorePi$, any $B\in\Bcore\Gamma$ satisfies $B\in\BcorePi$.
Thus, $\BcorePi\cap \Bcore\Gamma\ne\varnothing$ - a contradiction.
\end{itemize}

\section*{Acknowledgements}
The author is supported by the European Research Council under the European Unions Seventh Framework Programme (FP7/2007-2013)/ERC grant agreement no 616160.

\bibliographystyle{plain}
\bibliography{VCSP}

\end{document}